\newcounter{claimcounter}
\renewenvironment{claim}{\refstepcounter{claimcounter}{\medskip\noindent \underline{Claim \theclaimcounter:}}\itshape}{\smallskip}
\crefname{claimcounter}{Claim}{Claims}
\crefname{section}{Sec.}{Secs.}
\def\orcidID#1{\smash{\href{http://orcid.org/#1}{\protect\raisebox{-1.25pt}{\protect\includegraphics{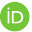}}}}}
\renewcommand{\epsilon}{\varepsilon}
\renewcommand{\phi}{\varphi}
\newcommand{\ol}[1]{\textcolor{blue}{\ifmmode \text{[OL: #1]}\else [OL: #1] \fi}}
\newcommand{\bs}[1]{\textcolor{violet}{\ifmmode \text{[BS: #1]}\else [BS: #1] \fi}}
\newcommand{\vh}[1]{\textcolor{green!60!black}{\ifmmode \text{[VH: #1]}\else [VH: #1] \fi}}
\renewcommand{\ol}[1]{}
	\newcommand{\at}[1]{\todo[inline,color=teal!10,caption={AT}]{\textbf{AT:} #1}}
	\newcommand{\ly}[1]{\todo[inline,color=orange!10,caption={LY}]{\textbf{LY:} #1}}
	\newcommand{\at}[1]{}
	\newcommand{\ly}[1]{}
\newcommand{\calM}[0]{\mathcal{M}}
\newcommand{\C}{\mathcal{C}}
\newcommand{\alphabet}{\Sigma}
\newcommand{\word}[0]{w}
\newcommand{\wordof}[1]{\word_{#1}}
\newcommand{\aut}[1][A]{\mathcal{#1}}
\newcommand{\but}[1][B]{\mathcal{#1}}
\newcommand{\autof}[1]{\aut{}[#1]}
\newcommand{\states}{Q}
\newcommand{\inits}{I}
\newcommand{\trans}{\delta}
\newcommand{\transacc}{\trans_{\acc}}
\newcommand{\transscc}{\trans_{\mathrm{SCC}}}
\newcommand{\colourset}[0]{\Gamma}
\newcommand{\colouring}[0]{\mathsf{p}}
\newcommand{\acccond}[0]{\mathsf{Acc}}
\newcommand{\cut}[0]{\aut[C]}
\newcommand{\autex}[0]{\aut_{\mathit{ex}}}
\newcommand{\butex}[0]{\but_{\mathit{ex}}}
\newcommand{\acc}[0]{F}
\newcommand{\transover}[1]{\overset{#1}{\rightarrow}}
\newcommand{\ltr}[1]{\transover{#1}}
\newcommand{\restrof}[2]{#1 \raisebox{-.5ex}{$|$}_{#2}}
\newcommand{\emersonlei}[0]{\mathbb{EL}}
\newcommand{\emersonleiof}[1]{\emersonlei(#1)}
\newcommand{\myinf}[0]{\mathrm{inf}}
\newcommand{\infoft}[1]{\myinf_{\trans}(#1)}
\newcommand{\infofcol}[1]{\myinf_{\colourset}(#1)}
\newcommand{\Inf}[0]{\mathsf{Inf}}
\newcommand{\Infof}[1]{\Inf(#1)}
\newcommand{\Fin}[0]{\mathsf{Fin}}
\newcommand{\Finof}[1]{\Fin(#1)}
\newcommand{\lang}[0]{\mathcal{L}}
\newcommand{\langof}[1]{\lang(#1)}
\newcommand{\bigO}[0]{\mathcal{O}}
\newcommand{\buchi}{B\"{u}chi\xspace}
\newcommand{\renum}[0]{\lambda}
\newcommand{\renumof}[2]{\renum(#1, #2)}
\newcommand{\typestyle}[1]{\mathtt{#1}}
\newcommand{\onetype}[0]{\typestyle{T}}
\newcommand{\tracktype}[0]{\typestyle{PT}}
\newcommand{\activetype}[0]{\typestyle{AT}}
\newcommand{\colourtype}[0]{\typestyle{Colours}}
\newcommand{\onetypeof}[1]{\onetype^{#1}}
\newcommand{\tracktypeof}[1]{\tracktype^{#1}}
\newcommand{\activetypeof}[1]{\activetype^{#1}}
\newcommand{\colourtypeof}[1]{\colourtype^{#1}}
\newcommand{\GetInit}[0]{\mathtt{Init}}
\newcommand{\GetSucc}[0]{\mathtt{Succ}}
\newcommand{\GetAcc}[0]{\mathtt{Acc}}
\newcommand{\GetSuccA}[0]{\mathtt{SuccAct}}
\newcommand{\GetSuccT}[0]{\mathtt{SuccPass}}
\newcommand{\Lift}[0]{\mathtt{Lift}}
\newcommand{\GetInitOf}[1]{\GetInit^{#1}}
\newcommand{\GetSuccOf}[1]{\GetSucc^{#1}}
\newcommand{\GetAccOf}[1]{\GetAcc^{#1}}
\newcommand{\GetSuccAOf}[1]{\GetSuccA^{#1}}
\newcommand{\GetSuccTOf}[1]{\GetSuccT^{#1}}
\newcommand{\LiftOf}[1]{\Lift^{#1}}
\newcommand{\iwc}[0]{\mathtt{MH}}
\newcommand{\dac}[0]{\mathtt{CSB}}
\newcommand{\idac}[0]{\mathtt{CoB}}
\newcommand{\iwcrr}[0]{\mathtt{MHRR}}
\newcommand{\dacrr}[0]{\mathtt{CSBRR}}
\newcommand{\nacrnk}[0]{\mathtt{RNK}}
\newcommand{\detalg}[0]{\mathtt{DET}}
\newcommand{\alg}[0]{\mathtt{Alg}}
\newcommand{\algrr}[0]{\mathtt{AlgRR}}
\newcommand{\algunion}[0]{\textsc{Union}}
\newcommand{\modelssurrp}{\models_{P}}
\newcommand{\modcompl}{\textsc{ModCompl}}
\newcommand{\modcomplrr}{\textsc{ModComplRR}}
\newcommand{\postponedcompl}{\textsc{PostpCompl}}
\newcommand{\reduce}{\mathtt{Red}}
\newcommand{\fsucc}{\mathtt{fSucc}}
\newcommand{\simul}{\preccurlyeq}
\newcommand{\extension}[0]{\vcenter{\hbox{\tikz[anchor=center,baseline]{
  \fill (0mm,0.0mm) rectangle +(1.3mm,1.3mm);%
}}}}
\newcommand{\qext}[0]{Q^{\extension}}         
\newcommand{\qextof}[1]{\qext_{#1}}  
\newcommand{\rankleq}[0]{\mathrel{\leq}^{\bullet}}
\newcommand{\reach}{\mathit{reach}}
\newcommand{\reachof}[1]{\reach(#1)}
\newcommand{\maxrank}{\mathit{maxrank}}
\newcommand{\algmaxrank}{\textsc{MaxRank}}
\newcommand{\algdeelev}{\textsc{InitDet}}
\newcommand{\dom}{\mathit{dom}}
\newcommand{\domof}[1]{\dom(#1)}
\newcommand{\evenceil}[1]{\lfloor\!\!\lfloor #1 \rfloor\!\!\rfloor}
\newcommand{\iincrof}[1]{\texttt{++}^{#1}}
\newcommand{\deltaext}[2]{\delta^{#1}_{#2}}
\newcommand{\deltaextof}[4]{\deltaext{#1}{#2}(#3, #4)}
\newcommand{\partialto}[0]{\mathrel{\rightharpoonup}}
\newcommand{\rank}[0]{\mathit{rank}}        
\newcommand{\rankof}[1]{\rank(#1)}          
\newcommand{\transconsist}[0]{%
  \begin{tikzpicture}[line width=0.6pt,transform shape,scale=0.7]
    \draw[-stealth] (0,0) -- (1.5em,0);
    \draw (0.6em,0ex) circle (0.7ex);
  \end{tikzpicture}%
}
\newcommand{\transconsistof}[2]{\mathrel{\transconsist^{#1}_{#2}}}
\newcommand{\wait}[0]{\mathit{wait}}         
\newcommand{\tight}[0]{\mathit{tight}}       
\newcommand{\cola}{\textsc{COLA}\xspace}
\newcommand{\kofola}[1][]{\textsc{Kofola}$_{#1}$\xspace}
\newcommand{\kofolaids}{\kofola[S]}
\newcommand{\kofolaidsc}{\kofola[P]}
\newcommand{\ranker}{\textsc{Ranker}\xspace}
\newcommand{\seminator}{\textsc{Seminator}\xspace}
\newcommand{\spot}{\textsc{Spot}\xspace}
\newcommand{\vbs}[1][]{\textsc{VBS}$_{#1}$\xspace}
\newcommand{\vbsall}{\vbs[+]}
\newcommand{\vbsnokofola}{\vbs[-]}
\newcommand{\ultimizer}{\textsc{Ultimate Automizer}\xspace}
\newcommand{\pecan}{\textsc{Pecan}\xspace}
\newcommand{\benchexec}{\textsc{BenchExec}\xspace}
\newcommand{\automatabenchmarks}{\textsc{automata-benchmarks}\xspace}
\newcommand{\claimqed}[0]{\hfill $\blacksquare$}
\newenvironment{claimproof}[1]{\par\noindent\underline{Proof:}\space#1}{\claimqed\medskip}
\newcommand{\pr}[0]{\mathit{pr}}
\newcommand{\id}[0]{\mathit{id}}
\newcommand{\num}{\#}
\newcommand{\acctrans}{\delta_F}
\newcommand{\dagg}[0]{\mathcal{G}}
\newcommand{\dagof}[1]{\dagg_{#1}}
\newcommand{\dagw}[0]{\dagof{\word}}
\newcommand{\delayed}[0]{\textsc{Delayed}\xspace}
\newcommand{\rankrestr}[0]{\mathit{RankRestr}}
\definecolor{spotblue}{RGB}{31,120,180}
\definecolor{spotpink}{RGB}{255,77,160}
\definecolor{spotorange}{RGB}{255,127,0}
\definecolor{spotpurple}{RGB}{106,61,154}
\definecolor{spotgreen}{RGB}{51,160,44}
\definecolor{spotred}{RGB}{227,26,28}
\definecolor{spotyellowish}{RGB}{196,196,0}
\definecolor{spotgray}{RGB}{80,80,80}
\definecolor{spotlight blue}{RGB}{107,246,255}
\definecolor{spotlight pink}{RGB}{255,154,255}
\definecolor{spotlight orange}{RGB}{255,156,103}
\definecolor{spotlight purple}{RGB}{178,164,255}
\definecolor{spotlight green}{RGB}{167,237,121}
\definecolor{spotlight red}{RGB}{255,104,104}
\definecolor{spotlight yellowish}{RGB}{255,224,64}
\definecolor{spotlight gray}{RGB}{192,192,144}
\tikzset{
  >={Stealth[round,bend]},
}
\tikzstyle{automaton}=[
\tikzset{
  scc/.style={draw=gray,fill=black!10,rounded corners=2mm},
  lstate/.style={state},
  }
\tikzstyle{smallautomaton}=[
\tikzstyle{mediumautomaton}=[
\tikzstyle{cstate}=[state,capsule,text width=,inner xsep=-5pt]
\tikzstyle{dot}=[fill=black,circle,minimum size=4pt,inner sep=0]
\tikzstyle{initial overlay}=[every initial by arrow/.append style={overlay}]
\tikzstyle{unreachable} = [densely dotted]
\tikzstyle{acclabel} = [
\tikzstyle{ltllabel} = [acclabel,fill=darkgreen!20]
\tikzstyle{namelabel} = [
\tikzstyle{matrix of states} = [
\tikzstyle{accset}=[
\tikzstyle{accsquare}=[accset,rectangle,inner sep=1.9pt,rounded corners=0pt]
\tikzset{
  collacc0/.style={fill=spotblue},
  collacc1/.style={fill=spotpink},
  collacc2/.style={fill=spotorange},
  collacc3/.style={fill=spotpurple},
  collacc4/.style={fill=spotgreen},
  collacc5/.style={fill=spotred},
  collacc6/.style={fill=spotyellowish,draw=black,text=black},
  collacc7/.style={fill=spotgray},
  collacc8/.style={fill=spotlight blue,draw=black,text=black},
  collacc9/.style={fill=spotlight pink},
  collacc10/.style={fill=spotlight orange},
  collacc11/.style={fill=spotlight purple},
  collacc12/.style={fill=spotlight green},
  collacc13/.style={fill=spotlight red},
  collacc14/.style={fill=spotlight yellowish},
  collacc15/.style={fill=spotlight gray},
}
\tikzset{
  sacc where/.code={
    \pgfkeyssetvalue{/sacc/where}{#1}
  }
}
\tikzset{
  l/.pic={\node[outer sep=2pt] {#1};},%
  acc/.pic={\node[accset,collacc#1]{#1};},%
  accsq/.pic={\node[accsquare,collacc#1]{#1};},%
  eacc/.pic={\node[accset,collacc#1]{\emptyacc};},%
  eaccsq/.pic={\node[accsquare,collacc#1]{\emptyacc};},%
  sacc/.style = {
    append after command=
      {pic at (\tikzlastnode.\pgfkeysvalueof{/sacc/where}) {acc=#1}}
  },
  saccsq/.style = {
    append after command=
      {pic at (\tikzlastnode.\pgfkeysvalueof{/sacc/where}) {accsq=#1}}
  },
  esacc/.style = {
    append after command=
      {pic at (\tikzlastnode.\pgfkeysvalueof{/sacc/where}) {eacc=#1}}
  },
  esaccsq/.style = {
    append after command=
      {pic at (\tikzlastnode.\pgfkeysvalueof{/sacc/where}) {eaccsq=#1}}
  },
  pics/cacc/.style 2 args={%
    code={\node[accset,collacc#1]{#2};}%
  },%
  pics/caccsq/.style 2 args={%
      code={\node[accsquare,collacc#1]{#2};}%
    },%
  csacc/.style 2 args = {%
      append after command=%
        {pic at (\tikzlastnode.\pgfkeysvalueof{/sacc/where}) {cacc={#1}{#2}}}%
  },%
  csaccsq/.style 2 args = {%
      append after command=%
        {pic at (\tikzlastnode.\pgfkeysvalueof{/sacc/where}) {caccsq={#1}{#2}}}%
  },%
}
\def\markbaseline{-.33em}
\def\tacc#1{\tikz[baseline=\markbaseline]\pic{acc=#1};\xspace}
\def\tcacc#1#2{\tikz[baseline=\markbaseline]\pic{cacc={#1}{#2}};\xspace}
\def\emptyacc{\phantom{0}}
\tikzstyle{removed} = [opacity=0, overlay]
\tikzstyle{SCC} = [
\tikzstyle{trivial} = [dashed]
\tikzstyle{sccname} = [red,anchor=south east,outer xsep=13pt]
\title{
  Modular Mix-and-Match\\ Complementation of \buchi Automata\\
  (Technical Report)
}
\author{
Vojt\v{e}ch Havlena\inst{1}\orcidID{0000-0003-4375-7954} \and
Ond\v{r}ej Leng\'{a}l\inst{1}\orcidID{0000-0002-3038-5875} \and
Yong Li\inst{2,3}\orcidID{0000-0002-7301-9234} \and\\
Barbora \v{S}mahl\'{i}kov\'{a}\inst{1}\orcidID{0000-0002-1184-4669} \and
Andrea Turrini\inst{3,4}\orcidID{0000-0003-4343-9323}
}
\authorrunning{
  V. Havlena,
  O. Leng\'{a}l,
  Y. Li,
  B. \v{S}mahl\'{i}kov\'{a}, and
  A. Turrini
}
\institute{
Faculty of Information Technology,
Brno University of Technology,
Brno,
Czech Republic
\and
Department of Computer Science, University of Liverpool, UK
\and
State Key Laboratory of Computer Science,
Institute of Software,\\
Chinese Academy of Sciences,
Beijing,
P.\@ R.\@ China
\and
Institute of Intelligent Software, Guangzhou,
Guangzhou,
P.\@ R.\@ China
}
\begin{document}

\maketitle

\vspace{-8mm}
\begin{abstract}
  Complementation of nondeterministic \buchi automata (BAs) is an important
  problem in automata theory with numerous applications in formal verification,
  such as termination analysis of programs, model checking, or in decision
  procedures of some logics.
  We build on ideas from a~recent work on BA determinization by Li \emph{et al.} and
  propose a~new modular algorithm for BA complementation.
  Our algorithm allows to combine several BA complementation procedures
  together, with one procedure for a~subset of the BA's strongly connected components (SCCs).
  In this way, one can exploit the structure of particular SCCs (such as when
  they are inherently weak or deterministic) and use more efficient specialized
  algorithms, regardless of the structure of the whole~BA.
  We give a~general framework into which partial complementation procedures can
  be plugged in, and its instantiation with several algorithms.
  The framework can, in general, produce a complement with an Emerson-Lei
  acceptance condition, which can often be more compact.
  Using the algorithm, we were able to establish an exponentially better new
  upper bound of~$\bigO(4^n)$ for complementation of the recently introduced class of elevator
  automata.
  We implemented the algorithm in a~prototype and performed a~comprehensive set
  of experiments on a~large set of benchmarks, showing
  that our framework complements well the state of the art and that it can serve as
  a~basis for future efficient \mbox{BA complementation and inclusion checking algorithms.}

	
\end{abstract}

\vspace{-9.0mm}
\section{Introduction}
\label{sec:intro}
\vspace{-2.0mm}

Nondeterministic \buchi automata (BAs)~\cite{Buchi90} are an elegant and
conceptually simple framework to model infinite behaviors of systems and the
properties they are expected to satisfy.
BAs are widely used in many important verification tasks, such as termination
analysis of programs~\cite{HeizmannHP14}, model checking~\cite{VardiW86}, or as
the underlying formal model of
decision procedures for some logics (such as S1S~\cite{Buchi90} or a~fragment of
the first-order logic over Sturmian words~\cite{HieronymiMOS0S22}).
Many of these applications require to perform \emph{complementation} of BAs:
For instance, in termination analysis of programs within
\ultimizer~\cite{HeizmannHP14}, complementation is used to keep track of the
set
of paths whose termination still needs to be proved.
On the other hand, in model checking\footnote{Here, we consider model checking
w.r.t.\ a~specification given in some more expressive logic, such as
S1S~\cite{Buchi90}, QPTL~\cite{SistlaVW87}, or HyperLTL~\cite{ClarksonFKMRS14},
rather than LTL~\cite{Pnueli77}, where negation is simple.}
and decision procedures of logics, complement is usually used to implement
negation and quantifier alternation.
Complementation is often the most difficult automata operation performed here;
its worst-case state complexity is $\bigO((0.76n)^n)$~\cite{Schewe09,AllredU18}
(which is tight~\cite{Yan08}).

In these applications, efficiency of the complementation often determines the
overall efficiency (or even feasibility) of the top-level application.
For instance, the success of \ultimizer in the Termination category of
the International Competition on Software Verification (SV-COMP)~\cite{svcomp}
is to a~large degree due to an efficient BA complementation
algorithm~\cite{BlahoudekHSST16,ChenHLLTTZ18} tailored
for BAs with a~special structure that it often encounters (as of the time of
writing, it has won 6~gold medals in the years 2017--2022 and two silver
medals in 2015 and 2016).
The special structure in this case are the so-called \emph{semi-deterministic
BAs} (SDBAs), BAs consisting of two parts:
\begin{inparaenum}[(i)]
  \item  an initial part without accepting states/transitions and
  \item  a deterministic part containing accepting states/transitions that
    cannot transition into the first part.
\end{inparaenum}

Complementation of SDBAs using one from the family of the so-called NCSB
algorithms~\cite{BlahoudekHSST16,BlahoudekDS20,ChenHLLTTZ18,HavlenaLS22b} has
the worst-case complexity~$\bigO(4^n)$ (and usually also works much better in
practice than general BA complementation procedures).
Similarly, there are efficient complementation procedures for other subclasses of
BAs, e.g.,
\begin{inparaenum}[(i)]
  \item  \emph{deterministic BAs} (DBAs) can be complemented into BAs with~$2n$
    states~\cite{Kurshan87} (or into co-\buchi automata with $n+1$ states) or
  \item  \emph{inherently weak BAs} (BAs where in each \emph{strongly connected
    component} (SCC), either all cycles are accepting or all cycles are
    rejecting) can be complemented into DBAs with~$\bigO(3^n)$ states using the
    Miyano-Hayashi algorithm~\cite{MiyanoH84}.
\end{inparaenum}

For a~long time, there has been no efficient algorithm for complementation of
BAs that are highly structured but do not fall into one of the categories above,
e.g., BAs containing inherently weak, deterministic, and
some nondeterministic SCCs.
For such BAs, one needed to use a~general complementation
algorithm with the~$\bigO((0.76n)^n)$ (or worse) complexity.
To the best of our knowledge, only recently has there appeared works that
exploit the structure of BAs to obtain a~more efficient complementation
algorithm:
\begin{inparaenum}[(i)]
  \item  The work of Havlena \emph{et al.}~\cite{HavlenaLS22a}, who introduce
    the class of \emph{elevator automata} (BAs with an arbitrary mixture of
    inherently weak and deterministic SCCs) and give a~$\bigO(16^n)$ algorithm
    for them.
  \item  The work of Li \emph{et al.}~\cite{LiTFVZ22}, who propose a~BA
    determinization procedure (into a~deterministic Emerson-Lei automaton) that
    is based on decomposing the input BA into SCCs and using a~different
    determinization procedure for different types of SCCs (inherently weak,
    deterministic, general) in a~synchronous construction.
\end{inparaenum}

In this paper, we propose a~new BA complementation algorithm inspired
by~\cite{LiTFVZ22}, where we exploit the fact that complementation is, in
a~sense, more relaxed than determinization.
In particular, we present a~\emph{framework} where one can plug-in different
partial complementation procedures fine-tuned for SCCs with a~specific
structure.
The procedures work only with the given SCCs, to some degree
\emph{independently} (thus reducing the potential state space explosion) from
the rest of the BA.
Our top-level algorithm then orchestrates runs of the different procedures in
a~\emph{synchronous} manner (or completely independently in the so-called
\emph{postponed} strategy), obtaining a resulting automaton with potentially
a~more general acceptance condition (in general an Emerson-Lei condition),
which can help keeping the result small.
If the procedures satisfy given correctness requirements, our framework
guarantees that its instantiation will also be correct.
We also propose its optimizations by, e.g., using round-robin to decrease the
amount of nondeterminism, using shared breakpoint to reduce the size and
the number of colours for certain class of partial algorithms, and
generalize simulation-based pruning of macrostates.

We provide a~detailed description of partial complementation procedures for 
inherently weak, deterministic, and initial deterministic SCCs, which we use to
obtain a~\emph{new} exponentially better upper bound of~$\bigO(4^n)$ for the class of
elevator automata (i.e., the same upper bound as for its strict subclass of SDBAs).
Furthermore, we also provide two partial procedures for general SCCs based on
determinization (from~\cite{LiTFVZ22}) and the rank-based construction.
Using a~prototype implementation, we then show our algorithm complements well
existing approaches and significantly improves the state of the art.

\vspace{-3.0mm}
\section{Preliminaries}
\label{sec:preliminaries}
\vspace{-2.0mm}

We fix a~finite non-empty alphabet~$\alphabet$ and the first infinite
ordinal~$\omega$.
An (infinite) word~$\word$ is a~function $\word\colon \omega \to \alphabet$ where
the $i$-th symbol is denoted as~$\wordof i$.
Sometimes, we represent~$\word$ as an~infinite sequence $\word = \wordof 0
\wordof 1 \dots$
We denote the set of all infinite words over~$\alphabet$ as $\alphabet^\omega$;
an \emph{$\omega$-language} is a~subset of~$\alphabet^\omega$.

\vspace{-2mm}
\paragraph{Emerson-Lei Acceptance Conditions.}
Given a~set $\colourset = \{0, \ldots, k -1\}$ of~$k$
\emph{colours} (often depicted as \tacc{0}, \tacc{1}, etc.), we define the
set of \emph{Emerson-Lei acceptance conditions} $\emersonleiof \colourset$ as
the set of formulae constructed according to the following grammar:
\vspace{-2mm}
\begin{equation}
  \alpha ::= \Infof c \mid \Finof c \mid (\alpha \land \alpha) \mid (\alpha \lor \alpha)\\[-2mm]
\end{equation}
for $c \in \colourset$.
The \emph{satisfaction} relation $\models$ for a~set of colours~$M \subseteq
\colourset$ and condition~$\alpha$ is defined inductively as follows (for $c
\in \colourset$):
\vspace{-2mm}
\begin{align*}
  M \models \Finof c & \text{~ iff ~} c \notin M, &
  M \models \alpha_1 \lor \alpha_2 & \text{~ iff ~} M \models \alpha_1
  \text{ or } M \models \alpha_2,
  \\
  M \models \Infof c & \text{~ iff ~} c \in M,  &
  M \models \alpha_1 \land \alpha_2 & \text{~ iff ~} M \models \alpha_1
  \text{ and } M \models \alpha_2.
\end{align*}

\vspace{-6mm}
\paragraph{Emerson-Lei Automata.}
A~(nondeterministic transition-based\footnote{%
We only consider transition-based acceptance in order to avoid cluttering the
paper by always dealing with accepting states \emph{and} accepting transitions.
Extending our approach to state/transition-based (or just state-based) automata
is straightforward.
})
\emph{Emerson-Lei automaton} (TELA)
over~$\alphabet$ is a~tuple $\aut = (\states, \trans, \inits, \colourset,
\colouring, \acccond)$,
where $\states$ is a~finite set of \emph{states},
$\trans \subseteq \states \times \alphabet \times \states$ is a~set of
\emph{transitions}\footnote{%
Note that some authors use a~more general definition
of TELAs with $\trans \subseteq \states \times \alphabet \times 2^{\colourset}
\times \states$; since we only use them on the output, we suffice with the
simpler definition.},
$\inits \subseteq \states$ is the set of \emph{initial} states,
$\colourset$ is the set of \emph{colours},
$\colouring\colon \trans \to 2^{\colourset}$ is a~\emph{colouring
function} of transitions, and
$\acccond \in \emersonleiof \colourset$.
We use $p \ltr a q$ to denote that $(p,a,q) \in \trans$ and sometimes also
treat~$\trans$ as a~function with the signature $\trans\colon \states \times
\alphabet \to 2^{\states}$.
Moreover, we extend~$\trans$ to sets of states $P \subseteq \states$ as
$\trans(P, a) = \bigcup_{p \in P} \trans(p,a)$.
We use $\autof q$ for $q \in \states$ to denote the automaton $\autof q =
(\states, \trans, \{q\}, \colourset, \colouring, \acccond)$, i.e., the TELA
obtained from~$\aut$ by setting~$q$ as the only initial state.
$\aut$ is called \emph{deterministic} if $|\inits|\leq 1$ and $|\trans(q,a)|\leq 1$
for each $q\in \states$ and $a \in \alphabet$.
If $\colourset = \{\tacc{0}\}$ and $\acccond = \Infof {\tacc{0}}$, we
call~$\aut$ a~\emph{\buchi automaton} (BA) and denote it as $\aut = (\states, \trans,
\inits, \acc)$ where~$\acc$ is the set of all transitions coloured
by~$\tacc{0}$, i.e., $\acc = \colouring^{-1}(\{\tacc{0}\}$).
For a~BA, we use $\transacc(p, a) = \{q \in \trans(p, a) \mid \colouring(p \ltr
a q) = \{\tacc{0}\}\}$ (and extend the notation to sets of states as
for~$\trans$).
A~BA~$\aut = (\states, \trans, \inits, \acc)$ is called
\emph{semi-deterministic} (SDBA) if for every accepting transition $(p \ltr{a} q) \in
F$, the reachable fragment \mbox{of~$\autof{q}$ is deterministic.}

A~\emph{run}
of~$\aut$ from~$q \in \states$ on an input word~$\word$ is an infinite sequence $\rho\colon
\omega \to \states$ that starts in~$q$ and respects~$\trans$, i.e., $\rho_0 = q$ and
$\forall i \geq 0\colon \rho_i \ltr{\wordof i}\rho_{i+1} \in \trans$.
Let $\infoft \rho \subseteq \delta$ denote the set of
transitions occurring in~$\rho$ infinitely often and $\infofcol \rho =
\bigcup\{\colouring(x) \mid x \in \infoft \rho\}$ be the set of infinitely
often occurring colours.
A~run~$\rho$ is \emph{accepting} in~$\aut$ iff $\infofcol \rho \models \acccond$ and 
the \emph{language} of~$\aut$, denoted as $\langof{\aut}$, is defined as the set of words
$w \in \alphabet^\omega$ for which there exists an accepting run in~$\aut$
starting with some state in~$\inits$.

Consider a~BA $\aut = (\states, \trans, \inits, \acc)$.
For a set of states $S\subseteq Q$ we use $\aut_S$ to denote the copy of~$\aut$
where accepting transitions only occur between states from~$S$, i.e., the BA
$\aut_S = (Q, \trans, \inits, \acc\cap \restrof{\trans}{S})$ where
$\restrof{\trans}{S} = \{p \ltr a q \in \trans \mid p,q \in S\}$.
We say that a~non-empty set of states~$C \subseteq \states$ is a~\emph{strongly
connected component} (SCC) if every pair of states of~$C$ can reach each other
and~$C$ is a~maximal such set.
An SCC of~$\aut$  is \emph{trivial} if it consists of a~single state that does
not contain a~self-loop and \emph{non-trivial} otherwise.
An~SCC~$C$ is
\emph{accepting} if it contains at least one accepting transition and
\emph{inherently weak} iff either
\begin{inparaenum}[(i)]
  \item  every cycle in~$C$ contains a~transition from~$\acc$ or
  \item  no cycle in~$C$ contains any transitions from~$\acc$.
\end{inparaenum}
%
An SCC~$C$ is \emph{deterministic} iff the BA $(C, \restrof{\trans}{C},
\{q\}, \emptyset)$ for any~$q \in C$ is deterministic.
We denote inherently weak components as IWCs, accepting deterministic components that
are not inherently weak as DACs (deterministic accepting), and the remaining accepting
components as NACs (nondeterministic accepting).
A~BA~$\aut$ is called an \emph{elevator automaton} if it contains no NAC.

We assume that~$\aut$ contains no accepting transition outside its SCCs (no run
can cycle over such transitions).
We use $\transscc$ to denote the restriction of~$\trans$ to transitions that do
not leave their SCCs, formally, $\transscc = \{p \ltr a q \in \delta \mid p
\text{ and } q \text{ are in the same SCC}\}$.
A~\emph{partition block} $P \subseteq \states$ of~$\aut$ is a~nonempty union of its
accepting SCCs, and a~\emph{partitioning} of~$\aut$ is a~sequence $P_1, \ldots,
P_n$ of pairwise disjoint partition blocks of~$\aut$ that contains all accepting SCCs
of~$\aut$.

The complement (automaton) of a BA $\aut$ is a TELA 
that accepts the complement language $\alphabet^\omega\setminus \lang(\aut)$ of $\lang(\aut)$.
In the paper, we call a state and a run of a complement automaton a
\emph{macrostate} and a \emph{macrorun}, respectively.


\vspace{-3.0mm}
\section{A~Modular Complementation Algorithm}\label{sec:modular}
\vspace{-2.0mm}

In a nutshell, the main idea of our BA complementation algorithm is that we first
decompose a~BA~$\aut$ into several partition blocks according to their properties, and
then perform complementation for each of the partition blocks (potentially using
a~different algorithm) independently, using either a~\emph{synchronous}
construction, synchronizing the complementation algorithms for all partition blocks in
each step, or a~\emph{postponed} construction, which proceeds by complementing
the partition blocks independently and combines the partial results later using
automata product construction.
The decomposition of~$\aut$ into partition blocks can either be trivial---i.e., with
one block for each accepting SCC---, or more elaborate, e.g.,
a~partitioning where one partition block contains all accepting IWCs, another
contains all DACs, and each NAC is given its own partition block.

%
%
%
%

\enlargethispage{2mm}

In this way, one can avoid running a~general complementation algorithm for
unrestricted BAs with the state complexity upper bound $\bigO((0.76n)^n)$ and,
instead, apply the most suitable complementation procedure for each of the
partition blocks.
This comes with three main advantages:
\vspace{-2mm}
\begin{enumerate}
  \item  The complementation algorithm for each partition block can be selected
    differently in order to exploit the properties of the block.
    For instance, for partition blocks with IWCs, one can use complementation based
    on the breakpoint (the so-called Miyano-Hayashi)
    construction~\cite{MiyanoH84} with $\bigO(3^n)$ macrostates (cf.\
    \cref{sec:iwc-complement}),
    while for partition blocks with only DACs, one can use an algorithm
    with the state complexity~$\bigO(4^n)$ based on an adaptation of the NCSB
    construction~\cite{BlahoudekHSST16,BlahoudekDS20,ChenHLLTTZ18,HavlenaLS22b}
    for SDBAs (cf. \cref{sec:dac-complement}).
    For NACs, one can choose between, e.g.,
    rank-~\cite{KupfermanV01,FriedgutKV06,Schewe09,ChenHL19,HavlenaL21,HavlenaLS22a}
    or determinization-based \cite{Safra88,Piterman07,Redziejowski12}
    algorithms, depending on the \mbox{properties of the NACs (cf.\ \cref{sec:general}).}

  \item  The different complementation algorithms can focus only on the
    respective blocks and do not need to consider other parts of the BA.
    This is advantageous, e.g., for rank-based algorithms, which can use this
    restriction to obtain tighter bounds on the considered ranks (even
    tighter than using the refinement in~\cite{HavlenaLS22a}).

  \item  The obtained automaton can be more compact due to the use of a~more
    general acceptance condition than \buchi \cite{SafraV89}---in general, it can be
    a~conjunction of any $\emersonlei$ conditions (one condition for each
    partition block), depending on the output of the complementation procedures; this
    can allow a~more compact encoding of the produced automaton allowed by using
    a~mixture of conditions.
    E.g., a~deterministic BA can be complemented with constant extra
    generated states when using a co-\buchi condition rather than a~linear number of
    generated \mbox{states for a~\buchi condition (see \cref{sec:det-init}).}
\end{enumerate}
Those partial complementation algorithms then need to be orchestrated by
a~top-level algorithm to produce the complement of~$\aut$.

One might regard our algorithm as an optimization of an approach that would for
each partition block~$P$ obtain a~BA~$\aut_P$, complement~$\aut_P$ using the selected
algorithm, and perform the intersection of all~$\aut_P$'s obtained in this way
(which would, however, not be able to obtain the upper bound for elevator
automata that we give in \cref{sec:elevator-upper}).
Indeed, we also implemented the mentioned procedure (called the
\emph{postponed} approach, described in \cref{sec:postponed}) and compared it to
our main procedure (called the \emph{synchronous} approach) described below.

\newcommand{
\begin{figure}[t]
  \centering
    \begin{tikzpicture}
      \node[scale=0.7] (aex) {
\begin{tikzpicture}[automaton]
  \tikzstyle{every state}=[inner sep=3pt,minimum size=5pt]
  \tikzstyle{empty}=[]
  \tikzstyle{initstate}=[fill=yellow!30]
  
  \path[use as bounding box] (-0.75,-2.75) rectangle (3.5,2.75);

  \node[state,initial,initstate] (p) at (0,0) {$p$};
  \node[state,above right of=p,xshift=10mm] (q) {$q$};
  \node[state,below right of=p,xshift=10mm] (r) {$r$};
  \node[state] (s) at ($(q)!0.5!(r)$) {$s$};

  \node[font=\Large] at (0,2) {$\autex$};
  
  \draw[dotted] ($(p.north west) + (-0.15,0.9)$) rectangle ($(p.south east) + (0.15,-0.15)$);
  \draw[dotted] ($(q.north west) + (-0.15,0.75)$) rectangle ($(q.south east) + (0.75,-0.15)$);
  \node[anchor=south] at ($(q.north) + (0.30,0.75)$) {$P_{0}$};
  \draw[dotted] ($(s.north west) + (-0.35,0.15)$) rectangle ($(r.south east) + (0.75,-0.15)$);
  \node[anchor=north] at ($(r.south) + (0.20,-0.15)$) {$P_{1}$};

  \path[->]
    (p) edge[loop above] node[auto] {$a,b$} (p)
    (p) edge node[above] {$a,b$} (q)
    (q) edge[loop above] node[auto] {$a$} node[anchor=center] {$\bullet$} (q)
    (q) edge[loop right] node {$b$} (q)
    (q) edge node[auto] {$a$} (s)
    (p) edge node[below] {$a$} (r)
    (r) edge[loop right] node[auto] {$b$} node[anchor=center] {$\bullet$} (r)
    (r) edge[bend left] node[left] {$b$} (s)
    (s) edge[bend left] node[auto] {$a$} node[anchor=center] {$\bullet$} (r)
    ;
\end{tikzpicture}};
      \node[anchor=west,scale=0.7] (comp) at (aex.east) {
\begin{tikzpicture}[automaton]
  \tikzstyle{every state}=[inner sep=3pt,minimum size=5pt]
  \tikzstyle{empty}=[]
  \tikzstyle{initstate}=[fill=yellow!30]
  \tikzstyle{uberstate}=[
    rounded corners,draw,anchor=base,
    rectangle split,rectangle split horizontal,rectangle split parts=3,
    rectangle split part align=base,
    rectangle split part fill={black!20, blue!30, green!30}]
  \newcommand{\ustate}[6]{$#1$\nodepart{two}$#2, #3, #4$\nodepart{three}$#5, #6$}

  \path[use as bounding box] (8.5,0.4) rectangle (-1.5,-4.7);
  

  \node[uberstate,initial] (p) at (0,0) {\ustate p \emptyset \emptyset \emptyset \emptyset \emptyset};
  \node[uberstate,below of=p,xshift=0mm] (pq) {\ustate{p+q} q \emptyset q \emptyset \emptyset};
  \node[uberstate,below of=pq,xshift=0mm] (pqsafe) {\ustate{p+q} \emptyset q \emptyset \emptyset \emptyset};

  \node[uberstate,right of=p,xshift=30mm] (pqr) {\ustate{p+q+r} q \emptyset q r r};
  \node[uberstate,below of=pqr,xshift=0mm] (pqrs) {\ustate{p+q+r+s} q \emptyset q {r+s} {r+s}};
  \node[uberstate,below of=pqrs,xshift=0mm] (pqrsbreak) {\ustate{p+q+r+s} q \emptyset q {r+s} {r}};
  \node[uberstate,below of=pqrsbreak,xshift=0mm] (pqrssafe) {\ustate{p+q+r+s} \emptyset q \emptyset {r+s} {r+s}};

  \path[->]
    (p) edge pic[pos=0.3] {acc=0} pic[pos=0.6] {acc=1} pic[auto] {l=$b$} (pq)
    (pq) edge pic[pos=0.3] {acc=0} pic[pos=0.6] {acc=1} pic[auto] {l=$b$} (pqsafe)
    (pq.182) edge[out=210,in=150,loop,distance=9mm] pic {acc=1} pic[auto] {l=$b$} (pq.178)
    (pqsafe.182) edge[out=210,in=150,loop,distance=9mm] pic[pos=0.25] {acc=0} pic[pos=0.75] {acc=1} pic[auto] {l=$b$} (pqsafe.178)
    (p) edge pic[pos=0.41] {acc=0} pic[pos=0.59] {acc=1} pic[auto] {l=$a$} (pqr)
    (pqr) edge[bend left] pic {acc=1} pic[auto] {l=$a$} (pqrs)
    (pqr) edge[bend right] node[left] {$b$} (pqrs)
    (pqrs.12) edge[out=120,in=60,loop,distance=6mm] node[auto] {$b$} (pqrs.11)
    (pq) edge pic {acc=1} pic[auto] {l=$a$} (pqrs)
    (pqrs) edge[bend left] node[auto] {$a$} (pqrsbreak)
    (pqrsbreak) edge[bend left] node[auto] {$b$} (pqrs)
    (pqrsbreak.160) edge[bend left] pic {acc=1} pic[auto] {l=$a$} (pqrs.200)
    (pqrsbreak) edge pic {acc=0} pic[auto] {l=$b$} (pqrssafe)
    (pqrssafe.182) edge[out=210,in=150,loop,distance=9mm] pic {acc=0} pic[auto] {l=$b$} (pqrssafe.178)
    (pqrs.187) edge[bend right] pic {acc=0} pic[left] {l=$b$} (pqrssafe.173)
    (pqr.355) edge[bend left=40] pic {acc=0} pic[auto] {l=$b$} (pqrssafe.2)
    ;
\end{tikzpicture}};
    \end{tikzpicture}
  \caption{%
    Left: BA $\autex$ (dots represent accepting transitions).
    Right: the outcome of $\modcompl(\dac_{P_0}, \iwc_{P_1},\autex)$ with
    $\acccond\colon \Infof{\protect\tacc{0}} \land \Infof{\protect\tacc{1}}$.
    States are given as $(H, (C_0, S_0, B_0), (C_1, B_1))$; 
    to avoid too many braces, sets are given as sums.
    }
  \label{fig:example}
\end{figure}
%
}[0]{
\begin{figure}[t]
  \centering
    \begin{tikzpicture}
      \node[scale=0.7] (aex) {
\begin{tikzpicture}[automaton]
  \tikzstyle{every state}=[inner sep=3pt,minimum size=5pt]
  \tikzstyle{empty}=[]
  \tikzstyle{initstate}=[fill=yellow!30]
  
  \path[use as bounding box] (-0.75,-2.75) rectangle (3.5,2.75);

  \node[state,initial,initstate] (p) at (0,0) {$p$};
  \node[state,above right of=p,xshift=10mm] (q) {$q$};
  \node[state,below right of=p,xshift=10mm] (r) {$r$};
  \node[state] (s) at ($(q)!0.5!(r)$) {$s$};

  \node[font=\Large] at (0,2) {$\autex$};
  
  \draw[dotted] ($(p.north west) + (-0.15,0.9)$) rectangle ($(p.south east) + (0.15,-0.15)$);
  \draw[dotted] ($(q.north west) + (-0.15,0.75)$) rectangle ($(q.south east) + (0.75,-0.15)$);
  \node[anchor=south] at ($(q.north) + (0.30,0.75)$) {$P_{0}$};
  \draw[dotted] ($(s.north west) + (-0.35,0.15)$) rectangle ($(r.south east) + (0.75,-0.15)$);
  \node[anchor=north] at ($(r.south) + (0.20,-0.15)$) {$P_{1}$};

  \path[->]
    (p) edge[loop above] node[auto] {$a,b$} (p)
    (p) edge node[above] {$a,b$} (q)
    (q) edge[loop above] node[auto] {$a$} node[anchor=center] {$\bullet$} (q)
    (q) edge[loop right] node {$b$} (q)
    (q) edge node[auto] {$a$} (s)
    (p) edge node[below] {$a$} (r)
    (r) edge[loop right] node[auto] {$b$} node[anchor=center] {$\bullet$} (r)
    (r) edge[bend left] node[left] {$b$} (s)
    (s) edge[bend left] node[auto] {$a$} node[anchor=center] {$\bullet$} (r)
    ;
\end{tikzpicture}};
      \node[anchor=west,scale=0.7] (comp) at (aex.east) {
\begin{tikzpicture}[automaton]
  \tikzstyle{every state}=[inner sep=3pt,minimum size=5pt]
  \tikzstyle{empty}=[]
  \tikzstyle{initstate}=[fill=yellow!30]
  \tikzstyle{uberstate}=[
    rounded corners,draw,anchor=base,
    rectangle split,rectangle split horizontal,rectangle split parts=3,
    rectangle split part align=base,
    rectangle split part fill={black!20, blue!30, green!30}]
  \newcommand{\ustate}[6]{$#1$\nodepart{two}$#2, #3, #4$\nodepart{three}$#5, #6$}

  \path[use as bounding box] (8.5,0.4) rectangle (-1.5,-4.7);
  

  \node[uberstate,initial] (p) at (0,0) {\ustate p \emptyset \emptyset \emptyset \emptyset \emptyset};
  \node[uberstate,below of=p,xshift=0mm] (pq) {\ustate{p+q} q \emptyset q \emptyset \emptyset};
  \node[uberstate,below of=pq,xshift=0mm] (pqsafe) {\ustate{p+q} \emptyset q \emptyset \emptyset \emptyset};

  \node[uberstate,right of=p,xshift=30mm] (pqr) {\ustate{p+q+r} q \emptyset q r r};
  \node[uberstate,below of=pqr,xshift=0mm] (pqrs) {\ustate{p+q+r+s} q \emptyset q {r+s} {r+s}};
  \node[uberstate,below of=pqrs,xshift=0mm] (pqrsbreak) {\ustate{p+q+r+s} q \emptyset q {r+s} {r}};
  \node[uberstate,below of=pqrsbreak,xshift=0mm] (pqrssafe) {\ustate{p+q+r+s} \emptyset q \emptyset {r+s} {r+s}};

  \path[->]
    (p) edge pic[pos=0.3] {acc=0} pic[pos=0.6] {acc=1} pic[auto] {l=$b$} (pq)
    (pq) edge pic[pos=0.3] {acc=0} pic[pos=0.6] {acc=1} pic[auto] {l=$b$} (pqsafe)
    (pq.182) edge[out=210,in=150,loop,distance=9mm] pic {acc=1} pic[auto] {l=$b$} (pq.178)
    (pqsafe.182) edge[out=210,in=150,loop,distance=9mm] pic[pos=0.25] {acc=0} pic[pos=0.75] {acc=1} pic[auto] {l=$b$} (pqsafe.178)
    (p) edge pic[pos=0.41] {acc=0} pic[pos=0.59] {acc=1} pic[auto] {l=$a$} (pqr)
    (pqr) edge[bend left] pic {acc=1} pic[auto] {l=$a$} (pqrs)
    (pqr) edge[bend right] node[left] {$b$} (pqrs)
    (pqrs.12) edge[out=120,in=60,loop,distance=6mm] node[auto] {$b$} (pqrs.11)
    (pq) edge pic {acc=1} pic[auto] {l=$a$} (pqrs)
    (pqrs) edge[bend left] node[auto] {$a$} (pqrsbreak)
    (pqrsbreak) edge[bend left] node[auto] {$b$} (pqrs)
    (pqrsbreak.160) edge[bend left] pic {acc=1} pic[auto] {l=$a$} (pqrs.200)
    (pqrsbreak) edge pic {acc=0} pic[auto] {l=$b$} (pqrssafe)
    (pqrssafe.182) edge[out=210,in=150,loop,distance=9mm] pic {acc=0} pic[auto] {l=$b$} (pqrssafe.178)
    (pqrs.187) edge[bend right] pic {acc=0} pic[left] {l=$b$} (pqrssafe.173)
    (pqr.355) edge[bend left=40] pic {acc=0} pic[auto] {l=$b$} (pqrssafe.2)
    ;
\end{tikzpicture}};
    \end{tikzpicture}
  \caption{%
    Left: BA $\autex$ (dots represent accepting transitions).
    Right: the outcome of $\modcompl(\dac_{P_0}, \iwc_{P_1},\autex)$ with
    $\acccond\colon \Infof{\protect\tacc{0}} \land \Infof{\protect\tacc{1}}$.
    States are given as $(H, (C_0, S_0, B_0), (C_1, B_1))$; 
    to avoid too many braces, sets are given as sums.
    }
  \label{fig:example}
\end{figure}
%
}


\vspace{-3.0mm}
\subsection{Basic Synchronous Algorithm}\label{sec:basic_synchronous}
\vspace{-2.0mm}

In this section, we describe the basic \emph{synchronous} top-level algorithm.
Then, in \cref{sec:modular-elevator}, we provide its instantiation for elevator
automata and give a~new upper bound for their complementation;
in \cref{sec:optimizations}, we discuss several optimizations of the algorithm;
and in \cref{sec:general}, we give a~generalization for unrestricted BAs.
Let us fix a~BA $\aut = (\states, \trans, \inits, \acc)$ and, w.l.o.g., assume
that~$\aut$ is \emph{complete}, i.e., $|I| > 0$ and all states $q \in
\states$ have an outgoing transition over all symbols $a \in \alphabet$.


The synchronous algorithm works with partial complementation algorithms for BA's partition blocks.
Each such algorithm~$\alg$ is provided with a structural condition $\phi_\alg$ characterizing 
properties of partition blocks that the algorithm is able to complement. 
For a BA $\but$, we abuse the notation and use $\but \models \phi$ to denote that $\but$ satisfies 
the condition~$\phi$. 
We say that $\alg$ is a~\emph{partial complementation algorithm for a~partition block~$P$} if $\aut_P \models \phi_\alg$.
We distinguish between $\alg$, a~general algorithm able to complement a
partition block of a~given type, and $\alg_{P}$, its instantiation for the
partition block~$P$. 
We require each instance~$\alg_P$ to provide the
following:
%
\begin{itemize}
  \item  $\onetypeof{\alg_P}$ --- the type of the macrostates produced by the
    algorithm;

  \item  $\colourtypeof{\alg_P} = \{0, \ldots, k^{\alg_P}-1\}$ --- the set of used colours;

  \item  $\GetInitOf{\alg_P} \in 2^{\onetypeof{\alg_P}}$ --- the set of initial macrostates;

  \item  $\GetSuccOf{\alg_P}\colon (2^\states \times \onetypeof{\alg_P} \times \alphabet)
    \to 2^{\onetypeof{\alg_P} \times \colourtypeof{\alg_P}}$ --- a~function
    returning the successors of a~macrostate such that $\GetSuccOf{\alg_P}(H, M, a) =
    \{(M_1, \alpha_1), \ldots, (M_k, \alpha_k)\}$,
    where $H$~is the set of all states of~$\aut$ reached over the same word,
    $M$~is the $\alg_P$'s macrostate for the given partition block,
    $a$~is the input symbol, and
    each $(M_i, \alpha_i)$ is a~pair (\emph{macrostate}, \emph{set of colours}) such that~$M_i$ is
    a~successor of~$M$ over~$a$ w.r.t.~$H$ and~$\alpha_i$ is a~set of colours on the
    edge from~$M$ to~$M_i$ ($H$~helps to keep track of \emph{new} runs coming
    into the partition block); and

  \item  $\GetAccOf{\alg_P}\in \emersonleiof{\colourtypeof {\alg_P}}$ ---
    the acceptance condition.
\end{itemize}

Let $P_1, \ldots, P_n$ be a~partitioning of~$\aut$ (w.l.o.g.,
we assume that $n > 0$), and $\alg^1, \ldots, \alg^n$ be
a~sequence of algorithms such that~$\alg^i$ is a~partial complementation
algorithm for~$P_i$.
Furthermore, let us define the following auxiliary \emph{renumbering}
function~$\renum$ as $\renumof c j = c + \sum_{i=1}^{j-1}
|\colourtypeof{\alg^i_{P_i}}|$, which is used to make the colours and acceptance
conditions from the partial complementation algorithms disjoint.
We also lift~$\renum$ to sets of colours in the natural way, and also to
$\emersonlei$ conditions such that~$\renumof \varphi j$ has the same structure
as~$\varphi$ but each atom $\Infof c$ is substituted with the atom
$\Infof{\renumof c j}$ (and likewise for $\Fin$ atoms).
The synchronous complementation algorithm then produces 
the~TELA~$\modcompl(\alg^1_{P_1}, \dots, \alg^n_{P_n},\aut) = (\states^{\cut},
\trans^{\cut}, \inits^{\cut}, \colourset^{\cut}, \colouring^{\cut},
\acccond^{\cut})$ with components defined as follows (we use
$[S_i]_{i=1}^n$ to abbreviate $S_1 \times \cdots \times S_n$):
\vspace{-1mm}

\noindent
\begin{minipage}[t]{0.49\textwidth}
\begin{itemize}
  \item  $\states^{\cut} = 2^\states \times  [\onetypeof{\alg^i_{P_i}}]_{i=1}^n$,
  \item  $\inits^{\cut} = \{\inits\} \times [\GetInitOf{\alg^i_{P_i}}]_{i=1}^n$,
\end{itemize}
\end{minipage}
\begin{minipage}[t]{0.49\textwidth}
\begin{itemize}
  \item $\colourset^{\cut} = \{0, \ldots, \renumof{k^{\alg^n_{P_n}}-1}{n}\}$,
  \item $\acccond^{\cut} = \bigwedge_{i=1}^n \renumof{\GetAccOf{\alg^i_{P_i}}} i$,\footnotemark and
\end{itemize}
\end{minipage}
\footnotetext{
  If we drop the condition that~$\aut$ is complete, we also need to add an
  \emph{accepting sink state} (representing the case for $H=\emptyset$)
  with self-loops over all symbols marked by a~new colour $\tcacc{7}{$s$}$, and
  enrich $\acccond^{\cut}$ with $\ldots\lor \Infof{\tcacc{7}{$s$}}$.
  }
\begin{itemize}
\vspace{-2mm}
  \item  $\trans^{\cut}$ and $\colouring^{\cut}$ are defined such that
    if
    \vspace{-1.5mm}
    \begin{equation*}
    ((M'_1, \alpha_1), \ldots, (M'_n, \alpha_n)) \in [\GetSuccOf{\alg^i_{P_i}}(H, M_i, a)]_{i=1}^n ,\\[-1.5mm]
    \end{equation*}
    then $\trans^{\cut}$ contains the transition
    $t\colon (H, M_1, \ldots, M_n) \ltr a (\trans(H, a), M'_1, \ldots, M'_n)$,
    coloured by
    $\colouring^{\cut}(t) = \bigcup\{\renumof{\alpha_i}{i} \mid 1 \leq i \leq n\}$,
    and~$\trans^{\cut}$ is the smallest such a~set.
\end{itemize}
In order for $\modcompl$ to be correct, the partial complementation algorithms
need to satisfy certain properties, which we discuss below.


%

For a structural condition~$\varphi$ and a~BA $\but = (\states, \trans, \inits, \acc)$, we define $\but\modelssurrp \phi$ iff $\but \models \phi$, $P$~is 
a~partition block of~$\but$, and
$\but$ contains no accepting transitions outside~$P$.
We can now provide the correctness condition on~$\alg$.

\vspace{-2mm}
\begin{definition}\label{def:alg-correct}
  We say that $\alg$ is \emph{correct} if for each 
  $\but$ such that $\but \modelssurrp \phi_\alg$ we have $\langof{\modcompl(\alg_{P}, \but)} = \Sigma^\omega\setminus\langof{\but}$.
\end{definition}

\vspace{-2mm}
The correctness of the synchronous algorithm (provided that each partial complementation algorithm is correct) is then established by Theorem \ref{thm:correctness}. 

\vspace{-2mm}
\begin{restatable}{theorem}{thmSynchrCorr}\label{thm:correctness}
  Let $\aut$ be a BA, $P_1, \ldots, P_n$ be a~partitioning of $\aut$, and $\alg^1, \ldots, \alg^n$ be
  a~sequence of partial complementation algorithms such that~$\alg^i$ is \emph{correct} for~$P_i$. 
  Then, we have $\langof{\modcompl(\alg^1_{P_1}, \dots, \alg^n_{P_n},\aut)} = \Sigma^\omega\setminus\langof{\aut}$.
\end{restatable}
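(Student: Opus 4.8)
The plan is to reduce the correctness of the full synchronous product construction $\modcompl(\alg^1_{P_1}, \dots, \alg^n_{P_n}, \aut)$ to the single-block correctness guaranteed by \cref{def:alg-correct}. The key structural observation is that $\modcompl(\alg^1_{P_1}, \dots, \alg^n_{P_n}, \aut)$ is, up to the renumbering $\renum$ of colours, exactly the \emph{synchronized product} of the $n$ automata $\modcompl(\alg^i_{P_i}, \aut_{P_i})$, where in the product all $n$ components share the same first coordinate $H \subseteq \states$ (which just tracks $\trans$-images of $\inits$ under the read prefix and is deterministic). Concretely, a macrorun $\rho^{\cut}$ of $\modcompl(\alg^1_{P_1}, \dots, \alg^n_{P_n}, \aut)$ on a word $w$ projects, for each $i$, to a macrorun $\rho^i$ of $\modcompl(\alg^i_{P_i}, \aut_{P_i})$ on $w$, because $\GetSuccOf{\alg^i_{P_i}}$ depends only on $(H, M_i, a)$ and the $H$-component evolves identically in all projections; conversely, since the $H$-component is deterministic, any tuple of macroruns $(\rho^1, \dots, \rho^n)$ of the individual $\modcompl(\alg^i_{P_i}, \aut_{P_i})$ that agree on their $H$-components (which they automatically do, both starting from $\inits$) assembles into a unique macrorun of the product. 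Moreover, because $\renum$ sends the colour sets $\colourtypeof{\alg^i_{P_i}}$ to pairwise disjoint ranges, for the product macrorun $\rho^{\cut}$ we have $\infofcol{\rho^{\cut}} \cap \renum(\colourtypeof{\alg^i_{P_i}}, i) = \renum(\infofcol{\rho^i}, i)$, and hence by the inductive definition of $\models$ on the conjunction $\acccond^{\cut} = \bigwedge_{i=1}^n \renumof{\GetAccOf{\alg^i_{P_i}}}{i}$ we get that $\rho^{\cut}$ is accepting iff every $\rho^i$ is accepting in $\modcompl(\alg^i_{P_i}, \aut_{P_i})$.

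Given this correspondence, I would argue the two language inclusions. For one direction, suppose $w \notin \langof{\aut}$; I need an accepting macrorun of the product on $w$. For each $i$, the automaton $\aut_{P_i}$ satisfies $\aut_{P_i} \modelssurrp[P_i] \phi_{\alg^i}$: indeed $\aut_{P_i} \models \phi_{\alg^i}$ since $\alg^i$ is a partial complementation algorithm for $P_i$, $P_i$ is by assumption a partition block of $\aut$ (hence of $\aut_{P_i}$, which has the same transition structure), and by construction $\aut_{P_i}$ has accepting transitions only inside $P_i$. Since $\langof{\aut_{P_i}} \supseteq \langof{\aut}$ is false in general, I instead observe $\langof{\aut} = \bigcap_{i=1}^n \langof{\aut_{P_i}}$ — because the accepting SCCs of $\aut$ are partitioned among the $P_i$, a run of $\aut$ is accepting iff it is accepting in some $\aut_{P_i}$... wait, more carefully: every accepting run of $\aut$ cycles in a single accepting SCC, which lies in exactly one $P_j$, so the run is accepting in $\aut_{P_j}$ and rejecting in all $\aut_{P_i}$, $i \neq j$; and every accepting run of some $\aut_{P_i}$ is an accepting run of $\aut$. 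Hence $\langof{\aut} = \bigcup_i \langof{\aut_{P_i}}$, so $w \notin \langof{\aut}$ gives $w \notin \langof{\aut_{P_i}}$ for every $i$. By correctness of each $\alg^i$ (\cref{def:alg-correct}), $w \in \langof{\modcompl(\alg^i_{P_i}, \aut_{P_i})} = \Sigma^\omega \setminus \langof{\aut_{P_i}}$, so there is an accepting macrorun $\rho^i$ on $w$ for each $i$. Assembling these (they agree on $H$-components) yields an accepting macrorun of the product on $w$, so $w \in \langof{\modcompl(\alg^1_{P_1}, \dots, \alg^n_{P_n}, \aut)}$. For the converse, if $w \in \langof{\modcompl(\alg^1_{P_1}, \dots, \alg^n_{P_n}, \aut)}$, take an accepting product macrorun, project it to get accepting macroruns $\rho^i$ of each $\modcompl(\alg^i_{P_i}, \aut_{P_i})$, conclude $w \notin \langof{\aut_{P_i}}$ for every $i$ by correctness, and hence $w \notin \bigcup_i \langof{\aut_{P_i}} = \langof{\aut}$.

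The main obstacle, and the step deserving the most care, is the precise formulation of the product-decomposition lemma: showing that a macrorun of $\modcompl(\alg^1_{P_1}, \dots, \alg^n_{P_n}, \aut)$ corresponds bijectively to a tuple of macroruns of the single-block constructions \emph{with a common $H$-component}, and that acceptance is preserved under this correspondence via the colour-renumbering disjointness. This hinges on two facts that must be checked against the construction: (i) the $H$-coordinate in $\modcompl$ is updated purely by $H \mapsto \trans(H, a)$, independently of the $M_i$-coordinates and identical to the $H$-coordinate used inside each $\modcompl(\alg^i_{P_i}, \aut_{P_i})$ — so the "$H$ tracks runs of $\aut$" bookkeeping is genuinely shared and deterministic; and (ii) that $\GetSuccOf{\alg^i_{P_i}}$, as used in the product transition relation, is fed exactly the same triple $(H, M_i, a)$ it would receive in the standalone run of $\modcompl(\alg^i_{P_i}, \aut_{P_i})$. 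Both are immediate from the displayed definitions of $\trans^{\cut}$ and $\colouring^{\cut}$, but spelling them out is what makes the reduction to \cref{def:alg-correct} rigorous rather than hand-waved. The remaining subtlety — that $\langof{\aut} = \bigcup_{i=1}^n \langof{\aut_{P_i}}$ — follows from the standing assumption that $\aut$ has no accepting transitions outside its SCCs together with the fact that $P_1, \dots, P_n$ is a partitioning, i.e. covers all accepting SCCs with pairwise disjoint blocks; I would state this as a short preliminary claim.
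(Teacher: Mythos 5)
Your proposal is correct and follows essentially the same route as the paper's proof: the paper likewise establishes the key claim that $\langof{\modcompl(\alg^1_{P_1}, \dots, \alg^n_{P_n},\aut)} = \bigcap_{i=1}^n\langof{\modcompl(\alg^i_{P_i}, \aut_{P_i})}$ via the run-tuple correspondence with the shared deterministic $H$-component and the conjunction of (disjointly renumbered) acceptance conditions, and then combines $\bigcup_{i=1}^n \langof{\aut_{P_i}} = \langof{\aut}$ with the per-block correctness of Definition~\ref{def:alg-correct}. Your spelled-out justification of the union identity and of $\aut_{P_i} \modelssurrp \phi_{\alg^i}$ only makes explicit what the paper leaves implicit.
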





\vspace{-2.0mm}
\section{Modular Complementation of Elevator Automata}\label{sec:modular-elevator}
\vspace{-2.0mm}

In this section, we first give partial algorithms to complement partition blocks with only
accepting IWCs (\cref{sec:iwc-complement}) and partition blocks with only
DACs (\cref{sec:dac-complement}).
Then, in \cref{sec:elevator-upper}, we show that using our algorithm, the upper
bound on the size of the complement of elevator BAs is in $\bigO(4^n)$, which
is \emph{exponentially better} than the known upper bound $\bigO(16^n)$
established in~\cite{HavlenaLS22a}.

\vspace{-2.0mm}
\subsection{Complementation of Inherently Weak Accepting Components}\label{sec:iwc-complement}
\vspace{-1.0mm}



First, we introduce a~partial algorithm $\iwc$ with the condition~$\phi_\iwc$ 
specifying that all SCCs in the~partition block $P$ are \emph{accepting} IWCs.
Let~$P$ be a~partition block of~$\aut$ such that $\aut_P \models \phi_\iwc$. 
Our proposed approach makes use of the Miyano-Hayashi construction~\cite{MiyanoH84}.
Since in accepting IWCs, all runs are accepting, the idea of the construction
is to accept words such that all runs over the words eventually leave~$P$.


Therefore, we use a pair $(C, B)$ of sets of states as a~macrostate for complementing~$P$.
Intuitively, we use $C$ to denote the set of all runs of~$\aut$ that are in~$P$ ($C$~for ``\emph{check}'').
The set $B\subseteq C$ represents the runs being inspected whether they
leave~$P$ at some point ($B$~for ``\emph{breakpoint}'').
Initially, we let $C = \inits \cap P$ and also sample into breakpoint all runs
in~$P$, i.e., set~$B = C$.
Along reading an $\omega$-word $w$, if all runs that have entered~$P$
eventually leave~$P$, i.e., $B$ becomes empty infinitely often, the complement
language of~$P$ should contain $w$ (when $B$ becomes empty, we sample $B$ with
all runs from the current~$C$).
We formalize $\iwc_P$ as a~partial procedure in the framework
from \cref{sec:basic_synchronous} as follows:

\begin{itemize}
  \item  $\onetypeof {\iwc_P} = 2^P \times 2^P$,
    \hfill
    $\colourtypeof{\iwc_P} = \{\tcacc{1}{0}\}$, 
    \hfill
    $\GetInitOf{\iwc_P} = \{(\inits \cap P, \inits \cap P)\}$, 
  \item  $\GetAccOf{\iwc_P} = \Infof {\tcacc{1}{0}}$, and
    \hfill
    $\GetSuccOf{\iwc_P} (H, (C, B), a) = \{((C', B'), \alpha)\}$ where
    \begin{itemize}
      \item  $C' = \trans(H, a) \cap P$, 
     \newline
    \vspace{-1mm}
    \begin{minipage}[t]{0.49\textwidth}
      \item  $B' = \begin{cases}
          C'      & \text{if } B^\star = \emptyset \text{ for } B^\star = \trans(B, a) \cap C',  \\
          B^\star & \text{otherwise, and}
        \end{cases}$
    \end{minipage}
    \noindent
    \hspace*{1cm}
    \begin{minipage}[t]{0.49\textwidth}
      \item  $\alpha = \begin{cases}
          \{\tcacc{1}{0}\} & \text{if } B^\star = \emptyset \text{ and}\\
          \emptyset & \text{otherwise.}
      \end{cases}$
    \end{minipage}
    \end{itemize}
\end{itemize}


We can see that checking whether $w$ is accepted by the complement of $P$
reduces to check whether~$B$ has been cleared infinitely often.
Since every time when $B$ becomes empty, we emit the color $\tcacc{1}{0}$, we have that $w$ is not accepted within $P$ if and only if $\tcacc{1}{0}$ occurs infinitely often.
Note that the transition function $\GetSuccOf{\iwc_P}$ is deterministic, i.e., there is exactly one successor.
%
\begin{restatable}{lemma}{lemCorrIWCS}
\label{lem:correctness-iwcs}
  The partial algorithm $\iwc$ is correct.
\end{restatable}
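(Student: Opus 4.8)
The plan is to show that for every BA $\but = (\states, \trans, \inits, \acc)$ with $\but \modelssurrp \phi_\iwc$ (so $P$ is a partition block consisting only of accepting IWCs and $\but$ has no accepting transitions outside $P$), the TELA $\modcompl(\iwc_P, \but)$ recognizes exactly $\Sigma^\omega \setminus \langof{\but}$. First I would unpack what a macrorun of $\modcompl(\iwc_P, \but)$ on a word $w$ looks like: since $\iwc_P$ has a deterministic $\GetSuccOf{}$ and the $H$-component is just the deterministic subset construction $\trans(\cdot, a)$, there is exactly one macrorun $\rho = (H_0, (C_0, B_0))(H_1, (C_1, B_1))\cdots$ on $w$, with $H_i = \trans(\inits, w_0\cdots w_{i-1})$, $C_i = H_i \cap P$, and $B_i$ the breakpoint set evolving as specified. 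The macrorun is accepting iff $\tcacc{1}{0}$ is emitted infinitely often iff the breakpoint $B_i$ becomes empty infinitely often. So the statement reduces to the equivalence: $w \notin \langof{\but}$ iff $B_i = \emptyset$ for infinitely many $i$.

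The heart of the argument is the classical Miyano--Hayashi correctness, adapted to the partition-block setting. For the ``only if'' direction, suppose $w \notin \langof{\but}$. Since $\but$ has accepting transitions only inside $P$ and every SCC of $P$ is an accepting IWC (so every cycle inside $P$ is accepting), any run of $\but$ on $w$ that stays eventually inside $P$ — equivalently, visits $P$ infinitely often and eventually never leaves it — would traverse an accepting cycle infinitely often and hence be accepting, contradicting $w \notin \langof{\but}$. Therefore every run entering $P$ must eventually leave $P$. A König's-lemma / finite-branching argument on the (finitely many) runs tracked in $C_i$ then shows the breakpoint must be cleared infinitely often: after any point, the set of runs currently in the breakpoint is a finite antichain of finite partial runs inside $P$, none of which can be extended to stay in $P$ forever, so after finitely many more steps all of them have left, emptying $B$; then $B$ is re-sampled from the current $C'$ and the argument repeats. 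Conversely, for the ``if'' direction, suppose $B_i = \emptyset$ infinitely often but, for contradiction, $w \in \langof{\but}$, witnessed by an accepting run $\pi$ of $\but$. Since $\but$'s accepting transitions all lie within $P$, $\pi$ takes accepting transitions infinitely often only if $\pi$ visits $P$ infinitely often; because an IWC is strongly connected and $\pi$ can only enter/leave finitely many distinct SCCs before settling, $\pi$ must eventually stay inside a single accepting IWC of $P$ forever. Thus there is an index $j$ such that $\pi_k \in P$ for all $k \ge j$. I would then argue by induction that from step $j$ on, $\pi_k$ is always an element of $B_k$ (this uses the re-sampling rule: whenever $B$ is cleared at some step $\ell \ge j$, the run $\pi$ is in $C'_\ell$ and hence re-inserted into $B$), so $B_k \ne \emptyset$ for all $k \ge j$ — contradicting that $B$ is cleared infinitely often.

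The main obstacle I anticipate is getting the bookkeeping of the breakpoint re-sampling exactly right, in particular making precise the correspondence between ``a partial run of $\but$ inside $P$'' and ``an element tracked in $B_i$'', and handling the interplay with the $H$-component, which injects \emph{new} runs into $P$ at every step via $C' = \trans(H,a) \cap P$. One must be careful that a newly-entering run cannot cause the breakpoint to never clear: this is exactly where the IWC-everything-accepting hypothesis and the ``no accepting transitions outside $P$'' hypothesis of $\modelssurrp$ are essential, ensuring that no run can loiter in $P$ forever without being accepting, so each tracked run (old or new) is eventually purged. A secondary, more routine, obstacle is formally invoking König's lemma over the forest of partial runs, which requires noting that $\states$ (hence $P$) is finite so each $C_i$ is finite and the branching is finite. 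Assembling these two directions with the reduction at the start gives $\langof{\modcompl(\iwc_P, \but)} = \Sigma^\omega \setminus \langof{\but}$, i.e., $\iwc$ is correct.
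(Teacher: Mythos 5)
Your proposal is correct and follows essentially the same route as the paper's proof: reduce correctness to ``$w$ is not accepted within $P$ iff \tcacc{1}{0} is emitted infinitely often,'' prove one direction by K\"onig's lemma on the finitely-branching forest of runs inside $P$ together with the fact that any run staying in an accepting IWC forever is accepting, and the other by showing an accepting run eventually stays in $P$, gets (re-)sampled into the breakpoint, and then blocks further clearings. Only minor bookkeeping differs (e.g., your induction should start at the first clearing after index $j$ rather than at $j$ itself, and ``$B_i$ empty'' should refer to the intermediate set $B^\star$), but these are exactly the details your parenthetical remarks already anticipate.
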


\vspace{-2.0mm}
\subsection{Complementation of Deterministic Accepting Components}\label{sec:dac-complement}
\vspace{-1.0mm}


In this section, we give a partial algorithm $\dac$ with the condition $\phi_\dac$ 
specifying that a~partition block~$P$ consists of \emph{DACs}.
Let~$P$ be a~partition block of~$\aut$ such that $\aut_P \models \phi_\dac$. 
%
Our approach is based on the NCSB family of
algorithms~\cite{BlahoudekHSST16,ChenHLLTTZ18,BlahoudekDS20,HavlenaLS22b} for
complementing SDBAs, in particular the NCSB-MaxRank
construction~\cite{HavlenaLS22b}.
The algorithm utilizes the fact that runs in DACs are deterministic, i.e., they
do not branch into new runs.
Therefore, one can check that a~run is non-accepting if there is a~time point
from which the run does not see accepting transitions any more.
We call such a~run that does not see accepting transitions any more \emph{safe}.
Then, an $\omega$-word~$w$ is not accepted in~$P$ iff all runs over~$w$
in~$P$ either (i)~leave~$P$ or (ii)~eventually become safe.

For checking point~(i), we can use a~similar technique as in algorithm~$\iwc$,
i.e., use a~pair $(C,B)$.
Moreover, to be able to check point~(ii), we also use the set~$S$ that
contains runs that are supposed to be \emph{safe}, resulting in macrostates of
the form $(C,S,B)$\footnote{In contrast to $\iwc$, here we use $C \cup S$ rather than
$C$ to keep track of all runs in~$P$.}.
To make sure that all runs are deterministic, we will use~$\transscc$ instead
of~$\trans$ when computing the successors of~$S$ and~$B$ since there may be
nondeterministic jumps between different DACs in~$P$;
we will not miss any run in~$P$ since if a~run moves between DACs of $P$, it can be
seen as the run leaving~$P$ and a~new run entering~$P$.
Since a~run eventually stays in one SCC, this guarantees that the run will not be missed.

We formalize $\dac_P$ in the top-level framework as follows:
\begin{itemize}
  \item  $\onetypeof{\dac_P} = 2^P \times 2^P \times 2^P$, $\GetInitOf{\dac_P} = \{(\inits \cap P, \emptyset, \inits \cap P)\}$, 
  \item  $\colourtypeof {\dac_P} = \{\tcacc{0}{0}\}$, $\GetAccOf{\dac_P} = \Infof {\tcacc{0}{0}}$, and
  \item  $\GetSuccOf{\dac_P} (H, (C, S, B), a) = U$ such that
    \begin{itemize}
      \item  if $\transacc(S, a) \neq \emptyset$, then $U = \emptyset$ (Runs in $S$ must be \emph{safe}),
      \item  otherwise $U$ contains $((C', S', B'), c)$ where
        \begin{itemize}

          \item $S' = \transscc(S, a) \cap P$ 
          , $C' = (\trans(H, a) \cap P) \setminus S'$, 
          \newline
          \begin{minipage}[t]{0.49\textwidth}
          \item $B' = \begin{cases}
              C'      & \text{if } B^\star = \emptyset \text{ for } B^\star = \transscc(B, a) \text{, }\\
              B^\star & \text{otherwise, and}
            \end{cases}$
          \end{minipage}
          \hspace*{8mm}
          \begin{minipage}[t]{0.49\textwidth}

          \item $c = \begin{cases}
              \{\tcacc{0}{0}\} & \text{if } B^\star = \emptyset \text{,}\\
              \emptyset & \text{otherwise}.
          \end{cases}$
          \end{minipage}
        \end{itemize}
        Moreover, in the case $\transacc(B, a) \cap \transscc(B, a) = \emptyset$, then~$U$ also contains $((C'', S'', C''), \{\tcacc{0}{0}\})$ where $S'' = S' \cup B'$ and $C'' = C' \setminus S''$.
        %
    \end{itemize}
  \end{itemize}
%

\noindent
Intuitively, when $\transacc(B, a) \cap \transscc(B, a) = \emptyset$, we make two guesses:
\begin{inparaenum}[(i)]
  \item  either the runs in~$B$ all become safe (we move them to~$S$) or
  \item  there might be some unsafe runs (we keep them in~$B$).
\end{inparaenum}
Since the runs in~$B$ are deterministic, the number of tracked runs in~$B$ will not increase.
Moreover, if all runs in $B$ are eventually safe, we are guaranteed to move all of them to $S$ at the right time point, e.g., the maximal time point where all runs are safe since the number of runs is finite.
   



\begin{figure}[t]
  \centering
    \begin{tikzpicture}
      \node[scale=0.7] (aex) {
\begin{tikzpicture}[automaton]
  \tikzstyle{every state}=[inner sep=3pt,minimum size=5pt]
  \tikzstyle{empty}=[]
  \tikzstyle{initstate}=[fill=yellow!30]
  
  \path[use as bounding box] (-0.75,-2.75) rectangle (3.5,2.75);

  \node[state,initial,initstate] (p) at (0,0) {$p$};
  \node[state,above right of=p,xshift=10mm] (q) {$q$};
  \node[state,below right of=p,xshift=10mm] (r) {$r$};
  \node[state] (s) at ($(q)!0.5!(r)$) {$s$};

  \node[font=\Large] at (0,2) {$\autex$};
  
  \draw[dotted] ($(p.north west) + (-0.15,0.9)$) rectangle ($(p.south east) + (0.15,-0.15)$);
  \draw[dotted] ($(q.north west) + (-0.15,0.75)$) rectangle ($(q.south east) + (0.75,-0.15)$);
  \node[anchor=south] at ($(q.north) + (0.30,0.75)$) {$P_{0}$};
  \draw[dotted] ($(s.north west) + (-0.35,0.15)$) rectangle ($(r.south east) + (0.75,-0.15)$);
  \node[anchor=north] at ($(r.south) + (0.20,-0.15)$) {$P_{1}$};

  \path[->]
    (p) edge[loop above] node[auto] {$a,b$} (p)
    (p) edge node[above] {$a,b$} (q)
    (q) edge[loop above] node[auto] {$a$} node[anchor=center] {$\bullet$} (q)
    (q) edge[loop right] node {$b$} (q)
    (q) edge node[auto] {$a$} (s)
    (p) edge node[below] {$a$} (r)
    (r) edge[loop right] node[auto] {$b$} node[anchor=center] {$\bullet$} (r)
    (r) edge[bend left] node[left] {$b$} (s)
    (s) edge[bend left] node[auto] {$a$} node[anchor=center] {$\bullet$} (r)
    ;
\end{tikzpicture}};
      \node[anchor=west,scale=0.7] (comp) at (aex.east) {
\begin{tikzpicture}[automaton]
  \tikzstyle{every state}=[inner sep=3pt,minimum size=5pt]
  \tikzstyle{empty}=[]
  \tikzstyle{initstate}=[fill=yellow!30]
  \tikzstyle{uberstate}=[
    rounded corners,draw,anchor=base,
    rectangle split,rectangle split horizontal,rectangle split parts=3,
    rectangle split part align=base,
    rectangle split part fill={black!20, blue!30, green!30}]
  \newcommand{\ustate}[6]{$#1$\nodepart{two}$#2, #3, #4$\nodepart{three}$#5, #6$}

  \path[use as bounding box] (8.5,0.4) rectangle (-1.5,-4.7);
  

  \node[uberstate,initial] (p) at (0,0) {\ustate p \emptyset \emptyset \emptyset \emptyset \emptyset};
  \node[uberstate,below of=p,xshift=0mm] (pq) {\ustate{p+q} q \emptyset q \emptyset \emptyset};
  \node[uberstate,below of=pq,xshift=0mm] (pqsafe) {\ustate{p+q} \emptyset q \emptyset \emptyset \emptyset};

  \node[uberstate,right of=p,xshift=30mm] (pqr) {\ustate{p+q+r} q \emptyset q r r};
  \node[uberstate,below of=pqr,xshift=0mm] (pqrs) {\ustate{p+q+r+s} q \emptyset q {r+s} {r+s}};
  \node[uberstate,below of=pqrs,xshift=0mm] (pqrsbreak) {\ustate{p+q+r+s} q \emptyset q {r+s} {r}};
  \node[uberstate,below of=pqrsbreak,xshift=0mm] (pqrssafe) {\ustate{p+q+r+s} \emptyset q \emptyset {r+s} {r+s}};

  \path[->]
    (p) edge pic[pos=0.3] {acc=0} pic[pos=0.6] {acc=1} pic[auto] {l=$b$} (pq)
    (pq) edge pic[pos=0.3] {acc=0} pic[pos=0.6] {acc=1} pic[auto] {l=$b$} (pqsafe)
    (pq.182) edge[out=210,in=150,loop,distance=9mm] pic {acc=1} pic[auto] {l=$b$} (pq.178)
    (pqsafe.182) edge[out=210,in=150,loop,distance=9mm] pic[pos=0.25] {acc=0} pic[pos=0.75] {acc=1} pic[auto] {l=$b$} (pqsafe.178)
    (p) edge pic[pos=0.41] {acc=0} pic[pos=0.59] {acc=1} pic[auto] {l=$a$} (pqr)
    (pqr) edge[bend left] pic {acc=1} pic[auto] {l=$a$} (pqrs)
    (pqr) edge[bend right] node[left] {$b$} (pqrs)
    (pqrs.12) edge[out=120,in=60,loop,distance=6mm] node[auto] {$b$} (pqrs.11)
    (pq) edge pic {acc=1} pic[auto] {l=$a$} (pqrs)
    (pqrs) edge[bend left] node[auto] {$a$} (pqrsbreak)
    (pqrsbreak) edge[bend left] node[auto] {$b$} (pqrs)
    (pqrsbreak.160) edge[bend left] pic {acc=1} pic[auto] {l=$a$} (pqrs.200)
    (pqrsbreak) edge pic {acc=0} pic[auto] {l=$b$} (pqrssafe)
    (pqrssafe.182) edge[out=210,in=150,loop,distance=9mm] pic {acc=0} pic[auto] {l=$b$} (pqrssafe.178)
    (pqrs.187) edge[bend right] pic {acc=0} pic[left] {l=$b$} (pqrssafe.173)
    (pqr.355) edge[bend left=40] pic {acc=0} pic[auto] {l=$b$} (pqrssafe.2)
    ;
\end{tikzpicture}};
    \end{tikzpicture}
  \caption{%
    Left: BA $\autex$ (dots represent accepting transitions).
    Right: the outcome of $\modcompl(\dac_{P_0}, \iwc_{P_1},\autex)$ with
    $\acccond\colon \Infof{\protect\tacc{0}} \land \Infof{\protect\tacc{1}}$.
    States are given as $(H, (C_0, S_0, B_0), (C_1, B_1))$; 
    to avoid too many braces, sets are given as sums.
    }
  \label{fig:example}
\end{figure}
%

As mentioned above, $w$~is not accepted within $P$ iff all runs over $w$ either
(i)~leave~$P$ or (ii)~become safe.
In the context of the presented algorithm, this corresponds to
(i)~$B$~becoming empty infinitely often and
(ii)~$\transacc(S, a)$ never seeing an accepting transition.
Then we only need to check if there exists an infinite sequence
of macrostates $\hat{\rho} = (C_0, S_0, B_0) \ldots$
\mbox{that emits~$\tcacc{0}{0}$ infinitely often.}

\begin{restatable}{lemma}{lemCorrDAC}
  \label{lem:dac-correctness}
  The partial algorithm $\dac$ is correct.
\end{restatable}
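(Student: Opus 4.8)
The plan is to fix a BA $\but = (\states, \trans, \inits, \acc)$ with $\but \modelssurrp \phi_\dac$, so that $P$ is a partition block consisting of DACs and $\but$ has no accepting transitions outside $P$, and to show $\langof{\modcompl(\dac_P, \but)} = \Sigma^\omega \setminus \langof{\but}$. The argument splits into the two standard inclusions. Throughout, fix a word $w$; since $\but$ is complete, the set $H_i := \trans(\inits, \wordof 0 \cdots \wordof{i-1})$ is well-defined, and any macrorun of $\modcompl(\dac_P, \but)$ has the form $(H_0, (C_0,S_0,B_0)), (H_1,(C_1,S_1,B_1)), \dots$, where the $H$-component is uniquely determined by $w$ and the $(C,S,B)$-components evolve as in $\GetSuccOf{\dac_P}$. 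I would first record the basic invariant, provable by induction on $i$: $C_i \cup S_i$ is exactly the set of states in $P$ reachable from $\inits$ by $\wordof 0 \cdots \wordof{i-1}$ that lie on a run staying in $P$ from some earlier point — more precisely, $C_i \cup S_i = (\trans(H_{i-1}, \wordof{i-1}) \cap P)$ for $i \geq 1$ — and that $S_i, B_i$ are propagated using $\transscc$ so that a state in $S_i$ or $B_i$ has a unique predecessor within its DAC (determinism of DACs). I would also note $B_i \subseteq C_i$ and $S_i \cap C_i = \emptyset$.

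For the direction $\langof{\modcompl(\dac_P,\but)} \subseteq \Sigma^\omega \setminus \langof{\but}$: suppose $w$ has an accepting macrorun $\hat\rho$, so $\tcacc{0}{0}$ occurs infinitely often, i.e., $B^\star = \emptyset$ (equivalently $B_i$ is "cleared") infinitely often, and also the branch taken never hits the case $\transacc(S,a) \neq \emptyset$, hence $S$ never sees an accepting transition along $\hat\rho$. Assume for contradiction $w \in \langof{\but}$, witnessed by an accepting run $\sigma$ of $\but$. Since $\but$ has no accepting transitions outside $P$, $\sigma$ visits accepting transitions infinitely often, all inside $P$, so $\sigma$ eventually stays in $P$ (it eventually stays in one SCC, and that SCC is an accepting DAC in $P$); let $j$ be such that $\sigma_k \in P$ for all $k \geq j$. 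I would then show $\sigma$ gets "trapped" in the macrorun: by the invariant, $\sigma_k \in C_k \cup S_k$ for $k \geq j$; if $\sigma_k$ ever enters $S$ then, by determinism and the $\transscc$-propagation of $S$, $\sigma$ stays in $S$ forever, contradicting that $\sigma$ sees accepting transitions infinitely often while $S$ sees none; hence $\sigma_k \in C_k$ for all $k \geq j$. Next, after the last clearing of $B$ at or after stage $j$, every subsequent clearing resets $B$ to $C'$, which contains $\sigma$; and once $\sigma \in B$, determinism plus $\transscc$-propagation of $B$ keeps $\sigma \in B$ until the next clearing. So between consecutive clearings $\sigma$ stays in $B$, and at a clearing $B^\star = \transscc(B,a) = \emptyset$, which is impossible because $\sigma$'s transition keeps $\sigma$ inside $P$ in its SCC, so $\sigma$'s successor lies in $\transscc(B,a)$. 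This contradiction gives $w \notin \langof{\but}$. (The extra "two-guess" branch only ever moves runs from $B$ into $S$; since $\sigma$ never stays in $S$, the guess that moves $\sigma$ to $S$ is not on the accepting macrorun, so this does not affect the argument.)

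For the direction $\Sigma^\omega \setminus \langof{\but} \subseteq \langof{\modcompl(\dac_P,\but)}$: suppose $w \notin \langof{\but}$. Then no run of $\but$ on $w$ is accepting; combined with the fact that accepting transitions only occur in $P$, every run that eventually stays in $P$ is, from some point on, \emph{safe} (sees no more accepting transitions). I would construct an accepting macrorun by choosing, at each stage where the two-guess branch is available, the guess "$B$ becomes safe" at the right moment and "keep in $B$" otherwise. Concretely: because there are finitely many states and every run staying in $P$ is eventually safe, there is a stage $N$ after which, for the finitely many runs currently tracked in $C$, all those that remain in $P$ forever are already safe; at the first clearing after $N$ I take the branch that moves $B' $ into $S''$. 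One must argue this choice is legal (the side-condition $\transacc(B,a)\cap\transscc(B,a)=\emptyset$ holds because those runs are now safe) and that afterwards $\transacc(S,a) = \emptyset$ forever (so the macrorun never dies), using that safe runs stay safe and that $S$ is propagated by $\transscc$. Finally I must show $B$ is cleared infinitely often along this macrorun: any run that ever enters $B$ either leaves $P$ (removed from $B$, since $B$ is propagated by $\transscc$) or is moved into $S$ by a two-guess step; since all runs in $P$ eventually leave $P$ or become safe, and safe runs get harvested into $S$, $B$ drains infinitely often, emitting $\tcacc{0}{0}$ infinitely often, so $\hat\rho$ is accepting.

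The main obstacle I anticipate is the careful bookkeeping in the second inclusion: one must pin down a single, explicit scheduling of the nondeterministic two-guess branch that simultaneously (a) never triggers the dead-end condition $\transacc(S,a)\neq\emptyset$, (b) respects the side-condition enabling the guess, and (c) still clears $B$ infinitely often — this requires a finiteness/monotonicity argument (the number of tracked runs in $B$ is non-increasing because DAC runs are deterministic, and "safe" is an absorbing property) to guarantee that "wait until the maximal time point where all relevant runs are safe" is well-defined. The first inclusion is comparatively routine once the determinism-plus-$\transscc$ invariant is set up, which is why I would prove that invariant as a standalone claim before either direction.
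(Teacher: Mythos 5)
Your first inclusion (accepting macrorun $\Rightarrow w\notin\langof{\but}$) matches the paper's argument in substance: you use the invariant $C_k\cup S_k=\trans(H_{k-1},w_{k-1})\cap P$, determinism of $\transscc$ inside a DAC, and the fact that an accepting run $\sigma$ eventually stays in one SCC of $P$, so it would either get stuck in $S$ (killing the macrorun when it next takes an accepting transition) or stay in $B$ and prevent further emissions. Two small repairs: choose $j$ so that $\sigma$ stays in a \emph{single SCC} from $j$ on (not merely in $P$), since the $S$/$B$ propagation uses $\transscc$; and drop the claimed invariant $B_i\subseteq C_i$ (it can fail because $B'=\transscc(B,a)$ is not intersected with $C'$), though nothing in your argument actually needs it.

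The second inclusion contains a genuine gap, precisely at the step you flag as the main obstacle. Your schedule is: pick $N$ after which all runs currently tracked in $C$ \emph{that remain in $P$ forever} are safe, and at the first clearing after $N$ take the extra branch. At a clearing we have $B^\star=\emptyset$, hence $B'=C'$, so this branch sets $S''=S'\cup C'$, i.e.\ it dumps \emph{all} of $C'$ into $S$ --- including states whose deterministic in-SCC continuation still takes an accepting transition before leaving its SCC. Such states are not excluded by your choice of $N$ (they do not ``remain in $P$ forever''), yet once they sit in $S$ the next accepting $\transscc$-step makes $\transacc(S,a)\neq\emptyset$ and the macrorun has no successor at all, so your claim ``afterwards $\transacc(S,a)=\emptyset$ forever'' fails. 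Concretely: take $p\ltr{a}p$, $p\ltr{a}q_1$, a DAC $\{q_1,q_2\}$ with accepting $q_1\ltr{a}q_2$ and $q_2\ltr{b}q_1$, and $q_2\ltr{a}$ leading outside $P$; on $w=a^\omega$ we have $w\notin\langof{\but}$, the first clearing puts $q_1$ into $C'$, your branch moves $q_1$ into $S$, and one step later the macrorun dies. The paper's construction avoids this by scheduling differently: it follows the normal branch (so colour is emitted by ordinary clearings whenever $B$ empties) and takes the extra branch only at a moment when \emph{every state currently in $B$} is safe in the $\transscc$-sense; such a moment exists because $B$ evolves deterministically, never grows, and each of its finitely many continuations either dies or, since $w$ is not accepted within $P$, sees only finitely many accepting transitions. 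With that schedule, $S$ only ever receives safe states, safety is preserved by $\transscc$, and a state in $S$ never has an accepting outgoing transition (accepting transitions lie inside SCCs), which is what makes (a), (b), (c) hold simultaneously. So your overall plan is the right one, but the explicit witness schedule must be changed from ``dump $C'$ at the first clearing after $N$'' to ``wait until the current $B$ has drained to safe states, then harvest $B$''.
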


It is worth noting that when the given partition block~$P$ contains all DACs
of~$\aut$, we can still use the construction above, while the construction
in~\cite{HavlenaLS22b} only works on~SDBAs.

\vspace{-2mm}
\begin{example}
In \cref{fig:example}, we give an example of the run of our algorithm on the
BA~$\autex$.
The BA contains three SCCs, one of them (the one containing~$p$) non-accepting
(therefore, it does not need to occur in any partition block).
The partition block~$P_0$ contains a~single DAC, so we can use
algorithm~$\dac$, and the partition block~$P_1$ contains a~single accepting
IWC, so we can use~$\iwc$.
The resulting $\modcompl(\dac_{P_0}, \iwc_{P_1},\autex)$ uses two colours,
$\tacc{0}$~from~$\dac$ and~$\tacc{1}$~from~$\iwc$.
The acceptance condition is $\Infof{\tacc 0} \land \Infof{\tacc 1}$.
\qed
\end{example}


\newcommand{
\begin{figure}[t]
  \centering
    \begin{tikzpicture}
      \node (bex) {\begin{tikzpicture}[automaton]
    \tikzstyle{every state}=[inner sep=3pt,minimum size=5pt]
    \tikzstyle{empty}=[]
    \tikzstyle{initstate}=[fill=yellow!30]
  
    \path[use as bounding box] (-1.5,-1.125) rectangle (2.75,1.5);

    \node[state,initial,initstate] (p) at (0,0) {$p$};
    \node[state,right of=p,xshift=0mm] (q) {$q$};
    
    \node[font=\large] at (-1.25,0) {$\butex$};

    \draw[dotted] ($(p.north west) + (-0.15,0.9)$) rectangle ($(p.south east) + (0.15,-0.9)$);
    \node[anchor=south] at ($(p.north) + (0,0.75)$) {$P_{0}$};
    \draw[dotted] ($(q.north west) + (-0.15,0.9)$) rectangle ($(q.south east) + (0.15,-0.15)$);
    \node[anchor=south] at ($(q.north) + (0,0.75)$) {$P_{1}$};

  
    \path[->]
      (p) edge[loop above] node[auto] {$a$} node[anchor=center] {$\bullet$} (p)
      (p) edge[loop below] node[auto] {$b$} (p)
      (p) edge node[above] {$a$} (q)
      (q) edge[loop above] node[auto] {$b$} node[anchor=center] {$\bullet$} (q)
      ;
  \end{tikzpicture}};
      \node[anchor=west] (comp) at (bex.east) {\begin{tikzpicture}[automaton]
    \tikzstyle{every state}=[inner sep=3pt,minimum size=5pt]
    \tikzstyle{empty}=[]
    \tikzstyle{initstate}=[fill=yellow!30]
    \tikzstyle{uberstate}=[
      rounded corners,draw,anchor=base,
      rectangle split,rectangle split horizontal,rectangle split parts=3,
      rectangle split part align=base,
      rectangle split part fill={black!20, blue!30, green!30}]
    \newcommand{\ustate}[4]{$#1$\nodepart{two}$#2$\nodepart{three}$#3, #4$}
  
    \path[use as bounding box] (-1.5,-1.125) rectangle (4.75,1.5);
  
    \node[uberstate,initial] (p) at (0,0) {\ustate p p \emptyset \emptyset};
    \node[uberstate,right of=p,xshift=15mm] (pq) {\ustate{p+q} p q q};

    \path[->]
      (p) edge pic[pos=0.3] {cacc={4}{0}} pic[pos=0.6] {acc=1} pic[auto] {l=$a$} (pq)
      (p) edge[loop above,distance=9mm] pic {acc=1} pic[auto] {l=$b$} (p)
      (pq) edge[loop above,distance=9mm] node {$b$} (pq)
      (pq) edge[in=-115,out=-65,loop,distance=9mm] pic[pos=0.25] {acc=1} pic[pos=0.75] {cacc={4}{0}} pic[auto] {l=$a$} (pq)
      ;
  \end{tikzpicture}};
    \end{tikzpicture}
  \caption{%
    Left: $\butex$.
    \mbox{Right: $\modcompl(\idac_{P_{0}}, \iwc_{P_{1}},\butex)$ with $\acccond\colon \Finof {\protect\tcacc{4}{0}} \wedge \Infof{\protect\tacc{1}}$.}
    }
  \label{fig:example-aut-idec-result}  
\end{figure}
}[0]{
\begin{figure}[t]
  \centering
    \begin{tikzpicture}
      \node (bex) {\begin{tikzpicture}[automaton]
    \tikzstyle{every state}=[inner sep=3pt,minimum size=5pt]
    \tikzstyle{empty}=[]
    \tikzstyle{initstate}=[fill=yellow!30]
  
    \path[use as bounding box] (-1.5,-1.125) rectangle (2.75,1.5);

    \node[state,initial,initstate] (p) at (0,0) {$p$};
    \node[state,right of=p,xshift=0mm] (q) {$q$};
    
    \node[font=\large] at (-1.25,0) {$\butex$};

    \draw[dotted] ($(p.north west) + (-0.15,0.9)$) rectangle ($(p.south east) + (0.15,-0.9)$);
    \node[anchor=south] at ($(p.north) + (0,0.75)$) {$P_{0}$};
    \draw[dotted] ($(q.north west) + (-0.15,0.9)$) rectangle ($(q.south east) + (0.15,-0.15)$);
    \node[anchor=south] at ($(q.north) + (0,0.75)$) {$P_{1}$};

  
    \path[->]
      (p) edge[loop above] node[auto] {$a$} node[anchor=center] {$\bullet$} (p)
      (p) edge[loop below] node[auto] {$b$} (p)
      (p) edge node[above] {$a$} (q)
      (q) edge[loop above] node[auto] {$b$} node[anchor=center] {$\bullet$} (q)
      ;
  \end{tikzpicture}};
      \node[anchor=west] (comp) at (bex.east) {\begin{tikzpicture}[automaton]
    \tikzstyle{every state}=[inner sep=3pt,minimum size=5pt]
    \tikzstyle{empty}=[]
    \tikzstyle{initstate}=[fill=yellow!30]
    \tikzstyle{uberstate}=[
      rounded corners,draw,anchor=base,
      rectangle split,rectangle split horizontal,rectangle split parts=3,
      rectangle split part align=base,
      rectangle split part fill={black!20, blue!30, green!30}]
    \newcommand{\ustate}[4]{$#1$\nodepart{two}$#2$\nodepart{three}$#3, #4$}
  
    \path[use as bounding box] (-1.5,-1.125) rectangle (4.75,1.5);
  
    \node[uberstate,initial] (p) at (0,0) {\ustate p p \emptyset \emptyset};
    \node[uberstate,right of=p,xshift=15mm] (pq) {\ustate{p+q} p q q};

    \path[->]
      (p) edge pic[pos=0.3] {cacc={4}{0}} pic[pos=0.6] {acc=1} pic[auto] {l=$a$} (pq)
      (p) edge[loop above,distance=9mm] pic {acc=1} pic[auto] {l=$b$} (p)
      (pq) edge[loop above,distance=9mm] node {$b$} (pq)
      (pq) edge[in=-115,out=-65,loop,distance=9mm] pic[pos=0.25] {acc=1} pic[pos=0.75] {cacc={4}{0}} pic[auto] {l=$a$} (pq)
      ;
  \end{tikzpicture}};
    \end{tikzpicture}
  \caption{%
    Left: $\butex$.
    \mbox{Right: $\modcompl(\idac_{P_{0}}, \iwc_{P_{1}},\butex)$ with $\acccond\colon \Finof {\protect\tcacc{4}{0}} \wedge \Infof{\protect\tacc{1}}$.}
    }
  \label{fig:example-aut-idec-result}  
\end{figure}
}

\vspace{-2.0mm}
\subsection{Upper-bound for Elevator Automata Complementation}\label{sec:elevator-upper}
\vspace{-1.0mm}

We now give an upper bound on the size of the complement generated by our
algorithm for elevator automata, which significantly improves the best
previously known upper bound of $\bigO(16^n)$~\cite{HavlenaLS22a} to
$\bigO(4^n)$, the same as for SDBAs, which are a~strict subclass of elevator
automata~\cite{BlahoudekHSST16} (we note that this upper bound cannot be
obtained by a~determinization-based algorithm, since
determinization of SDBAs is in $\Omega(n!)$~\cite{EsparzaKRS17,Loding99}).

\vspace{-1mm}
\begin{restatable}{theorem}{thmElevatorBound}
  \label{thm:elevator_upper_bound}
  Let~$\aut$ be an elevator automaton with~$n$ states.
  Then there exists a~BA with~$\bigO(4^n)$ states accepting the complement
  of~$\langof \aut$.
\end{restatable}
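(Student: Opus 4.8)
The plan is to instantiate the synchronous framework of \cref{sec:basic_synchronous} with exactly two partial algorithms: one $\iwc$ instance handling a single partition block $P_{\mathrm{IW}}$ that collects \emph{all} accepting IWCs of $\aut$, and one $\dac$ instance handling a single partition block $P_{\mathrm{DA}}$ that collects \emph{all} DACs of $\aut$. Since $\aut$ is an elevator automaton, it has no NACs, so $P_{\mathrm{IW}}, P_{\mathrm{DA}}$ is a genuine partitioning of $\aut$ (after discarding non-accepting SCCs, which by assumption carry no accepting transitions and hence need not appear in any block). By \cref{lem:correctness-iwcs} and \cref{lem:dac-correctness} both partial algorithms are correct, so \cref{thm:correctness} gives $\langof{\modcompl(\dac_{P_{\mathrm{DA}}}, \iwc_{P_{\mathrm{IW}}}, \aut)} = \Sigma^\omega \setminus \langof{\aut}$. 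What remains is purely a counting argument on the macrostates of this TELA, followed by a cheap conversion of its acceptance condition to \buchi.

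First I would bound the number of macrostates. A macrostate has the shape $(H, (S, C, B), (C', B'))$ where $H \subseteq \states$, the triple $(S,C,B)$ is a $\dac$-macrostate with $S, C, B \subseteq P_{\mathrm{DA}}$, and the pair $(C', B')$ is an $\iwc$-macrostate with $C', B' \subseteq P_{\mathrm{IW}}$. The key structural invariants, which I would extract from the two constructions, are: in the $\dac$-component the sets $S$ and $C$ are disjoint (by the definition $C' = (\trans(H,a)\cap P)\setminus S'$) and $B \subseteq C$; in the $\iwc$-component $B' \subseteq C'$; and $S \cup C \subseteq H$, $C' \subseteq H$ — in fact $(S\cup C) \subseteq H\cap P_{\mathrm{DA}}$ and $C' \subseteq H \cap P_{\mathrm{IW}}$, while moreover $H$ is exactly the set $\trans(\inits, u)$ for the word $u$ read so far, so $H$ is determined by the run prefix but, crucially, once $H$ is fixed, the remaining components live inside $H$. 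For a fixed state $q$ of $\aut$: if $q \notin H$ it contributes one choice (``absent''); if $q \in H \cap P_{\mathrm{IW}}$, it independently lies in none of $C',B'$, or in $C'\setminus B'$, or in $C'\cap B'$ — but $C'$ is forced to be $\trans(H,a)\cap P_{\mathrm{IW}}$, hence essentially determined, leaving at most the binary ``in $B'$ or not'', so $q$ contributes a constant number of possibilities; if $q \in H \cap P_{\mathrm{DA}}$, then since $S, C$ partition a subset of $H\cap P_{\mathrm{DA}}$ and $B\subseteq C$, $q$ is in one of $\{$outside $S\cup C$, in $S$, in $C\setminus B$, in $C\cap B\}$ — four possibilities. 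Multiplying over all $n$ states and combining the ``absent'' case with the others gives a bound of the form $c^n$ with the dominant factor $4$ coming from the $\dac$ states; a careful accounting (absent / in-$S$ / in-$C\setminus B$ / in-$C\cap B$ is already $4$ options and subsumes the $H$-membership bookkeeping for $P_{\mathrm{DA}}$ states) yields $\bigO(4^n)$ macrostates for $\modcompl(\dac_{P_{\mathrm{DA}}}, \iwc_{P_{\mathrm{IW}}}, \aut)$.

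Finally I would handle the acceptance condition. The produced TELA has $\acccond^{\cut} = \Infof{\tcacc{0}{0}} \land \Infof{\tcacc{1}{0}}$, i.e.\ a generalized \buchi condition with two colours. A generalized \buchi automaton with $m$ states and $k$ colours is convertible to an equivalent \buchi automaton with $km$ states by the standard round-robin counter construction; here $k = 2$, so this only multiplies the state count by $2$, preserving the $\bigO(4^n)$ bound. (Alternatively, one can fold the round-robin directly into $\GetSuccOf{\iwc}$ as discussed in \cref{sec:optimizations}, keeping a single colour.) This establishes the existence of a \buchi automaton with $\bigO(4^n)$ states recognizing $\Sigma^\omega\setminus\langof{\aut}$. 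The main obstacle is the macrostate count: one must verify the disjointness/subset invariants of the $\dac$ construction carefully enough to see that each state of $\aut$ genuinely contributes only a $4$-way (not, say, $8$-way) choice when the $H$-, $S$-, $C$-, $B$-components are considered jointly rather than as independent power sets — the naive bound $2^{|\states|}\cdot(2^{|P_{\mathrm{DA}}|})^3\cdot(2^{|P_{\mathrm{IW}}|})^2$ is far too weak, and the improvement to $4^n$ hinges entirely on these correlations together with the fact that $C'$ in the $\iwc$-component and $C\cup S$ in the $\dac$-component are functionally determined by $H$ and the transition structure.
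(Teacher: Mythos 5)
Your proposal is correct and follows essentially the same route as the paper: merge all accepting IWCs into one block handled by $\iwc$ and all DACs into one block handled by $\dac$, invoke \cref{thm:correctness} with \cref{lem:correctness-iwcs} and \cref{lem:dac-correctness} for correctness, count per-state options (2 for states outside the blocks, 3 for IWC states, 4 for DAC states, using the fact that $H$-membership of block states is determined by the block components), and pay a constant factor for degeneralizing the two-colour condition to \buchi (or avoid it via the shared breakpoint). Your explicit justification of the $C=H\cap P$ / $C\cup S=H\cap P$ correlations is a slightly more careful rendering of the same counting the paper performs.
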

\begin{proof}[Sketch]
Let~$Q_W$ be all states in accepting IWCs, $Q_D$~be all states in DACs, and
$Q_N$~be the remaining states, i.e., $Q = Q_W \uplus Q_D \uplus Q_N$.
We make two partition blocks: $P_0 = Q_W$ and $P_1 = Q_D$ and use~$\iwc$ and
$\dac$ respectively as the partial algorithms, with macrostates of the form
$(H, (C_0, B_0), (C_1, S_1, B_1))$.
For each state~$q_N \in Q_N$, there are two options: either $q_N \notin H$ or
$q_N \in H$.
For each state~$q_W \in Q_W$, there are three options:
\begin{inparaenum}[(i)]
  \item  $q_W \notin C_0$,
  \item  $q_W \in C_0 \setminus B_0$, or
  \item $q_W \in C_0 \cap B_0$.
\end{inparaenum}
Finally, for each $q_D \in Q_D$, there are four options:
\begin{inparaenum}[(i)]
  \item  $q_D \notin C_1 \cup S_1$,
  \item  $q_D \in S_1$,
  \item  $q_D \in C_1 \setminus B_1$, or
  \item  $q_D \in C_1 \cap B_1$.
\end{inparaenum}
Therefore, the total number of macrostates is $2 \cdot 2^{|Q_N|} \cdot
3^{|Q_W|} \cdot 4^{|Q_D|} \in \bigO(4^n)$ where the initial factor~$2$ is due
to degeneralization from two to one colour (the two colours can actually be
avoided by using our shared breakpoint optimization from
\cref{sec:shared-breakpoint}).
\qed
\end{proof}

\vspace{-0.0mm}
\section{Optimizations of the Modular Construction}\label{sec:optimizations}
\vspace{-0.0mm}

In this section, we propose optimizations of the basic modular algorithm. 
In \cref{sec:det-init}, we give a partial algorithm to complement initial partition blocks with DACs.
Further, in \cref{sec:postponed}, we propose the postponed construction allowing to 
use automata reduction on intermediate results. In \cref{sec:round-robin},
we propose the round-robin algorithm alleviating the problem with the explosion 
of the size of the Cartesian product of partial successors.
In \cref{sec:shared-breakpoint}, we 
provide an optimization for partial algorithms that are based on the breakpoint construction, 
and, finally, in \cref{sec:simulations}, we show how to employ simulation to
decrease the size of macrostates in the synchronous construction.

\vspace{-0.0mm}
\subsection{Complementation of Initial Deterministic Partition Blocks}\label{sec:det-init}
\vspace{-0.0mm}

Our first optimization is an optimized algorithm~$\idac$ for a~subclass of
partition blocks containing DACs.
In particular, the condition $\varphi_{\idac}$ specifies 
that the partition block~$P$ is deterministic and can be reached only deterministically
in~$\aut$ (i.e., $\aut_P$~after removing redundant states is deterministic). In that case 
we say that~$P$ is an \emph{initial deterministic} partition block.
The algorithm is based on complementation of deterministic BAs into co-\buchi
automata.

The algorithm $\idac_P$ is formalized below:
\begin{itemize}
  \item  $\onetypeof {\idac_P} = P \cup \{ \emptyset \}$,
    \hfill
    $\GetInitOf{\idac_P} = I \cap P$,
    \hfill
    $\colourtypeof{\idac_P} = \{\tcacc{4}{0}\}$,
    \hfill
    $\GetAccOf{\idac_P} = \Finof {\tcacc{4}{0}}$,
  \item  $\GetSuccOf{\idac_P} (H, q, a) = \{ (q', \alpha) \}$ where 

    \noindent
    \begin{minipage}[t]{0.49\textwidth}
      \begin{itemize}
        \item $q' = 
        \begin{cases}
          r & \text{if } \delta(H,a)\cap P = \{ r \} \text{ and} \\
          \emptyset & \text{otherwise},
        \end{cases}$ 
      \end{itemize}
    \end{minipage}
    \begin{minipage}[t]{0.49\textwidth}
      \begin{itemize}
        \item $\alpha = 
        \begin{cases} 
          \{ \tcacc{4}{0} \} & \text{if } q \ltr a {q'} \in F \text{ and}\\
          \emptyset & \text{otherwise.}
        \end{cases}$
      \end{itemize}
    \end{minipage}
\end{itemize}

Intuitively, all runs reach $P$ deterministically, which means that over a
word~$w$, at most one run can reach~$P$.
Thus, we have $|\trans(H, w_{j}) \cap P| = 1 $ for some $j \geq 0$ if there is a run over $w$ to $P$, corresponding to $\delta(H,a)\cap P = \{r\}$ in the construction.
To check whether $w$ is not accepted in $P$, we only need to check whether the
run from $r \in P$ over~$w$ visits accepting transitions only finitely often.
We give an example of complementation of a~BA containing an initial
deterministic partition block in \cref{fig:example-aut-idec-result} in \cref{sec:examples}.
Notice that the use of the $\Fin$ condition helps to obtain a more concise
automaton with only two states (even in this simple example, using $\dac$
instead of $\idac$ would yield a~TELA with 4~states).
\begin{restatable}{lemma}{lemCorrIDAC}
  The partial algorithm $\idac$ is correct.
\end{restatable}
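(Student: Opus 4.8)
The plan is to fix a~BA $\but = (\states, \trans, \inits, \acc)$ with $\but \modelssurrp \varphi_{\idac}$ and an arbitrary word $w \in \Sigma^\omega$, and to show that $w \in \langof{\modcompl(\idac_P, \but)}$ iff $w \notin \langof\but$ (assuming, as usual, that $\but$ is complete; the incomplete case is handled by the accepting sink of the footnote in \cref{sec:basic_synchronous}). First I~would record the easy structural facts about the run of $\modcompl(\idac_P, \but)$ on $w$. Since $\GetSuccOf{\idac_P}$ always returns exactly one pair, and since $|\inits \cap P| \le 1$ because $P$ is reached deterministically, there is a~\emph{unique} macrorun $\hat\rho = (H_0, q_0)(H_1, q_1)\cdots$ on $w$ (if $\inits \cap P = \emptyset$ we read the initial $q$-component as $\emptyset$). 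An~induction on $j$ then shows $H_j = \trans(\inits, w_0 \cdots w_{j-1})$ and that $q_j$ is the unique state of $P$ reachable in $\but$ over $w_0 \cdots w_{j-1}$ whenever one exists, i.e.\ $H_j \cap P = \{q_j\}$, and $q_j = \emptyset$ otherwise.

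The heart of the argument is to extract from the structural condition $\varphi_{\idac}$ a~single run of $\but$ that is the only one relevant for acceptance. Concretely, I~would argue that determinism of the fragment of $\aut_P$ leading into~$P$ implies that over $w$ there is at most one run of $\but$ that ever enters~$P$ --- call it the \emph{$P$-run} $\pi$ (which may be undefined, and which may leave~$P$ and re-enter it later) --- and that the $q$-component of $\hat\rho$ tracks $\pi$ faithfully: for every~$j$, $q_j = \pi_j$ when $\pi$ is defined and $\pi_j \in P$, and $q_j = \emptyset$ otherwise; moreover, whenever $q_j, q_{j+1} \in P$, the pair $q_j \ltr{w_j} q_{j+1}$ is exactly the transition taken by $\pi$ at step~$j$. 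This uses that a~new run can enter~$P$ only at a~moment when no run is currently inside~$P$, and that while inside the deterministic region the trajectory of $\pi$ is forced by~$w$.

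Given this correspondence, the colour $\tcacc{4}{0}$ is emitted on the $j$-th edge of $\hat\rho$ iff $\pi$ is defined and takes an accepting transition at step~$j$ (recall that accepting transitions lie only inside~$P$ because $\but \modelssurrp \varphi_{\idac}$, so any run visiting $\acc$ infinitely often is ``in~$P$'' infinitely often, and $\pi$ is the only such run). Hence $\hat\rho$ is accepting for $\GetAccOf{\idac_P} = \Finof {\tcacc{4}{0}}$ iff $\tcacc{4}{0}$ is emitted only finitely often, iff $\pi$ takes accepting transitions only finitely often, iff no run of $\but$ on $w$ is accepting, i.e.\ iff $w \notin \langof\but$. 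Since $\hat\rho$ is the unique macrorun on $w$, this yields $\langof{\modcompl(\idac_P, \but)} = \Sigma^\omega \setminus \langof\but$, and thus the correctness of $\idac$ in the sense of \cref{def:alg-correct}.

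The main obstacle is the middle step: turning the informal phrasing of $\varphi_{\idac}$ (``$\aut_P$ after removing redundant states is deterministic'') into the precise statement that a~single $P$-run captures all accepting behaviour and is mirrored by the $q$-component --- in particular ruling out the scenario in which infinitely many \emph{distinct} runs each dip into~$P$, take an accepting transition, and leave, which would make $\tcacc{4}{0}$ appear infinitely often without any accepting run and break correctness. Everything else --- uniqueness of the macrorun, the invariant on the $H$- and $q$-components, and the final chain of equivalences --- is routine.
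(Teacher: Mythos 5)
Your plan follows the same route as the paper's proof: analyse the unique macrorun, identify its $q$-component with the single run of $\but$ that can be inside~$P$, observe that the colour $\tcacc{4}{0}$ is emitted exactly at the accepting moves of that run, and conclude via the $\Fin$ condition. However, the one step you explicitly defer -- ``turning the informal phrasing of $\varphi_{\idac}$ into the precise statement that a single $P$-run captures all accepting behaviour and is mirrored by the $q$-component'' -- is precisely the only place where the hypothesis $\but \modelssurrp \varphi_{\idac}$ is used, and without it the chain of equivalences at the end is unsupported (as you note, the ``infinitely many dips'' scenario would break soundness of the $\Fin$ acceptance). Declaring this the ``main obstacle'' and leaving it open means the proof is not actually complete; moreover, the auxiliary fact you do state, that ``a new run can enter $P$ only at a moment when no run is currently inside $P$,'' is too weak, since it still permits an unbounded succession of runs entering and leaving~$P$.

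The missing argument can be made concrete as follows. Read ``$\aut_P$ after removing redundant states is deterministic'' as: delete all states from which no state of~$P$ is reachable; the remaining automaton has at most one initial state and at most one successor per state and letter. A deleted (redundant) state cannot have a transition into a kept state, since otherwise it could reach~$P$. Hence the set of reachable non-redundant states has cardinality at most one at every step, it evolves as a single (partial) run $\pi$ through the non-redundant region, and once it becomes empty it stays empty forever; since every state of~$P$ is non-redundant, $\pi$~is the only run that is ever inside~$P$, the $q$-component of the macrorun coincides with~$\pi$ whenever $\pi$ is in~$P$ (and with $\emptyset$ otherwise), the transition $q_j \ltr{w_j} q_{j+1}$ tested by $\GetSuccOf{\idac_P}$ for membership in~$\acc$ is exactly the move of~$\pi$, and after $\pi$ dies no colour can ever be emitted again. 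This simultaneously establishes your faithful-tracking invariant and excludes the bad scenario. For comparison, the paper discharges the same point by contraposition with a case split: assuming $w \notin \langof{\but}$ but infinitely many emissions of $\tcacc{4}{0}$, either the $q$-component stays in~$P$ from some point on -- and then, $P$ being deterministic, it is an accepting run, a contradiction -- or the reachable $P$-part empties, and the deterministic access to~$P$ then forbids further emissions; so the same structural fact does the work there, packaged as a case analysis rather than as your single-run invariant. Supplying one of these two arguments is what your proposal still needs.
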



\vspace{-3.0mm}
\subsection{Postponed Construction}\label{sec:postponed}
\vspace{-2.0mm}

The modular synchronous construction from \cref{sec:basic_synchronous} utilizes
the assumption that in the simultaneous construction of successors for each
partition block over~$a$, if one partial macrostate~$M_i$ does not have
a~successor over~$a$, then there will be no successor of the $(H, M_1, \ldots,
M_n)$ macrostate in~$\delta^{\cut}$ as well.
This is useful, e.g., for inclusion testing, where it is not necessary to
generate the whole complement.
On the other hand, if we need to generate the whole automaton, a drawback of the 
proposed modular construction is that each partial complementation algorithm itself may generate a lot 
of useless~states. In this section, we propose the \emph{postponed construction}, which
complements the partition blocks (with their surrounding) independently and
later combines the intermediate results to obtain the complement
automaton for~$\aut$.
The main advantage of the postponed construction is that one can apply automata
reduction (e.g., based on removing useless states or using
simulation~\cite{ClementeM19,EtessamiWS05,AbdullaCHV14,bustan2003simulation})
to decrease the size of the intermediate automata.

In the postponed construction, we use automata product operation implementing language 
intersection (i.e., for two TELAs $\but_1$ and $\but_2$, a product automaton
$\but_1\cap\but_2$ satisfying $\langof{\but_1 \cap \but_2} = \langof{\but_1} \cap \langof{\but_2}$\footnote{Alternatively, one might also avoid the
product and generate linear-sized \emph{alternating} TELA, but working with
those is usually much harder and not used in practice.}).
Further, we employ a~function $\reduce$ performing some language-preserving reduction of an input 
TELA. Then, the postponed construction for an elevator automaton $\aut$
with a~partitioning $P_1, \ldots, P_n$ and 
a~sequence of algorithms $\alg^1, \ldots, \alg^n$ such that $\alg^i$ is a~partial 
complementation algorithm for~$P_i$, is defined as follows:
\begin{equation}
  \postponedcompl(\alg^1_{P_1}, \dots, \alg^n_{P_n},\aut) = \bigcap_{i=1}^n \reduce\left(\modcompl(\alg^i_{P_i}, \aut_{P_i})\right).
\end{equation}
The example of the postponed construction applied on the BA from \cref{fig:example} is shown in 
\cref{sec:examples}.
The correctness of the construction is then summarized by the following 
theorem.
\begin{restatable}{theorem}{thmPostp}
  Let $\aut$ be a BA, $P_1, \ldots, P_n$ be a partitioning of $\aut$, and $\alg^1, \ldots, \alg^n$ be
  a~sequence of partial complementation algorithms such that~$\alg^i$ is \emph{correct} for~$P_i$. 
  Then, $\langof{\postponedcompl(\alg^1_{P_1}, \dots, \alg^n_{P_n},\aut)} = \Sigma^\omega\setminus\langof{\aut}$.
\end{restatable}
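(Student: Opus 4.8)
The plan is to reduce the correctness of $\postponedcompl$ to the already-established correctness of the synchronous construction $\modcompl$ (Theorem~\ref{thm:correctness}) together with the correctness guarantee on the partial algorithms (Definition~\ref{def:alg-correct}). The key observation is that $\postponedcompl$ is literally an intersection of $n$ automata, each of which is a reduced version of a single-block synchronous complementation. First I would argue that each factor $\reduce(\modcompl(\alg^i_{P_i}, \aut_{P_i}))$ recognizes exactly $\Sigma^\omega \setminus \langof{\aut_{P_i}}$: since $\alg^i$ is a correct partial complementation algorithm for $P_i$, and one checks that $\aut_{P_i} \modelssurrp[P_i] \phi_{\alg^i}$ holds (here $P_i$ is indeed a partition block of $\aut_{P_i}$ by construction, and $\aut_{P_i}$ has accepting transitions only inside $P_i$ by definition of $\aut_S$), Definition~\ref{def:alg-correct} gives $\langof{\modcompl(\alg^i_{P_i}, \aut_{P_i})} = \Sigma^\omega \setminus \langof{\aut_{P_i}}$; then $\reduce$ is language-preserving by assumption, so the factor has the claimed language.

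Next I would invoke the language-intersection property of the product construction: $\langof{\bigcap_{i=1}^n \reduce(\modcompl(\alg^i_{P_i}, \aut_{P_i}))} = \bigcap_{i=1}^n \left(\Sigma^\omega \setminus \langof{\aut_{P_i}}\right) = \Sigma^\omega \setminus \bigcup_{i=1}^n \langof{\aut_{P_i}}$. So it remains to show the purely automata-theoretic identity $\bigcup_{i=1}^n \langof{\aut_{P_i}} = \langof{\aut}$. The inclusion $\supseteq$ is the interesting direction: take an accepting run $\rho$ of $\aut$ on a word $w$; since $\rho$ eventually stays in a single SCC and, because $\aut$ has no accepting transitions outside its SCCs, that SCC must be accepting and hence contained in some partition block $P_i$ (the $P_j$ cover all accepting SCCs of $\aut$), the same run $\rho$ is also accepting in $\aut_{P_i}$, because $\aut_{P_i}$ keeps exactly the accepting transitions that lie inside $P_i$ and the tail of $\rho$ consists of such transitions. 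For $\subseteq$: an accepting run of $\aut_{P_i}$ uses transitions of $\aut$ (the transition structure is unchanged) and its infinitely-often-taken accepting transitions are a subset of $\acc$, so it is accepting in $\aut$ as well.

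The main obstacle — really the only nontrivial point — is the coverage argument $\bigcup_i \langof{\aut_{P_i}} = \langof{\aut}$, and specifically making precise that an accepting run eventually confines itself to one accepting SCC that is entirely inside one $P_i$. This uses the standing assumption that $\aut$ has no accepting transitions outside SCCs and the definition of a partitioning (pairwise disjoint partition blocks whose union contains all accepting SCCs). I would spell this out carefully: since only finitely many SCCs exist and the SCC reachability order is a partial order, the sequence of SCCs visited by $\rho$ is eventually constant, say equal to SCC $D$; the infinitely-often-taken transitions of $\rho$ all lie within $D$ and include an accepting one, so $D$ is an accepting SCC, hence $D \subseteq P_i$ for some $i$; then $\infoft{\rho} \subseteq \restrof{\trans}{P_i}$ and $\infoft{\rho} \cap \acc \neq \emptyset$, which is exactly the acceptance condition of $\aut_{P_i}$. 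Everything else is a routine chaining of the quoted correctness statements, the $\reduce$ hypothesis, and the product-automaton specification, so the write-up should be short.
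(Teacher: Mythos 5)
Your proof is correct, but it takes a somewhat different route than the paper. The paper's own proof of this theorem is a two-line reduction: it invokes \cref{claim-intersection} (the equality $\bigcap_{i=1}^n\langof{\modcompl(\alg^i_{P_i}, \aut_{P_i})} = \langof{\modcompl(\alg^1_{P_1}, \dots, \alg^n_{P_n},\aut)}$, established inside the proof of \cref{thm:correctness}) together with \cref{thm:correctness} itself to conclude that the intersection of the per-block complements is $\Sigma^\omega\setminus\langof{\aut}$, and then appeals to the language-preservation of $\reduce$ and the product specification. You instead bypass the synchronous construction entirely: you apply \cref{def:alg-correct} directly to each block automaton $\aut_{P_i}$ (checking, correctly, that $\aut_{P_i}\modelssurrp\phi_{\alg^i}$, i.e.\ that $P_i$ is a partition block of $\aut_{P_i}$ and that $\aut_{P_i}$ has no accepting transitions outside $P_i$), then reduce everything to the identity $\bigcup_{i=1}^n\langof{\aut_{P_i}} = \langof{\aut}$, which you prove in detail via the eventually-constant-SCC argument. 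The paper uses that same identity (as an unproven assertion, its equation for partitionings) inside the proof of \cref{thm:correctness}, so your ingredients are essentially those of the second half of that proof rather than of the paper's short proof of the postponed theorem. What your route buys is self-containedness: the postponed construction is shown correct without ever building the synchronous product, and the coverage identity is actually argued rather than asserted. What the paper's route buys is brevity and reuse: once \cref{thm:correctness} and its internal claim are in place, the postponed theorem follows with no new run-level reasoning. One cosmetic remark: the relation $\modelssurrp$ in the paper is not parameterized by the block in its notation (the block is understood from context), so your annotated variant should be read as the paper's $\aut_{P_i}\modelssurrp\phi_{\alg^i}$ with $P=P_i$; the content of your check is exactly what is required.
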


\vspace{-3.0mm}
\subsection{Round-Robin Algorithm}\label{sec:round-robin}
\vspace{-2.0mm}

The proposed basic synchronous approach from \cref{sec:basic_synchronous} may suffer from the 
combinatorial explosion because the successors of a~macrostate are given by 
the Cartesian product of all successors of the partial macrostates.
To alleviate this explosion, we propose a~\emph{round-robin} top-level algorithm.
Intuitively, the round-robin algorithm actively tracks runs in only one
partial complementation algorithm at a~time (while other algorithms stay passive).
The algorithm periodically changes the active algorithm to avoid starvation
(the decision to leave the active state is, however, fully directed
by the partial complementation algorithm).
This can alleviate an explosion in the number of successors for algorithms that
generate more than one successor (e.g., for rank-based algorithms where one
needs to make a~nondeterministic choice of decreasing ranks of states in order
to be able to
accept~\cite{KupfermanV01,FriedgutKV06,Schewe09,ChenHL19,HavlenaL21,HavlenaLS22a};
such a~choice needs to be made only in the active phase while in the passive
phase, the construction just needs to make sure that the run is consistent with
the given ranking, which can be done deterministically).

The round-robin algorithm works on the level of \emph{partial complementation round-robin algorithms}.
Each instance of the partial algorithm provides \emph{passive types} to represent partial macrostates that 
are passive and \emph{active types} to represent currently active 
partial macrostates. In contrast to the basic partial complementation algorithms
from \cref{sec:basic_synchronous}, which
provide only a~single successor function, the round-robin partial algorithms 
provide several variants of them.
In particular, $\GetSuccTOf{}$ returns (passive) successors of a~passive partial
macrostate, $\LiftOf{}$ gives all possible active counterparts of a~passive
macrostate, 
and $\GetSuccAOf{}$ returns successors of an active partial macrostate.
If $\GetSuccAOf{}$ returns a partial macrostate of the passive type, the round-robin algorithm 
promotes the next partial algorithm to be the active one.
For instance, in the round-robin version of $\dac$, the passive type does not
contain the breakpoint and only checks that safe runs stay safe, so it is
deterministic.
Due to space limitations, we give a~formal definition and more details about
the round-robin algorithm in \cref{sec:round_robin_details}.

\vspace{-3.0mm}
\subsection{Shared Breakpoint}\label{sec:shared-breakpoint}
\vspace{-2.0mm}

The partial complementation algorithms $\dac$ and $\iwc$ (and later $\nacrnk$ defined 
in \cref{sec:nac-rank-based}) use a~breakpoint to check whether the runs under 
inspection are accepting or not. As an optimization, we consider merging of
breakpoints of several algorithms and keeping only a~single breakpoint for all
supported algorithms.
The top-level algorithm then needs to manage only one breakpoint and emit
a~colour only if this sole breakpoint becomes empty.
This may lead to a smaller number of generated macrostates since 
we synchronize the breakpoint sampling among several algorithms.
The second 
benefit is that this allows us to generate fewer colours (in the case of
elevator automata complemented using algorithms $\dac$ and $\iwc$, we get only
one colour).

\subsection{Simulation Pruning}\label{sec:simulations}

Our construction can be further optimized by a~simulation (or other compatible)
relation for pruning macrostates.\footnote{This optimization can be seen as
a~generalization of the simulation-based pruning techniques that appeared,
e.g., in~\cite{LodingP19,HavlenaLS22b} in the context of concrete
determinization/complementation procedures.  Here, we generalize the technique
to all procedures that are based on run tracking.}
A~simulation is, broadly speaking, a~relation ${\simul} \subseteq Q\times Q$
implying language inclusion of states, 
i.e., $\forall p, q \in Q\colon p\simul q \Longrightarrow \langof{\autof p} \subseteq 
\langof{\autof q}$.
Intuitively, our optimization allows to remove a~state~$p$ from
a~macrostate~$M$ if there is also a~state~$q$ in~$M$ such that
\begin{inparaenum}[(i)]
  \item  $p \simul q$,
  \item  $p$ is not reachable from $q$, and
  \item  $p$ is smaller than~$q$ in an arbitrary total order over~$Q$ (this
    serves as a~tie-breaker for simulation-equivalent mutually unreachable
    states).
\end{inparaenum}
The reason why~$p$ can be removed is that its behaviour can be completely
mimicked by~$q$.
In our construction, we can then, roughly speaking, replace each 
call to the functions~$\trans(U,a)$ and $\acctrans(U,a)$, for a set of
states~$U$, by $\pr(\trans(U,a))$ and $\pr(\acctrans(U,a))$ respectively in
each partial complementation algorithm,
as well as in the 
top-level algorithm, where~$\pr(S)$ is obtained from~$S$ by pruning all
eligible states.
The details are provided in \cref{sec:sim-details}.

\section{Modular Complementation of Non-Elevator Automata}\label{sec:general}

A non-elevator automaton $\aut$ contains at least one NAC, besides 
possibly other IWCs or DACs.
To complement $\aut$ in a modular way, we apply the techniques seen in \cref{sec:modular-elevator} to its DACs and IWCs, while for 
its NACs we resort to a general complementation algorithm $\alg$.
In theory, rank-~\cite{KupfermanV01}, slice-~\cite{KahlerW08},  
Ramsey-~\cite{SistlaVW87}, subset-tuple-~\cite{AllredU18}, and determinization-~\cite{Safra88} based 
complementation algorithms adapted to work on a single partition block instead of the whole 
automaton are all valid instantiations of $\alg$. Below, we give a~high-level description of
\mbox{two such algorithms: rank- and determinization-based.}

\paragraph{Rank-based partial complementation algorithm.}
Working on each NAC independently benefits the complementation algorithm even
if the input BA contains only NACs.
For instance, in rank-based
algorithms~\cite{KupfermanV01,FriedgutKV06,Schewe09,KarmarkarC09,ChenHL19,HavlenaL21,HavlenaLS22a},
the fact whether all runs of~$\aut$ over a given $\omega$-word~$w$ are
non-accepting is determined by \emph{ranks} of states, given by the so-called
\emph{ranking functions}.
A~ranking function is a~(partial) function from~$Q$ to~$\omega$.
The main idea of rank-based algorithms is the following:
\begin{inparaenum}[(i)]
  \item  every run is initially nondeterministically assigned a~rank,
  \item  ranks can only decrease along a~run,
  \item  ranks need to be even every time a~run visits an accepting transition, and
  \item  the complement automaton accepts iff all runs eventually get trapped
    in odd ranks\footnote{Since we focus on intuition here, we use
    runs rather than the directed acyclic graphs of runs.}.
\end{inparaenum}
In the standard rank-based procedure, the initial assignment of ranks to states
in~(i) is a~function $Q \partialto \{0, \ldots, 2n-1\}$ for $n = |Q|$.
Using our framework, we can, however, significantly restrict the considered
ranks in a~partition block~$P$ to only $P \partialto \{0, \ldots, 2m-1\}$ for
$m = |P|$ (here, it makes sense to use partition blocks consisting of single
SCCs).
One can further reduce the considered ranks using the techniques
introduced in, e.g., \cite{HavlenaL21,HavlenaLS22a}.

In order to adapt the rank-based construction as a~partial complementation
algorithm $\nacrnk$ in our framework, we need to 
extend the ranking functions by a~fresh ``box state''~$\extension$ representing states 
outside the partition block. The ranking function then uses~$\extension$ to represent ranks 
of runs newly coming into the partition block.
The box-extension also requires to change the 
transition in a~way that~$\extension$ always represents 
reachable states from the outside. We provide the details of the construction,
which includes the MaxRank optimization from~\cite{HavlenaL21}, in
\cref{sec:nac-rank-based}. 

\paragraph{Determinization-based partial complementation algorithm.}
In~\cite{TsaiFVT14,HavlenaLS22a} we can see that determinization-based
complementation is also a good instantiation of $\alg$ in practice,
so, we also consider the standard Safra-Piterman
determinization~\cite{Safra88,Piterman07,Redziejowski12} as a~choice of~$\alg$
for complementing NACs.
Determinization-based algorithms use a~layered subset construction to organize
all runs over an $\omega$-word~$w$.
The idea is to identify a~subset $S \subseteq H$ of reachable states that occur
infinitely often along reading~$w$ such that between every two occurrences
of~$S$, we have that
(i)~every state in the second occurrence of~$S$ can be reached by a~state in
the first occurrence of~$S$ and
(ii)~every state in the second occurrence is reached by a~state in the first occurrence while seeing an accepting transition.
According to K\"onig's lemma, there must then be an accepting run of $\aut$ over $w$.

The construction initially maintains only one set $H$: the set of reachable
states.
Since~$S$ as defined does not necessarily need to be~$H$, every time there are
runs visiting accepting transitions, we create a~new subset~$C$ for those runs
and remember which subset~$C$ is coming from.
This way, we actually organize the current states of all runs into a~tree
structure and do subset construction in parallel for the sets in each
tree node.
If we find a~tree node whose labelled subset, say~$S'$, is equal to the union of
states in its children, we know the set~$S'$ satisfies the condition above and
we remove all its child nodes and emit a~good event.
If such good event happens infinitely often, it means that $S'$~also occurs infinitely often.
So in complementation, we only need to make sure those good events only happen for finitely many times.
Working on each NAC separately also benefits the determinization-based approach
since the number of possible trees will be less with smaller number of
reachable states.
Following the idea of~\cite{LiTFVZ22}, to adapt for the construction as the
partial complementation algorithm, we put all the newly coming runs from other
partition blocks in a~newly created node without a~parent node.
In this way, we actually maintain a~forest of trees for the partial complementation construction.
We denote the determinization-based construction as
$\detalg$; cf.\ \cite{LiTFVZ22} for details.

\newcommand{\tableResults}[0]{
\begin{table}[t]
  \caption{
    Statistics for our experiments.
    The column \textbf{unsolved} classifies unsolved instances by the form \emph{timeouts\,:\,out of memory\,:\,other failures}.
    For the cases of \vbs we provide just the number of unsolved cases. 
    The columns \textbf{states} and \textbf{runtime} provide \emph{mean\,:\,median} of the number of states and runtime, respectively.
    } 
  \newcolumntype{g}{>{\columncolor{Gray!30}}r}
  \newcolumntype{f}{>{\columncolor{Gray!30}}l}
  \newcolumntype{h}{>{\columncolor{Gray!30}}c}
  \label{tab:overview}
  \centering
  \scalebox{0.9}{
  \begin{tabular}{lgrcrcrghgrcr}
    \toprule
    \textbf{tool} & \textbf{solved} & \multicolumn{5}{c}{\textbf{unsolved}}  & \multicolumn{3}{h}{\textbf{states}} & \multicolumn{3}{c}{\textbf{runtime}} \\
    \midrule
    \rowcolor{GreenYellow}\kofola[S] & 39,738 & 89 & \!:\! & 10 &\!:\! & 0 & 76 & \!:\! & 3 & 0.32 & \!:\! & 0.03 \\
    \rowcolor{GreenYellow}\kofola[P] & 39,750 & 76 & \!:\! & 11 & \!:\!& 0 & 86 & \!:\! & 3 & 0.41 & \!:\! & 0.03 \\
    \hline
    \vbs[+] & 39,834 & & & 3 & & & 78 & \!:\! & 3 & 0.05 & \!:\! &  0.01 \\
    \vbs[-] & 39,834 & & & 3 & & & 96 & \!:\! & 3 & 0.05 & \!:\! &  0.01 \\
    \bottomrule
  \end{tabular}}
  \hspace{3mm}
  \scalebox{0.9}{
  \begin{tabular}{lgrcrcrghgrcr}
    \toprule
    \textbf{tool} & \textbf{solved} & \multicolumn{5}{c}{\textbf{unsolved}}  & \multicolumn{3}{h}{\textbf{states}} & \multicolumn{3}{c}{\textbf{runtime}} \\
    \midrule
    \cola & 39,814 & 21 & \!:\! & 0 & \!:\! & 2 & 80 & \!:\! & 3 & 0.17 & \!:\! & 0.02 \\
    \ranker & 38,837 & 61 & \!:\! & 939 & \!:\! & 0 & 45 & \!:\! & 4 & 3.31 & \!:\! & 0.01 \\
    \seminator & 39,026 & 238 & \!:\! & 573 & \!:\! & 0 & 247 & \!:\! & 3 & 1.98 & \!:\! & 0.03 \\
    \spot & 39,827 & 8 & \!:\! & 0 & \!:\!& 2 & 160 & \!:\! & 4 & 0.08 & \!:\! & 0.02 \\
    \bottomrule
  \end{tabular}}
\end{table}
}

\newcommand{\figMainPlots}[0]{
\begin{figure}[t]
  \resizebox{\linewidth}{!}{
    \includegraphics{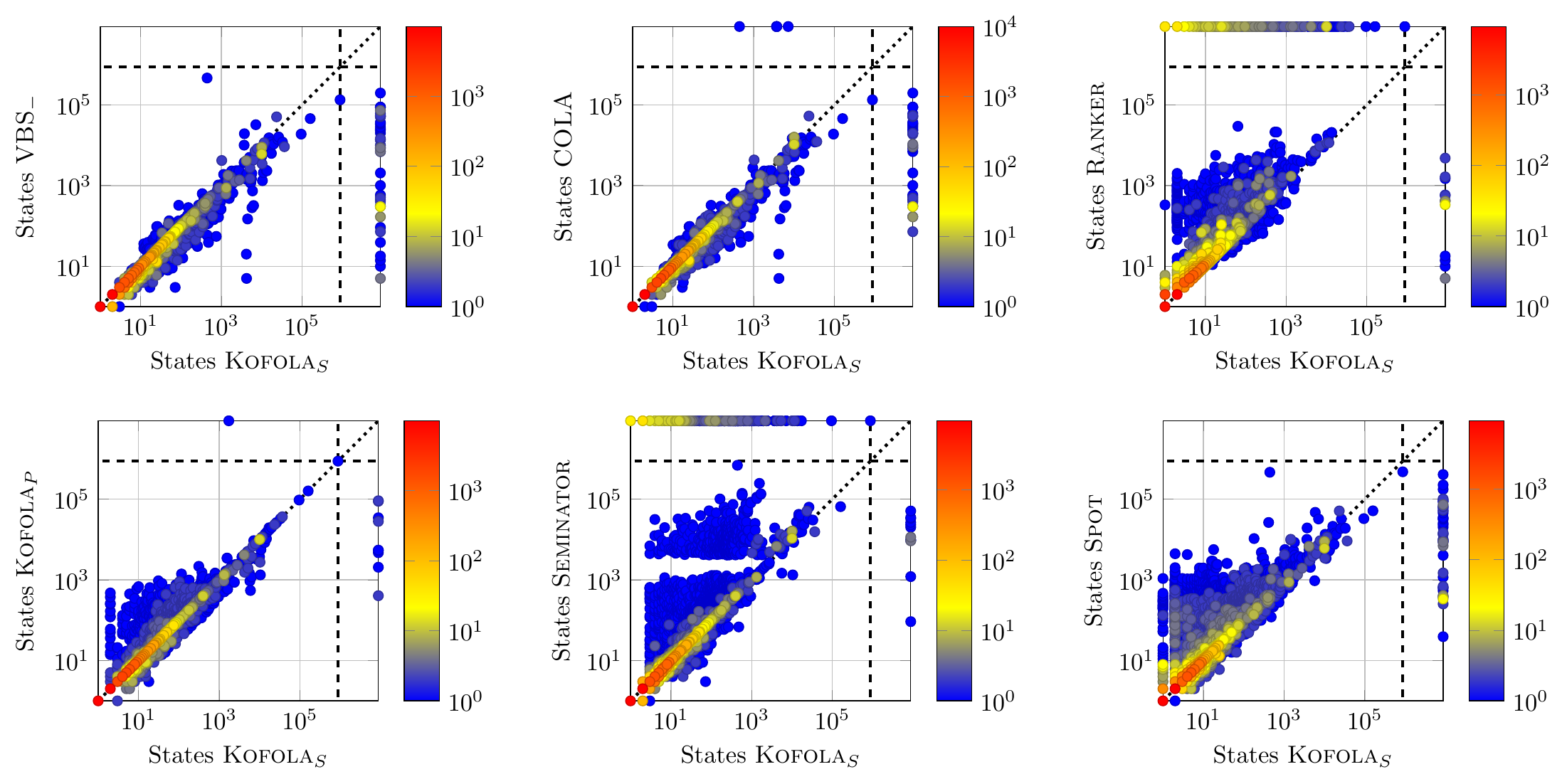}
  }
  \vspace*{-6mm}
  \caption{Scatter plots comparing the numbers of states generated by the tools.}
  \label{fig:experimentsStates}
\end{figure}
}

\newcommand{\figrankerelev}[0]{
\begin{wrapfigure}[10]{r}{5.3cm}
\vspace*{-6mm}
\hspace*{-2mm}
\scalebox{0.75}{
  \includegraphics{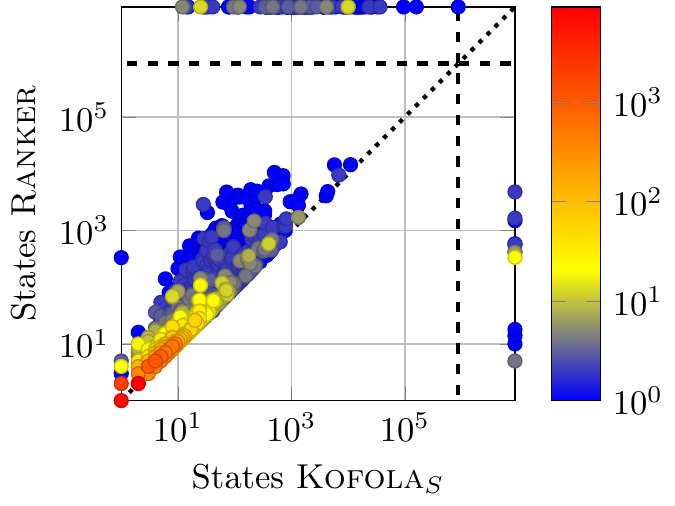}
}
\vspace{-4mm}
\label{fig:ranker-elev}
\end{wrapfigure}%
}


\vspace{-2.0mm}
\section{Experimental Evaluation}\label{sec:evaluation}
\vspace{-1.0mm}

To evaluate the proposed approach, we implemented it in 
a~prototype tool \kofola (written in C++) built on top of
\spot~\cite{Duret-LutzRCRAS22} and compared it against 
\cola~\cite{LiTFVZ22}, 
\ranker~\cite{HavlenaLS22b} (v.~2), 
\seminator~\cite{BlahoudekDS20} (v.~2.0), 
and 
\spot~\cite{Duret-LutzLFMRX16,Duret-LutzRCRAS22} (v.~2.10.6), which are the
state of the art in BA
complementation~\cite{HavlenaLS22a,HavlenaLS22b,LiTFVZ22}.
Due to space restrictions, we give results for only two instantiations of our framework:
\kofolaids and \kofolaidsc.  Both instantiations use $\iwc$ for IWCs, $\dac$
for DACs, and $\detalg$ for NACs.
The partitioning selection algorithm merges all IWCs into one partition block,
all DACs into one partition block, and keeps all NACs separate.
Simulation-based pruning from \cref{sec:simulations} is turned on, and
round-robin from \cref{sec:round-robin} is turned off (since the selected
algorithms are quite deterministic).
\kofolaids employs the \emph{synchronous} and \kofolaidsc employs the
\emph{postponed} strategy.
We also consider the Virtual Best Solver (\vbs), i.e., a~virtual tool that
would choose the best solver for each single benchmark among all tools
(\vbsall) and among all tools except both versions of \kofola (\vbsnokofola).
We ran our experiments on an Ubuntu 20.04.4 LTS system running on a desktop machine 
with 16\,GiB RAM and an Intel 3.6\,GHz i7-4790 CPU.
To constrain and collect statistics about the executions of the tools, we used 
\benchexec~\cite{BeyerLW19} and imposed a memory limit of 12\,GiB and a~timeout
of 10\,minutes;
we used \spot to cross-validate the equivalence of the automata generated by the 
different tools.

As our data set, we used 39,837 BAs from the \automatabenchmarks
repository~\cite{automatabenchmarks} (used before by, e.g.,
\cite{HavlenaLS22a,HavlenaLS22b,LiTFVZ22}), which contains BAs from the
following sources:
\begin{inparaenum}[(i)]
  \item  randomly generated BAs used in~\cite{TsaiFVT14} (21,876~BAs),
  \item  BAs obtained from LTL formulae from the literature and randomly
    generated LTL formulae~\cite{BlahoudekDS20} (3,442~BAs),
  \item  BAs obtained from \ultimizer~\cite{ChenHLLTTZ18} (915~BAs),
  \item  BAs obtained from the solver for first-order logic over Sturmian words
    \pecan~\cite{HieronymiMOS0S22} (13,216~BAs),
  \item  BAs obtained from an S1S solver~\cite{HavlenaLS21} (370~BAs), and
  \item  BAs from LTL to SDBA translation~\cite{SickertEJK16} (18~BAs).
\end{inparaenum}
From these BAs, 23,850 are deterministic, 6,147 are SDBAs (but not
deterministic), 4,105 are elevator (but not SDBAs), and 5,735 are the rest.

\tableResults   

In \cref{tab:overview} we present an overview of the outcomes.
Despite being a prototype, \kofola is already able to complement a large
portion of the input automata, with very few cases that can be complemented
successfully only by \spot or \cola.
Regarding the mean number of states, \kofolaids has the \textbf{least mean value} from
all tools (except \ranker, which, however, had 1,000 unsolved cases)
Moreover, \kofola \textbf{significantly decreased the mean number of states} when
included into the \vbs: from 96 to 78! 
We consider this to be a~strong validation of the usefulness of our approach.
Regarding the running time, both versions of \kofola are rather similar;
\kofola is just slightly slower than \spot and \cola but much faster than both \ranker and \seminator
(the runtime plot is in the appendix).
Being a~prototype, there are many engineering opportunities for speed-up.

\figMainPlots    

In \cref{fig:experimentsStates} we present a comparison of the number of states
generated by \kofolaids with those generated by the other tools;
we omit \vbsall since the corresponding plot can be derived from the one for
\vbsnokofola (since \ranker and \seminator only output BAs, we compare the
sizes of outputs transformed into BAs for all tools to be fair).
In the plots, the number of benchmarks represented by each mark is given by its color; 
a mark above the diagonal means that \kofolaids generated an automaton smaller
than the other tool while a mark on the top border means that the other tool
failed while \kofolaids succeeded, and symmetrically for the bottom part and
the right-hand border.
Dashed lines represent the maximum number of states generated by one of the
tools in the plot, axes are logarithmic.

\figrankerelev 
From the results, \kofolaids clearly dominates state-of-the-art tools that are not
based on SCC decomposition (\ranker, \spot, \seminator).
The outputs are quite comparable to \cola, which also uses SCC decomposition
and can be seen as an instantiation of our framework.
This supports our intuition that working on the single SCCs helps in reducing
the size of the final automaton, confirming the validity of our modular
mix-and-match \buchi complementation approach.
Lastly, in the figure in the right, we compare our algorithm for elevator
automata with the one in \ranker (the only other tool with
a~dedicated algorithm for this subclass).
Our new algorithm clearly dominates the one in \ranker.


\vspace{-3.0mm}
\section{Related Work}\label{sec:related}
\vspace{-2.0mm}
To the best of our knowledge, we provide the \emph{first general framework}
where one can plug-in different BA complementation algorithms while taking
advantage of the~specific structure of SCCs.
We will discuss the difference between our work and the literature. 

The breakpoint construction~\cite{MiyanoH84} was designed to complement BAs with
only IWCs, while our construction treats it as a~partial complementation
procedure for IWCs and differs in the need to handle incoming states from other
partition blocks.
The NCSB family of
algorithms~\cite{BlahoudekHSST16,ChenHLLTTZ18,BlahoudekDS20,HavlenaLS22b} for
SDBAs do not work when there are nondeterministic jumps between DACs;
they can, however, be adapted as partial procedures for complementing DACs in
our framework, cf.~\cref{sec:dac-complement}.
In~\cite{HavlenaLS22a}, a~deelevation-based procedure is applied to elevator
automata to obtain BAs with a~fixed maximum rank of~3, for which a~rank-based
construction produces a~result of the size in~$\bigO(16^n)$.
In our work, we exploit the structure of the SCCs much more to obtain an
exponentially better upper bound of~$\bigO(4^n)$ (the same as for SDBAs).
The upper bound~$\bigO(4^n)$ for complementing unambiguous BAs was established
in \cite{LiVZ20}, which is orthogonal to our work, but seems to be possible to
incorporate into our framework in the future.
 
There is a~huge body of work on complementation of general BAs~\cite{Buchi90,SistlaVW87,BreuersLO12,KupfermanV01,FriedgutKV06,GurumurthyKSV03,ChenHL19,HavlenaL21,HavlenaLS22a,Schewe09,AllredU18,Safra88,Piterman07,Redziejowski12,BlahoudekDS20,TsaiFVT14,KahlerW08,VardiW08,FogartyKVW15,FogartyKWV13};
all of them work on the whole graph structure of the input BAs.
Our framework is general enough to allow including all of them as partial
complementation procedures for NACs.
On the contrary, our framework does not directly allow (at least in the
synchronous strategy) to use algorithms that \emph{do not} work on the
structure of the input BA, such as the learning-based complementation algorithm
from~\cite{LiTZS18}.
The recent determinization algorithm from~\cite{LiTFVZ22}, which serves as our
inspiration, also handles SCCs separately (it can actually be seen as an
instantiation of our framework).
Our current algorithm is, however, more flexible, allowing to mix-and-match
various constructions, keep SCCs separate or merge them into partition blocks,
and allows to obtain the complexity~$\bigO(4^n)$, while~\cite{LiTFVZ22} only
allowed~$\bigO(n!)$ (which is tight since SDBA determinization is
in~$\Omega(n!)$~\cite{EsparzaKRS17,Loding99}).

Regarding the tool \spot~\cite{Duret-LutzLFMRX16,Duret-LutzRCRAS22}, it should
not be perceived as a single complementation algorithm.
Instead, \spot should be seen as a highly engineered platform utilizing
breakpoint construction for inherently weak BAs,
NCSB~\cite{BlahoudekHSST16,ChenHLLTTZ18} for SDBAs, and determinization-based
complementation~\cite{Safra88,Piterman07,Redziejowski12} for general BAs, while
using many other heuristics along the way.
\seminator uses
semi-determinization~\cite{CourcoubetisY88,BlahoudekDKKS17,BlahoudekDS20} to
make sure the input is an SDBA and then uses
NCSB~\cite{BlahoudekHSST16,ChenHLLTTZ18} to compute the complement. 

\vspace{-3.0mm}
\section{Conclusion and Future Work}\label{sec:conclusion}
\vspace{-2.0mm}

We have proposed a general framework for BA complementation where one can
plug-in different partial complementation procedures for SCCs by taking
advantage of their specific structure.
Our framework not only obtains exponentially better upper bound for elevator
automata, but also complements existing approaches well.
As shown by the experimental results (especially for the \vbs), our framework
significantly improves the current portfolio of complementation algorithms.

We believe that our framework is an ideal testbed for experimenting with
different BA complementation algorithms, e.g., for the following two reasons:
\begin{inparaenum}[(i)]
  \item  One can develop an efficient complementation algorithm that only works
    for a~quite restricted sub-class of BAs (such as the algorithm for initial
    deterministic SCCs that we showed in \cref{sec:det-init}) and the framework
    can leverage it for complementation of all BAs that contain such
    a~sub-structure.
  \item When one tries to improve a~general complementation algorithm, they can
    focus on complementation of the structurally hard SCCs (mainly the
    nondeterministic accepting SCCs) and do not need to look for heuristics
    that would improve the algorithm if there were some easier substructure
    present in the input BA (as was done, e.g., in~\cite{HavlenaLS22a}).
\end{inparaenum}
From how the framework is defined, it immediately offers opportunities for being
used for on-the-fly BA \emph{language inclusion} testing, leveraging the
partial complementation procedures present.
Finally, we believe that the framework also enables new directions for future
research by developing smart ways, probably based on machine learning, of
selecting which partial complementation
procedure should be used for which SCC, based on their features.
In future, we want to incorporate other algorithms for complementation of NACs,
and identify properties of SCCs that allow to use more efficient algorithms
(such as unambiguous NACs~\cite{LiVZ20}).
Moreover, it seems that generalizing the \delayed optimization
from~\cite{HavlenaL21} on the top-level algorithm could also help reduce the
state space.

\paragraph{Acknowledgements.}
We thank the anonymous reviewers for their useful remarks that helped us improve
the quality of the paper and Alexandre Duret-Lutz for sharing a~Ti\textit{k}Z package for beautiful automata.
This work was supported by 
the Strategic Priority Research Program of the Chinese Academy of Sciences (grant no.\@ XDA0320000);
the National Natural Science Foundation of China (grants no.\@ 62102407 and 61836005);
the CAS Project for Young Scientists in Basic Research (grant no.\@ YSBR-040);
the Engineering and Physical Sciences Research Council (grant no.\ EP/X021513/1);
the Czech Ministry of Education, Youth and Sports project LL1908 of the ERC.CZ programme;
the Czech Science Foundation project GA23-07565S; and
the FIT BUT internal project FIT-S-23-8151.
\protect\includegraphics[height=8pt]{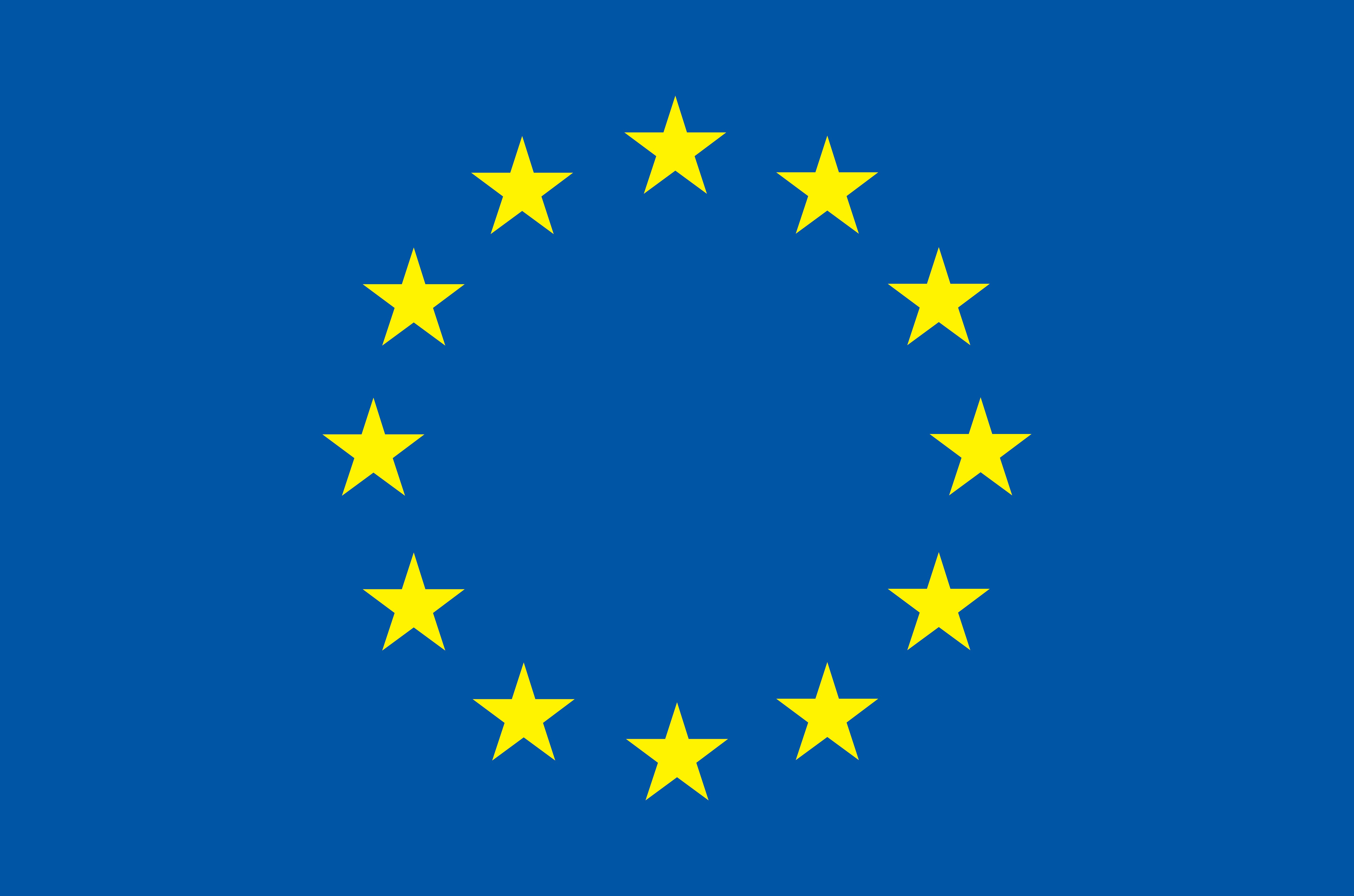} This project has received funding from the European Union’s Horizon 2020 research and innovation programme under the 
Marie Sklodowska-Curie 
grant agreement no.\@ 101008233.

\ol{artifact?}

\bibliographystyle{splncs04}
\bibliography{literature.bib}

\appendix

\clearpage

\vspace{-0.0mm}
\section{Round-Robin Algorithm}\label{sec:round_robin_details}
\vspace{-0.0mm}

In this section, we provide details related to the round-robin algorithm. 
The round-robin algorithm works on the level of \emph{partial complementation round-robin algorithms} 
$\algrr$. We require an instance of the partial round-robin 
algorithm $\algrr_P$ to provide the following:

\begin{itemize}
  \item  $\onetypeof{\algrr_P} = \tracktypeof{\algrr_P} \cup \activetypeof{\algrr_P}$ --- the type of the macrostates produced by the
    algorithm consisting of the passive type and the active type; 

  \item  $\colourtypeof{\algrr_P}$ --- the set of colours the algorithm produces;

  \item  $\GetInitOf{\algrr_P}\colon 2^{\tracktypeof{\algrr_P}}$ --- a function returning
    the set of initial macrostates;

  \item $\GetSuccAOf{\algrr_P}\colon (2^\states \times \activetypeof{\algrr_P} \times \alphabet)
  \to 2^{\onetypeof{\algrr_P} \times \colourtypeof{\algrr_P}}$
  --- transition function returning successors in the active phase; 

  \item $\GetSuccTOf{\algrr_P}\colon (2^\states \times \tracktypeof{\algrr_P} \times \alphabet)
  \to 2^{\tracktypeof{\algrr_P} \times \colourtypeof{\algrr_P}}$ --- function performing lift from a passive state to a set of active states;

  \item $\LiftOf{\algrr_P}\colon  \tracktypeof{\algrr_P} 
  \to 2^{\activetypeof{\algrr_P}}$ --- transition function for switching from passive to active phase;

  \item  $\GetAccOf{\algrr_P}\colon \emersonleiof{\colourtypeof {\algrr_P}}$ ---
    a~function that returns a~formula for the acceptance condition.
\end{itemize}

\SetKwProg{Fn}{Function}{:}{}
\begin{function}[t]
  \Fn{$\Delta^{\cut}((H, M_1, \dots, M_n, \ell), a)$}{
    $\calM := \emptyset$\;
    Let $\algrr_i$ be $\algrr^i_{P_i}$\;
    $\fsucc_i := \GetSuccTOf{\algrr_i}$ for each $1\leq i \leq n$\;
    $\fsucc_\ell := \GetSuccAOf{\algrr_\ell}$\;
    \For{$( (M'_1, c_1), \ldots, (M'_n, c_n)) \in [\fsucc_i(H, M_i, a)]_{i=1}^n$} {
      $\ell' := \ell$\;
      $L_{i} := \{ M'_i \}$ for each $1\leq i \leq n$\;
      \If{$M'_\ell \in \tracktypeof{\algrr_\ell}$} {
        $\ell' := (\ell \mod n) + 1$\;
        $L_{\ell'} := \LiftOf{\algrr_{\ell'}}(M'_{\ell'})$\;
      } 
      $\calM := \calM \cup \{ (\delta(H,a), (M''_1, c_1), \ldots, (M''_n, c_n), \ell') \mid M''_i \in L_i, \forall 1 \leq i \leq n \}$\;
    }
    \Return $\calM$\;
}
\end{function}

Let $\aut = (\states, \trans, \inits, \acc)$ be a BA,
$P_1, \ldots, P_n$ be a partitioning, and $\algrr^1, \ldots, \algrr^n$ be
a~sequence of algorithms such that~$\algrr^i$ is a~partial round-robin complementation
algorithm for~$P_i$. 
The complementation algorithm then produces the \\TELA~$\modcomplrr(\algrr^1_{P_1}, \dots, \algrr^n_{P_n},\aut) = (\states^{\cut},
\trans^{\cut}, \inits^{\cut}, \colourset^{\cut}, \colouring^{\cut},
\acccond^{\cut})$ whose components are defined as follows:
\begin{itemize}
  \item  $\states^{\cut} = 2^\states \times [\onetypeof{\algrr^i_{P_i}}]_{i=1}^n \times \{ 1, \dots, n \}$,\\[-3mm]
  \item  $\inits^{\cut} = \{\inits\} \times \Lift\big(\GetInitOf{\algrr^1_{P_1}}\big) \times [\GetInitOf{\algrr^i_{P_i}}]_{i=2}^n \times \{ 1 \}$,
  \item  $\colourset^{\cut} =  \{0, \ldots, \renumof{k^{\algrr^n_{P_n}}-1}{n}\}$,
  \item  $\acccond^{\cut} = \bigwedge_{i=1}^n \renumof{\GetAccOf{\algrr^i_{P_i}}} i$, and
  \item  $\trans^{\cut}$ and $\colouring^{\cut}$ are defined such that
    if
    $$(H', (M'_1, c_1), \ldots, (M'_n, c_n), \ell') \in \Delta^{\cut}((H, M_1, \dots, M_n, \ell), a),$$
    then $\trans^{\cut}$ contains the transition
    $t\colon (H, M_1, \ldots, M_n, \ell) \ltr a (H', M'_1, \ldots, M'_n, \ell')$,
    whose colouring is set to
    $\colouring^{\cut}(t) = \bigcup\{c_i \mid 1 \leq i \leq n\}$,
    and~$\trans^{\cut}$ is the smallest such a~set.
    We also set $\colouring^{\cut}(q) = \emptyset$ for every $q \in
    \states^{\cut}$.
\end{itemize}


In the following, we focus on the correctness condition of our round-robin algorithm.
For a run $\rho$, we use $\rho_{i:j}$ where $i\leq j$ to denote the sequence $\rho_i,\rho_{i+1},\dots,\rho_j$.
Let $\algrr_P$ be an instance of partial round-robin complementation algorithm.
We use $\algunion(\algrr_P)$ to denote the partial complementation algorithm having the same type, 
the set of colors, the set of initial macrostates, and the acceptance condition as $\algrr_P$.
The successor function is given as
\[
\GetSuccOf{\algunion(\algrr_P)}(H,M,a) = 
\begin{cases}
  \GetSuccAOf{\algrr_P}(H,M,a) & \text{if $M \in \activetypeof{\algrr_P}$,} \\
  S \cup \bigcup_{(M',c) \in S}\LiftOf{\algrr_P}(H,M')  & \text{if $M \in \tracktypeof{\algrr_P}$}
\end{cases}
\]
where $S = \GetSuccTOf{\algrr_P}(H,M,a)$.

%
Moreover, for $\aut$ such that $\aut \modelssurrp \phi_\algrr$ we say that $\algrr_P$ is \emph{consistent} if in the automaton $\modcompl(\algunion(\algrr_P), \aut)$ the following holds:
\begin{itemize}
  \item[(C1)] for each $\word \in \langof{\modcompl(\algunion(\algrr_P), \aut)}$ there is 
  an accepting run $\rho$ on $\word$ such that $\rho_{j} \in \activetypeof{\algrr_P}$ and $\rho_{j+1} \notin \activetypeof{\algrr_P}$ for infinitely many $j$s;
  \item[(C2)] for each accepting run $\rho$ 
  and each $i\in\omega$, we have that there are accepting runs $\rho'$ and $\rho''$ such that $i > 1 \Rightarrow\rho_{i-1} \in\tracktypeof{\algrr_P}$ and $\rho'_{1:i-1} = \rho''_{1:i-1} = \rho_{1:i-1}$
  and $\rho'_i \in\tracktypeof{\algrr_P}$ and $\rho''_i \in\activetypeof{\algrr_P}$;

  \item[(C3)] let $\rho$ be a run. If $\rho_{j} \in \activetypeof{\algrr_P}$ and 
  $\rho_{j+1} \notin \activetypeof{\algrr_P}$ for infinitely many $j$s, then $\rho$ is accepting.
\end{itemize}

Intuitively, the first condition ensures that for an accepted word, there is a run containing infinitely many switches between the passive and active type.
The second condition then expresses that the switch to the active phase can be postponed by a finite number 
of steps but still preserving the acceptance. The last one expresses that if a run encounters 
infinitely many switches between the active and passive, this run is accepting.
The correctness condition on the partial round-robin algorithm is then given as follows:


\begin{definition}
  We say that $\algrr$ is \emph{correct} if for each
  $\aut$ such that $\aut \modelssurrp \phi_\algrr$ 
  we have that $\algrr_P$ is consistent and $\langof{\modcompl(\algunion(\algrr_{P}), \aut)} = \Sigma^\omega\setminus\langof{\aut}$.
\end{definition}

\begin{restatable}{theorem}{thmRRCorrectness}
  \label{thm:rr-correctness}
  Let $\aut$ be a BA, $P_1, \ldots, P_n$ be a partitioning of $\aut$, and $\algrr^1, \ldots, \algrr^n$ be
  a~sequence of partial round-robin complementation algorithms such that~$\algrr^i$ is \emph{correct} for~$P_i$. 
  Then, $\langof{\modcomplrr(\algrr^1_{P_1}, \dots, \algrr^n_{P_n},\aut)} = \Sigma^\omega\setminus\langof{\aut}$.
\end{restatable}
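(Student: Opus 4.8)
The plan is to reduce the correctness of the round-robin construction $\modcomplrr$ to the correctness of the basic synchronous construction $\modcompl$ applied to the ``unioned'' partial algorithms $\algunion(\algrr^i_{P_i})$, and then patch up the discrepancy coming from the fact that the round-robin top-level algorithm only keeps \emph{one} partial algorithm in its active phase at a time. Concretely, I would first observe that, by the definition of correctness for round-robin partial algorithms, each $\algunion(\algrr^i_{P_i})$ is a correct partial complementation algorithm for $P_i$ in the sense of \cref{def:alg-correct}, so by \cref{thm:correctness} we already have $\langof{\modcompl(\algunion(\algrr^1_{P_1}),\dots,\algunion(\algrr^n_{P_n}),\aut)} = \Sigma^\omega\setminus\langof{\aut}$. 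The remaining task is therefore purely to compare the language of $\modcomplrr(\algrr^1_{P_1},\dots,\algrr^n_{P_n},\aut)$ with the language of this ``all-union'' synchronous automaton, and show they coincide.

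For the inclusion $\langof{\modcomplrr} \subseteq \Sigma^\omega\setminus\langof{\aut}$, I would take an accepting macrorun $\rho$ of $\modcomplrr$ on a word $\word$ and project it onto each coordinate $i$. In coordinate $i$, the run alternates between passive-type macrostates (when $\ell \neq i$) and active-type macrostates (when $\ell = i$); because the pointer $\ell$ cycles through $\{1,\dots,n\}$ and can only move when the current active algorithm produces a passive successor, acceptance of $\rho$ (which requires all conjuncts $\renumof{\GetAccOf{\algrr^i_{P_i}}}{i}$ to hold, hence in particular forces $\ell$ to keep advancing) guarantees infinitely many switches $\rho_j \in \activetypeof{\algrr^i_{P_i}}$, $\rho_{j+1}\notin\activetypeof{\algrr^i_{P_i}}$ in each coordinate. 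By condition (C3), each projected run is then accepting in $\modcompl(\algunion(\algrr^i_{P_i}),\aut)$, and since the transition structure of $\modcomplrr$ in coordinate $i$ is (by construction of $\Delta^{\cut}$ via $\GetSuccTOf{}$, $\GetSuccAOf{}$, $\LiftOf{}$) a sub-structure of $\GetSuccOf{\algunion(\algrr^i_{P_i})}$, we obtain an accepting macrorun of the all-union synchronous automaton on $\word$, so $\word\notin\langof{\aut}$. For the reverse inclusion, given $\word\notin\langof{\aut}$, apply \cref{thm:correctness} to get an accepting macrorun of $\modcompl(\algunion(\algrr^1_{P_1}),\dots,\algunion(\algrr^n_{P_n}),\aut)$; then use condition (C1) to extract, in each coordinate, an accepting run with infinitely many active-to-passive switches, and use condition (C2) repeatedly to \emph{re-time} those runs so that, globally, the active pointer $\ell$ can be scheduled round-robin: each time it is coordinate $i$'s turn, (C2) lets us postpone $i$'s entry into its active phase until the pointer actually reaches $i$ while still preserving acceptance, and the infinitely-many-switches property guarantees the pointer always eventually moves on. This interleaved run is exactly an accepting macrorun of $\modcomplrr$, so $\word\in\langof{\modcomplrr}$.

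The main obstacle I expect is the bookkeeping in the second inclusion: coordinating the three different successor functions of all $n$ partial algorithms into a single consistent macrorun of $\modcomplrr$ while respecting the round-robin discipline. The subtle point is that conditions (C1)--(C3) are stated about the \emph{single-block} automaton $\modcompl(\algunion(\algrr_P),\aut)$, so one must argue that they transfer to the multi-block synchronous setting and, crucially, that the re-timing of coordinate $i$'s run guaranteed by (C2) does not disturb the other coordinates (which stay passive, hence deterministic in the typical instantiation) — this should follow because the $H$-component $\delta(H,a)$, which drives all the passive successors, is shared and independent of the scheduling, but it needs to be spelled out. A secondary point is to make precise the claim that acceptance of a $\modcomplrr$-macrorun really does force the pointer $\ell$ to advance infinitely often through every value; this uses that $\acccond^{\cut}$ is a conjunction over all blocks together with the fact that if $\ell$ got stuck at some value $i_0$ forever, then from that point on every successor of the active coordinate $i_0$ stays in $\activetypeof{\algrr^{i_0}_{P_{i_0}}}$, and one argues via (C3)/(C1) that such a run cannot satisfy $\renumof{\GetAccOf{\algrr^{j}_{P_{j}}}}{j}$ for the blocks $j$ that never get their turn. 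Once these points are discharged, the two inclusions combine to give the claimed language equality.
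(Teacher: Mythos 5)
Your overall route is the same as the paper's: reduce everything to the ``unioned'' algorithms $\algunion(\algrr^i_{P_i})$ (whose correctness is exactly what the round-robin correctness definition gives, so \cref{thm:correctness} / the partitioning equality applies), prove the easy inclusion by projecting an accepting macrorun of $\modcomplrr$ onto each coordinate, and prove the hard inclusion by an inductive construction that uses (C1) to obtain per-block accepting runs with infinitely many active-to-passive switches and (C2), applied repeatedly, to \emph{postpone} a block's activation until the round-robin pointer reaches it (this repeated use of (C2) is precisely the paper's auxiliary Claim), finishing with (C3) to certify that the stitched-together coordinate runs are accepting. Whether you phrase the target as the $n$-block synchronous automaton $\modcompl(\algunion(\algrr^1_{P_1}),\dots,\algunion(\algrr^n_{P_n}),\aut)$ or, as the paper does, as the intersection $\bigcap_i \langof{\modcompl(\algunion(\algrr^i_{P_i}),\aut_{P_i})}$ is immaterial (the two coincide by the claim inside the proof of \cref{thm:correctness}).

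The one step that does not go through as stated is your treatment of the inclusion $\langof{\modcomplrr}\subseteq\Sigma^\omega\setminus\langof{\aut}$: you argue that acceptance of a $\modcomplrr$-macrorun ``forces $\ell$ to keep advancing'', hence yields infinitely many switches in every coordinate, and only then invoke (C3). That implication is not derivable from the framework: $\GetSuccTOf{}$ is allowed to emit colours, and nothing in (C1)--(C3) says that a run which stays forever passive (or forever active) cannot satisfy its block's acceptance condition -- (C1) only asserts the \emph{existence} of some accepting run with infinitely many switches, and (C3) is the converse direction; so your later sketch that never-activated blocks ``cannot satisfy'' their conjunct is exactly the unproved part. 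Fortunately the detour is unnecessary: as in the paper, the coordinate-$i$ projection of an accepting macrorun is by construction a run of $\modcompl(\algunion(\algrr^i_{P_i}),\aut_{P_i})$ whose transitions carry the very colours recorded by $\colouring^{\cut}$, so since $\acccond^{\cut}$ is the conjunction of the (renumbered) conditions $\GetAccOf{\algrr^i_{P_i}}$, each projection is accepting immediately, with no appeal to pointer advancement or (C3) in this direction. If you drop that claim and argue the easy inclusion directly, the rest of your plan matches the paper's proof.
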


\begin{proof}
  First, we propose the following auxiliary claim:
  
  \begin{claim}\label{claim-postponing}
    Let $\rho$ be an accepting run in $\modcompl(\algunion(\algrr_P), \aut_P)$ such that there are $i_1, i_2$, $i_1 \leq i_2$ and 
    $\forall i_1 \leq \ell \leq i_2$ we have $\rho_\ell\in \tracktypeof{\algrr_P}$ and $\rho_{i_2 + 1} \in \activetypeof{\algrr_P}$.
    Then, for each $k \geq i_1$ there is an accepting run $\rho'$ such that $\rho'_{k+1}\in \activetypeof{\algrr_P}$, $\forall i_1 \leq \ell \leq k$ we have $\rho'_\ell\in \tracktypeof{\algrr_P}$, and $\rho'_{1:\min(k,i_2)} = \rho_{1:\min(k,i_2)}$.
  \end{claim}
  \begin{claimproof}
    Follows directly from a multiple application of (C2).
  \end{claimproof}

  Since $P_1, \ldots, P_n$ is a partitioning of $\aut$,
  we have that 
  \begin{equation}\label{eq:part}
    \bigcup_{i=1}^n \langof{\aut_{P_i}} = \langof{\aut}.
  \end{equation}

  Now, we proceed to the proof of the theorem.
  
  $(\subseteq)$ Let $\varrho = (M^1_1, \dots, M^1_n, i_1)\dots (M^k_1, \dots, M^k_n, i_k)\dots$ be an 
  accepting run in $\modcomplrr(\algrr^1_{P_1}, \dots, \algrr^n_{P_n},\aut)$ on $\word$.
  From the construction, we have that $M^1_\ell \dots $ is an accepting run in $\modcompl(\algunion(\algrr^{\ell}_{P_{\ell}}), \aut_{P_\ell})$. 
  Therefore, $\word \in \bigcap_{\ell = 1}^n \langof{\modcompl(\algunion(\algrr^{\ell}_{P_{\ell}}), 
  \aut_{P_\ell})} = \bigcap_{\ell = 1}^n \Sigma^\omega\setminus \langof{\aut_{P_\ell}} = \Sigma^\omega \setminus \langof{\aut}$.
  The latter follows from the correctness condition on $\algrr_P$ and from~\eqref{eq:part}.

  $(\supseteq)$ Consider a word $\word \in \Sigma^\omega \setminus \langof{\aut}$.
  We show by induction that there is also an accepting run $\varrho = (M^1_1, \dots, M^1_n, i_1)\dots (M^k_1, \dots, M^k_n, i_k)\dots$ in the automaton $\modcomplrr(\algrr^1_{P_1}, \dots, \algrr^n_{P_n},\aut)$ 
  and moreover for each $j$, $M^1_\ell\dots M^j_\ell$ is a prefix of an accepting run of $\word$ in $\modcompl(\algunion(\algrr^{\ell}_{P_{\ell}}), \aut_{P_{\ell}})$ for 
  each $\ell$. In the following, when we use (accepting) run, we implicitly mean on $\word$.
  \begin{itemize}
    \item Base case: Since $\rho_1$ satisfies the condition (C2), there is an accepting run 
    $\rho'$ from (C2) such that $\rho'(1)\in\activetypeof{\algrr^1_{P_1}}$. Moreover, there are 
    also runs $\rho'_2, \dots, \rho'_n$ such that $\rho'_i(1)\in\tracktypeof{\algrr^1_{P_1}}$.
    We hence set $\varrho_1 = ((\rho'_1)_1, \dots, (\rho'_n)_1, 1)$.
    
    \item Inductive case: Let $(M^1_1, \dots, M^1_n, i_1)\dots (M^k_1, \dots, M^k_n, i_k)$ be a sequence of first $k$ macrostates of $\varrho$.
    We prove that there is also $(k+1)$-th macrostate of $\varrho$. Since $M^k_{i_k} \in\activetypeof{\algrr^{i_k}_{P_{i_k}}}$ and 
    moreover, $M^1_{i_k}\dots M^k_{i_k}$ is a prefix of an accepting run $\rho'$ in the automaton $\modcompl(\algunion(\algrr^{i_k}_{P_{i_k}}), \aut_{P_{i_k}})$.
    Now assume that $(\rho'_{i_k})_{k+1} \in \activetypeof{\algrr^{i_k}_{P_{i_k}}}$. Since $p_\ell = M^1_{\ell}\dots M^k_{\ell}$ is the prefix of 
    an accepting run in $\modcompl(\algunion(\algrr^{\ell}_{P_{\ell}}), \aut_{P_{\ell}})$ for each $\ell\neq i_k$ and on top of that 
    $M^k_{\ell} \in \tracktypeof{\algrr^{\ell}_{P_{\ell}}}$. Therefore, from (C2) there are accepting runs $\rho'_\ell$ from (C2) 
    satisfying that $(\rho'_\ell)_{k+1}$ is of the corresponding passive type. We hence set $((\rho'_1)_{k+1}, \dots, (\rho'_n)_{k+1}, i_k)$ as 
    the $(k+1)$-th macrostate of $\varrho$.
    Now assume that $(\rho'_{i_k})_{k+1} \notin \activetypeof{\algrr^{i_k}_{P_{i_n}}}$. Let $j = (i_k \mod n) + 1$.
    $p_j = M^1_{j}\dots M^k_{j}$ is the prefix of 
    an accepting run in $\modcompl(\algunion(\algrr^{j}_{P_{j}}), \aut_{P_{j}})$, from \cref{claim-postponing} we obtain that there is also 
    an accepting run $\rho'_j$ extending $p_j$ such that $(\rho'_j)_{k+1} \in \activetypeof{\algrr^{j}_{C_{j}}}$. Further, since 
    $p_\ell = M^1_{\ell}\dots M^k_{\ell}$ is the prefix of 
    an accepting run in $\modcompl(\algunion(\algrr^{\ell}_{P_{\ell}}), \aut_{P_{\ell}})$ for each $\ell\notin \{j, i_k\}$ and on top of that 
    $M^k_{\ell} \in \tracktypeof{\algrr^{\ell}_{P_{\ell}}}$. Therefore, from (C2) there are accepting runs $\rho'_\ell$ from (C2) 
    satisfying that $(\rho'_\ell)_{k+1}$ is of the corresponding passive type. We set $((\rho'_1)_{k+1}, \dots, (\rho'_n)_{k+1}, j)$ as 
    the $(k+1)$-th macrostate of $\varrho$.
  \end{itemize}
  It can be easily shown that $\varrho$ is a run in $\modcomplrr(\algrr^1_{P_1}, \dots, \algrr^n_{P_n},\aut)$. Since each component 
  contains infinitely many switches between the passive and the active phase, we have from (C3) that each partial run is 
  accepting in the corresponding $\modcompl(\algunion(\alg^i_{P_i}))$ and hence $\varrho$ is accepting as well.
  \qed
\end{proof}

\subsection{Complementation of Inherently Weak Components}
In this section, we define the algorithm $\algrr = \iwcrr$ with a condition $\phi_\iwcrr$ 
specifying the condition that a~partition $P$ is inherently weak and \emph{accepting}. 

We formalize the instance $\iwcrr_P$ as follows:

\begin{itemize}
  \item $\onetypeof {\iwcrr_P} = \tracktypeof{\iwcrr_P} \cup \activetypeof{\iwcrr_P}$ where $\tracktypeof{\iwcrr_P} = 2^P$ and $\activetypeof{\iwcrr_P} = 2^P \times 2^P$ ,
  \item  $\colourtypeof{\iwcrr_P} = \{\tcacc{1}{0}\}$ ,
  \item  $\GetInitOf{\iwcrr_P} = \{\inits \cap P\}$,
  \item $\GetSuccAOf{\iwcrr_P}(H, (C, B), a) = 
    \begin{cases}
        \{ (\trans(H, a) \cap P, \{ \tcacc{1}{0} \}) \} & \text{if } B' = \emptyset \text{ for } \\  & B' = \trans(B,a) \cap P \\ 
        \{ ((\trans(H, a) \cap P, B'), \emptyset) \} & \text{otherwise},
    \end{cases}$
  \item $\GetSuccTOf{\iwcrr_P}(H, C, a) = \{ (\trans(H, a) \cap P, \emptyset) \}$,
  \item $\LiftOf{\iwcrr_P}(C) = \{(C, C)\}$ ,
  \item $\GetAccOf{\iwc_P} = \Infof {\tcacc{1}{0}}$.
\end{itemize}

\begin{lemma}
  The partial round-robin algorithm $\iwcrr$ is correct.
\end{lemma}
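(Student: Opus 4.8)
The plan is to verify the two clauses in the definition of a correct partial round-robin algorithm for every $\aut$ with $\aut\modelssurrp\phi_\iwcrr$: that $\iwcrr_P$ is \emph{consistent}, i.e.\ it satisfies conditions (C1)--(C3), and that $\langof{\modcompl(\algunion(\iwcrr_P),\aut)} = \Sigma^\omega\setminus\langof{\aut}$.

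First I would spell out the automaton $\modcompl(\algunion(\iwcrr_P),\aut)$. Its macrostates are pairs $(H,M)$ where $H\subseteq\states$ is the reachable fragment (so along any macrorun on $w$ we have $H_t = \delta(\inits,w_0\cdots w_{t-1})$) and $M$ is either a \emph{passive} set $C\subseteq P$ or an \emph{active} pair $(C,B)$; a routine invariant gives $C = H\cap P$ and $B\subseteq C$, and the initial macrostate $(\inits,\inits\cap P)$ is passive. Unfolding $\algunion$ on the clauses for $\GetSuccTOf{\iwcrr_P}$, $\LiftOf{\iwcrr_P}$ and $\GetSuccAOf{\iwcrr_P}$ one reads off: from a passive $C$ over $a$ there are exactly two successors, the passive set $\delta(H,a)\cap P$ and the active pair $(\delta(H,a)\cap P,\delta(H,a)\cap P)$, both with colour $\emptyset$; from an active $(C,B)$ over $a$, writing $B'=\delta(B,a)\cap P$, the unique successor is the passive set $\delta(H,a)\cap P$ with colour $\{\tcacc{1}{0}\}$ when $B'=\emptyset$ and the active pair $(\delta(H,a)\cap P,B')$ with colour $\emptyset$ otherwise. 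Hence $\tcacc{1}{0}$ is emitted \emph{precisely} on active-to-passive transitions, and since the acceptance condition is $\Infof{\tcacc{1}{0}}$, a macrorun is accepting iff it takes infinitely many active-to-passive transitions. This immediately gives (C1) (any accepting macrorun of a word in the language witnesses it) and (C3) (infinitely many active-to-passive transitions directly give $\tcacc{1}{0}$ infinitely often).

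The core step is the identity $\langof{\modcompl(\algunion(\iwcrr_P),\aut)} = \{w : \text{no run of }\aut\text{ on }w\text{ eventually stays in }P\}$. For ``$\supseteq$'', from any $w$ in the right-hand set I construct an accepting macrorun that lifts to active on the first step and, whenever the breakpoint becomes empty, immediately re-lifts; every breakpoint it opens at a state $(C_s,C_s)$ with $C_s = H_s\cap P$ must become empty after finitely many steps, since otherwise the successive breakpoint sets $B_s,B_{s+1},\dots$ are all non-empty and by K\"onig's lemma there is an infinite path inside $P$ starting in $C_s$, which prefixed by a run of $\aut$ reaching $C_s$ is a run of $\aut$ eventually staying in $P$, contradicting the choice of $w$; so $\tcacc{1}{0}$ is emitted infinitely often. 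For ``$\subseteq$'' I argue contrapositively: if some run $\rho$ of $\aut$ has $\rho_t\in P$ for all $t\geq s_0$, then in any macrorun $\hat\rho$ every active segment is entered by a lift (the initial macrostate is passive), hence starts in a state $(C_{s'},C_{s'})$; if $s'\geq s_0$ then $\rho_{s'}\in H_{s'}\cap P = C_{s'}$, and a straightforward induction using $B_{t+1}=\delta(B_t,w_t)\cap P$ yields $\rho_t\in B_t$ throughout the segment, so the segment never clears and is infinite, whence $\hat\rho$ has only finitely many active-to-passive transitions and is rejecting. To turn this into the language clause I then observe $\{w : \text{no run of }\aut\text{ eventually stays in }P\} = \Sigma^\omega\setminus\langof{\aut}$: under $\aut\modelssurrp\phi_\iwcrr$ all accepting transitions lie inside SCCs of $P$, each such SCC is an accepting IWC in which every cycle carries an accepting transition, and by acyclicity of the SCC condensation a run sees accepting transitions infinitely often iff it is eventually trapped in a single SCC of $P$ -- hence a run is accepting in $\aut$ exactly when it eventually stays in $P$. (Equivalently, this set is also the language of $\modcompl(\iwc_P,\aut)$, so one may conclude via \cref{lem:correctness-iwcs}.)

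It remains to check consistency condition (C2). Given an accepting macrorun $\rho$ and an index $i$ with $\rho_{i-1}$ passive, both successors of $\rho_{i-1}$ over $w_{i-1}$ -- the passive one and the active lift -- are available, so we get macroruns $\rho'$ and $\rho''$ agreeing with $\rho$ up to step $i-1$, with $\rho'_i$ passive and $\rho''_i$ active. Each of $\rho',\rho''$ extends to an accepting macrorun: since $\rho$ is accepting, the ``$\subseteq$'' analysis shows no run of $\aut$ on $w$ eventually stays in $P$, so from step $i$ (respectively from the first lift after it) one can keep opening breakpoints, all of which are guaranteed to clear by the same K\"onig's-lemma argument, emitting $\tcacc{1}{0}$ infinitely often. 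Together with (C1), (C3) and the language identity, this establishes that $\iwcrr$ is correct. I expect the main obstacle to be the bookkeeping of the three nested layers $H\supseteq C\supseteq B$ across passive/active switches together with the careful use of K\"onig's lemma; once the $\tcacc{1}{0}$-emission rule is pinned down, conditions (C1), (C3) and the structural step are routine.
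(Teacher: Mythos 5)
Your proposal is correct and takes essentially the same route as the paper: the crux in both is that the colour \tcacc{1}{0} is emitted exactly on active-to-passive switches, which occur precisely when the breakpoint empties, and that the return to the active phase may be postponed by any finite number of steps, yielding (C1)--(C3). The paper's argument is only a short sketch of this and leaves the language-equality clause of correctness implicit, whereas you additionally establish it explicitly via a K\"onig's-lemma characterization (``no run eventually stays in $P$'') that mirrors the paper's correctness proof for the basic $\iwc$ algorithm.
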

\begin{proof} \emph{(Sketch)}
   Consider a~BA $\aut$ such that $\aut \models_P \phi_{\iwcrr}$ and a~word \\ $\word\in\langof{\modcompl(\algunion(\iwcrr_P), \aut)}$. An accepting run $\phi$ must contain infinitely many accepting transitions labeled with \tcacc{1}{0}. Such transitions go from the active to the passive state. We can therefore switch to the passive state after emptying $B'$ and then after at least one step switch back to the active state. That satisfies the condition (C1). 
   It is not important how many steps we make before switching back to the active state, but it has to be a finite number. That satisfies the condition (C2). The condition (C3) is also satisfied because we switch from active to passive phase only when $B'$ is empty, i.e., only when an accepting transition is taken. 
   \qed
\end{proof}


\subsection{Complementation of Deterministic Accepting Components}
In the following, we define the algorithm $\alg = \dacrr$ with a condition $\phi_\dacrr$ 
specifying the condition that a~partition $P$ is \emph{deterministic within the SCCs}.

We formalize the instance $\dacrr_P$ as below:

\begin{itemize}
  \item $\onetypeof{\dacrr_P} = \tracktypeof{\dacrr_P} \cup \activetypeof{\dacrr_P}$ where $\tracktypeof{\dacrr_P} = 2^P \times 2^P$ and $\activetypeof{\dacrr_P} = 2^P \times 2^P \times 2^P$,
  \item  $\colourtypeof {\dac_P} = \{\tcacc{0}{0}\}$,
  \item $\GetInitOf{\dacrr_P} = \{(\inits \cap P, \emptyset)\}$
  \item $\GetSuccAOf{\dacrr_P}(H, (C, S, B), a) = U$ such that 
    \begin{itemize}
      \item if $\transacc(S, a) \cap P \neq \emptyset$, then $U = \emptyset$,
      \item  otherwise $U$ contains the pair $(V, c)$ where
        \begin{itemize}
          \item $V = \begin{cases}
              (C', S') & \text{if } B^\star = \emptyset \text{ for } B^\star = \transscc(B, a) \\
              (C', S', B^\star) & \text{otherwise}  
          \end{cases}$
          \item $S' = \transscc(S, a) \cap P$, 
          \item $C' = (\trans(H, a) \cap P) \setminus S'$, 
          \item $c = \begin{cases}
              \{\tcacc{0}{0}\} & \text{if } B^\star = \emptyset \text{ and}\\
              \emptyset & \text{otherwise}.
          \end{cases}$
        \end{itemize}
        and if $\transacc(B, a) \cap \transscc(B, a) = \emptyset$, then~$U$ also contains the pair
        $((C'', S''), \{\tcacc{0}{0}\})$ where
        \begin{itemize}
          \item  $S'' = S' \cup B'$ and
          \item  $C'' = C' \setminus S''$.
        \end{itemize}
    \end{itemize}
  \item $\GetSuccTOf{\dacrr_P}(H, (C, S), a) = U$ such that 
    \begin{itemize}
      \item if $\transacc(S, a) \cap P \neq \emptyset$, then $U = \emptyset$,
      \item otherwise $U = \{ ((C', S'), \emptyset) \}$ where 
        \begin{itemize}
          \item $S' = \trans(S, a) \cap P$, and 
          \item $C' = (\trans(C, a) \cap P) \setminus S'$,
        \end{itemize}
    \end{itemize}
    
  \item $\LiftOf{\dacrr_P}(C, S) = \{(C, S, C)\}$  
  \item $\GetAccOf{\dacrr_P} = \Infof{\tcacc{0}{0}}$.
  \end{itemize}

  \begin{lemma}
    The partial round-robin algorithm $\dacrr$ is correct.
  \end{lemma}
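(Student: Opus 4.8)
Fix a~BA $\aut$ with $\aut \modelssurrp \phi_\dacrr$ and let $P$ be the corresponding partition block (so $P$ is a~union of accepting SCCs, each deterministic within itself, and $\aut$ has no accepting transitions outside~$P$). By the definition of correctness for round-robin algorithms, I have to establish (a)~the language identity $\langof{\modcompl(\algunion(\dacrr_P), \aut)} = \Sigma^\omega \setminus \langof{\aut}$ and (b)~that $\dacrr_P$ is \emph{consistent}, i.e.\ satisfies (C1)--(C3). The guiding observation is that $\algunion(\dacrr_P)$ behaves almost like the basic algorithm $\dac_P$ of \cref{sec:dac-complement}: its active macrostate $(C,S,B)$ is exactly the NCSB-style triple of $\dac_P$, and its passive macrostate $(C,S)$ is a~``dormant breakpoint'' configuration that $\Lift$ re-arms by setting $B:=C$; the only real differences are that the breakpoint may be suspended for finitely many steps and that runs newly entering~$P$ are folded back into~$C$ lazily (at the next active step, via $\trans(H,a)$).

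For (a) I would re-run the argument behind \cref{lem:dac-correctness} rather than invoke it as a~black box, since the transition structures differ, using the same semantic criterion: $w$ is \emph{not} accepted within~$P$ iff every run of~$\aut$ over~$w$ that enters~$P$ eventually leaves~$P$ or becomes \emph{safe}. For $\Sigma^\omega\setminus\langof{\aut}\subseteq\langof{\modcompl(\algunion(\dacrr_P),\aut)}$, given $w\notin\langof{\aut}$ I would exhibit the macrorun that keeps~$S$ equal to the set of runs that have already become safe --- moving all surviving breakpoint runs into~$S$ via the safe-guess branch at the (finitely determined) moment when every breakpoint run that has not yet left~$P$ has become safe, which exists because each DAC is deterministic so the breakpoint stays finite --- and re-arms $B:=C$ as soon as~$B$ empties; since $w\notin\langof{\aut}$, every run tracked in~$B$ eventually leaves~$P$ or is moved into~$S$, so~$B$ empties infinitely often and $\tcacc{0}{0}$ is emitted infinitely often. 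For the converse inclusion, given an accepting macrorun~$\rho$ I would argue as in \cref{lem:dac-correctness}: the side condition $\transacc(S,a)=\emptyset$ forces every run ever placed into~$S$ to be safe thereafter; immediately after each (infinitely recurring) active step one has $C\cup S = \trans(H,a)\cap P$, so $C\cup S$ tracks all runs of~$\aut$ in~$P$; hence every run of~$\aut$ that stays in~$P$ forever must eventually be moved into~$S$ --- otherwise, once it has settled in a~single SCC of~$P$ it can leave the breakpoint only via a~safe-guess (which would move it to~$S$), so neither a~safe-guess nor an emptying of~$B$ can occur again and no further $\tcacc{0}{0}$ is emitted, contradicting acceptance --- and so becomes safe. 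Thus every run of~$\aut$ over~$w$ in~$P$ eventually leaves~$P$ or becomes safe, i.e.\ $w\notin\langof{\aut}$.

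For (b), consistency follows from the shape of the successor functions together with~(a). The colour $\tcacc{0}{0}$ is emitted only on transitions out of an active macrostate into a~passive one: both the $B^\star=\emptyset$ branch and the safe-guess branch of $\GetSuccAOf{\dacrr_P}$ end in a~passive macrostate and are coloured $\tcacc{0}{0}$, and every other transition is uncoloured. This at once gives (C1) --- any accepting macrorun already leaves the active type infinitely often --- and (C3) --- infinitely many active-to-passive steps mean infinitely many $\tcacc{0}{0}$, hence acceptance. For (C2): from a~passive macrostate $(C,S)$ occurring along an accepting macrorun, reading the next symbol one can either remain passive via $\GetSuccTOf{\dacrr_P}$ (giving $\rho'$ with $\rho'_i=(C',S')$) or apply $\Lift$ (giving $\rho''$ with $\rho''_i=(C',S',C')$, with the same $(C',S')$); in both cases the obtained prefix still consistently tracks the runs of~$\aut$, so by the completeness argument of~(a) it extends to an accepting macrorun --- for $\rho''$, the freshly armed breakpoint~$C'$ empties again because all of its runs eventually leave~$P$ or are safe-guessed into~$S$.

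The main obstacle is part~(a), specifically the reconciliation of $\algunion(\dacrr_P)$ with $\dac_P$: because the passive phase maintains only the safe set and absorbs incoming runs lazily, \cref{lem:dac-correctness} cannot be applied directly and one has to re-verify carefully that suspending and re-arming the breakpoint --- and deferring the re-incorporation of incoming runs to the next active step --- preserves both soundness and completeness. As in the NCSB analysis, it is the determinism of runs inside each DAC (which bounds the number of breakpoint runs, reconciles the state-based tracking of~$S$ with runs, and guarantees a~simultaneously-safe moment) that makes the re-arming argument go through.
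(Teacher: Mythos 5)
Your proposal is correct and follows essentially the same route as the paper: the paper's own proof is only a brief sketch establishing the consistency conditions (C1)--(C3) via exactly your key observation that the colour \tcacc{0}{0} is emitted precisely on the active-to-passive switches (breakpoint emptied or safe-guess taken), and that the stay in the passive phase can be prolonged by any finite number of steps without affecting acceptance. Your additional re-derivation of the language identity for $\algunion(\dacrr_P)$ --- re-running the argument of \cref{lem:dac-correctness} instead of citing it as a black box, and noting the lazy absorption of newly incoming runs ($\trans(C,a)$ in the passive phase versus $\trans(H,a)$ in the active phase) --- soundly fills in the part that the paper's sketch leaves implicit.
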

  \begin{proof} \emph{(Sketch)}
     Consider a~BA $\aut$ such that $\aut \models_P \phi_{\dacrr}$ and a~word \\ $\word\in\langof{\modcompl(\algunion(\dacrr_P), \aut)}$. An accepting run $\phi$ must contain infinitely many accepting transitions labeled with \tcacc{0}{0}. Such transitions go from the active to the passive state. We can therefore switch to the passive state after emptying $B^\star$ and then after at least one step switch back to the active state. That satisfies the condition (C1). 
     It is not important how many steps we make before switching back to the active state, but it has to be a finite number. That satisfies the condition (C2). The condition (C3) is also satisfied because we switch from active to passive phase only when $B^\star$ is empty, i.e., only when an accepting transition is taken. 
     \qed
  \end{proof}

\section{Simulation-based Optimizations}
\label{sec:sim-details}

In the following, we, as in the main text, fix a BA $\aut = (Q, \delta, I, F)$ with
the set of maximal SCCs $\{ \C_1, \dots, \C_n \}$. In this section, we generalize the results introduced in~\cite{HavlenaLS22b}. We use $p\leadsto q$ to
denote that $q$ is reachable from $p$. Let $\word\in\Sigma^\omega$ be a word.
Let $\Pi, \Pi'$ be sets of traces over $\word$. We say that $\Pi$ and $\Pi'$
are \emph{acc-equivalent}, denoted as $\Pi \sim \Pi'$ if $\exists\pi\in\Pi: \pi$
is accepting in $\aut$ iff $\exists\pi'\in\Pi': \pi'$ is accepting in $\aut$.

We recall here the definition of the tie-breaking function and the pruning 
relation from \cref{sec:simulations}. Let $\num \colon Q \to \omega$ be a function 
satisfying the following conditions for each $p,q \in Q$:
\begin{inparaenum}[(i)]
  \item $\num(p) = \num(q)$ iff $p,q \in\C_i$ and
  \item if $p\leadsto q$ then $\num(p) \leq \num(q)$.
\end{inparaenum}
Let ${\sqsubseteq}\subseteq Q\times Q$ be a~relation on the states of~$\aut$
defined as follows:
$p \sqsubseteq q$ iff
\begin{inparaenum}[(i)]
	\item $p \simul q$ and
	\item $\num(p) < \num(q)$
\end{inparaenum}
The pruning function $\pr\colon 2^Q \to 2^Q$ is then defined for
each $S \subseteq Q$ as $\pr(S) =S'$ where $S'\subseteq S$ is the smallest set such that
$\forall q \in S \exists q'\in S' \colon q\sqsubseteq q'$
%
Informally, $\pr$ removes simulation-smaller states. 

%
Let $\rho = S_1S_2\dots$ be a sequence of sets of states and $\word$ be a word.
We define $\Pi_\rho$ to be a set of traces over $\word$ matching the sets of
states. Formally, $\Pi_\rho = \{ \pi \mid \pi \text{ over } \word, \pi_i \in
S_i \text{ for each } i \}$. 
Further, for a set of states $B$ we
use $\rho_\word^B$ to denote the sequence $S_1S_2\dots$ such that $S_1 = B$,
$S_{i+1} = \delta(S_i, \word_i)$ for each $i\in\omega$. We use $\rho_\word$ to
denote $\rho_\word^I$. Moreover, for a given mapping $\theta \colon 2^Q \to 2^Q$ and
a sequence of sets of states $\rho_\word^B$ we define $\theta(\rho_\word^B) =
\theta(B)B_2 \dots$ where $B_{i+1} = \theta(\delta(B_i, \word_i))$ for each
$i\in\omega$. A trace $\pi$ is \emph{eventually simulated} by $\pi'$ if there is
some $i \in\omega$ such that $\pi_{i:\omega} \simul \pi'_{i:\omega}$.


\begin{lemma}\label{lem:pr-trace}
	Let $\word$ be a word. Then, $\Pi_{\rho_\word}\sim \Pi_{\pr(\rho_\word)}$.
\end{lemma}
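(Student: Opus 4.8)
The plan is to show both directions of acc-equivalence between $\Pi_{\rho_\word}$ and $\Pi_{\pr(\rho_\word)}$. The inclusion $\Pi_{\pr(\rho_\word)} \subseteq \Pi_{\rho_\word}$ in the relevant sense is easy: by definition $\pr(S) \subseteq S$ for every $S$, so every set in the sequence $\pr(\rho_\word)$ is a subset of the corresponding set in $\rho_\word$ — wait, more carefully, $\pr(\rho_\word)$ is defined via $B_{i+1} = \pr(\delta(B_i,\word_i))$, so it is \emph{not} literally a pointwise subset of $\rho_\word$. I would first establish by induction on $i$ that the $i$-th set of $\pr(\rho_\word)$ is contained in the $i$-th set of $\rho_\word$: for the step, $\pr(\delta(B_i,\word_i)) \subseteq \delta(B_i,\word_i) \subseteq \delta(S_i,\word_i) = S_{i+1}$, using the induction hypothesis $B_i \subseteq S_i$ and monotonicity of $\delta$ in its first argument. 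This immediately gives $\Pi_{\pr(\rho_\word)} \subseteq \Pi_{\rho_\word}$, hence the forward implication of $\sim$ (if $\pr(\rho_\word)$ has an accepting trace, so does $\rho_\word$).

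The interesting direction is the converse: if $\Pi_{\rho_\word}$ contains an accepting trace $\pi$, then $\Pi_{\pr(\rho_\word)}$ contains one too. The idea is that at each step, $\pr$ only discards a state $p$ when some simulation-larger, mutually-unreachable state $q$ (with $\num(p) < \num(q)$) is retained; since $p \simul q$ implies $\langof{\autof p} \subseteq \langof{\autof q}$, the discarded run can be ``redirected'' into a run through $q$ that is still accepting. First I would argue that a trace $\pi \in \Pi_{\rho_\word}$ eventually stays within a single SCC: because SCCs form a DAG and $\pi$ is infinite, after finitely many steps it enters an SCC it never leaves, and all infinitely-often-occurring states of $\pi$ lie in that one SCC. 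Within that tail, the key observation is that $\pr$ can never prune a state $p$ in favour of a $q$ in the \emph{same} SCC as $p$: indeed $p \sqsubseteq q$ requires $\num(p) < \num(q)$, which by condition (i) on $\num$ forces $p$ and $q$ to be in different SCCs. So on the tail where $\pi$ lives inside one SCC $\C_i$, the states of $\pi$ are never the ``pruned'' ones — but this is not quite enough, because the sequence $\pr(\rho_\word)$ at that index may still not contain $\pi_i$ if $\pi_i$ was pruned \emph{earlier} (before $\pi$ settled into $\C_i$) and never re-enters.

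To handle that, I would instead build the witnessing trace in $\Pi_{\pr(\rho_\word)}$ step by step using a ``simulation-transfer'' argument: maintain a trace $\pi'$ with $\pi'_i$ in the $i$-th set of $\pr(\rho_\word)$ and $\pi_{i:\omega} \simul \pi'_{i:\omega}$, i.e., $\pi$ is eventually simulated by $\pi'$ from index $i$ on. At each step, given $\pi_i \simul \pi'_i$ (with $\pi'_i$ present in the pruned set), take the successor $\pi_{i+1} \in \delta(\pi_i,\word_i)$; since $\pi'_i$ simulates $\pi_i$, there is a successor $s \in \delta(\pi'_i, \word_i)$ with $\pi_{i+1} \simul s$ and matching acceptance status of the transition. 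Now $s \in \delta(\pi'_i,\word_i)$; if $s$ survives pruning, set $\pi'_{i+1} = s$; if $\pr$ removed $s$ in favour of some $t$ with $s \sqsubseteq t$, set $\pi'_{i+1} = t$, and by transitivity of $\simul$ we still have $\pi_{i+1} \simul \pi'_{i+1}$. The main obstacle — and the step I expect to need the most care — is showing that this redirected trace $\pi'$ is itself \emph{accepting}: each individual pruning step only guarantees simulation, which preserves language but along an infinite redirection the acceptance could in principle ``escape to infinity''. Here I would use that $\pi$ ultimately remains in one SCC $\C_i$ where (as noted) no pruning in favour of a same-SCC state occurs, so from that point on the ``redirect'' case $s \sqsubseteq t$ with $t$ in a different SCC would move $\pi'$ out of $\C_i$ permanently; combined with $\pi_{i:\omega} \simul \pi'_{i:\omega}$ and the fact that simulation-from-a-point transfers acceptance of the residual word ($\langof{\autof{\pi_i}} \subseteq \langof{\autof{\pi'_i}}$ and $\pi$ accepting witnesses $\word_{i:\omega} \in \langof{\autof{\pi_i}}$), we get that the residual word is accepted from $\pi'_i$, and then a standard argument (picking $\pi'$ along an accepting run for the residual word, consistent with the pruned sequence via the redirect construction) yields an accepting trace in $\Pi_{\pr(\rho_\word)}$. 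This establishes $\Pi_{\rho_\word} \sim \Pi_{\pr(\rho_\word)}$.
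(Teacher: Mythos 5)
Your treatment of the easy inclusion (inductively, the $i$-th pruned set is contained in the $i$-th set of $\rho_\word$) is fine, and your overall plan for the hard direction---replace an accepting trace that gets pruned by a $\sqsubseteq$-larger, simulating one, exploiting the function $\num$---is the same idea as the paper's proof. However, the way you complete it has two genuine gaps. First, your redirect step does not produce a trace: when the matched successor $s \in \delta(\pi'_i,\word_i)$ is pruned in favour of some $t$ with $s \sqsubseteq t$, the state $t$ is only guaranteed to be a successor of \emph{some} state of the pruned set $B_i$, not of $\pi'_i$, so setting $\pi'_{i+1}=t$ may violate $\pi'_{i+1}\in\delta(\pi'_i,\word_i)$. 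What your construction yields is a pointwise-simulating sequence whose members lie in the pruned sets, not a member of $\Pi_{\pr(\rho_\word)}$, which must be a path of the run DAG.

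Second, your acceptance argument is circular. From $\langof{\autof{\pi_i}}\subseteq\langof{\autof{\pi'_i}}$ you obtain an accepting run of the residual word from $\pi'_i$, but that run is an arbitrary run of $\aut$ and has no reason to remain inside the later pruned sets; making it ``consistent with the pruned sequence via the redirect construction'' is exactly the problem you set out to solve, so nothing is gained. The missing step---which is the heart of the paper's proof---is a termination argument: every redirect strictly increases $\num$ (because $s\sqsubseteq t$ forces $\num(s)<\num(t)$), while $\num$ is non-decreasing along transitions and takes finitely many values, so along the sequence you build only finitely many redirects can occur. After the last one, your construction always takes the ``$s$ survives'' branch, so the suffix is a genuine path inside the pruned sets that step-wise simulates the accepting suffix of $\pi$ and hence is accepting; a finite prefix reaching it within the pruned sets exists because every state of $B_{i+1}=\pr(\delta(B_i,\word_i))$ has a $\delta$-predecessor in $B_i$. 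Your same-SCC observation is the right raw material for this, but you apply it to $\pi$ (which is irrelevant, since $\pi'$ need not stay in $\pi$'s SCC) and never convert it into the finiteness conclusion; the paper phrases the same argument as a finite chain of traces, each pruned in favour of the next with strictly increasing $\num$ at the pruning points, whose last element lies in $\Pi_{\pr(\rho_\word)}$ and eventually simulates $\pi$.
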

\begin{proof}
	First observe that $\Pi_{\pr(\rho_\word)} \subseteq \Pi_{\rho_\word}$.
	Therefore, it suffices to show that if there is an accepting trace $\pi \in
	\Pi_{\rho_\word}$, then there is also an accepting trace $\pi' \in
	\Pi_{\pr(\rho_\word)}$. We assume the former and we now show that there is $\pi'
	\in \Pi_{\pr(\rho_\word)}$ such that $\pi$ is eventually simulated by
	$\pi'$. 
    If $\pi \in \Pi_{\pr(\rho_\word)}$ we are done.
    Now, assume that this is not the case and
	that there is a maximum set of traces $P = \{ \pi^1, \pi^2, \dots \}\subseteq
	\Pi_{\rho_\word}$ with indices $\ell_1 < \ell_2 < \dots$ such that $p_i =
	\pi^i_{\ell_i} \sqsubseteq \pi^{i+1}_{\ell_i} = p_i'$ for each $i$, and
	moreover $\pi_1 = \pi$. 
  From the definition, we further have $\num(p_i) < \num(p_i')$. Since the
  numbers are finite and you cannot reach a state with lower number, the
  sequence eventually stabilizes and hence $P$ is finite. Since the set $P = \{
  \pi_1, \dots, \pi_n \}$ is maximum and finite, we have $\pi_n \in
  \Pi_{\pr(\rho_\word)}$. Moreover, $\pi' = \pi_n$ eventually $\simul$-simulates
  $\pi$ (given by the step-wise property of simulation), which concludes
  the proof.
  \qed
\end{proof}


Consider a function $\theta \colon 2^Q \to 2^Q$ and
let the \emph{run DAG} of~$\aut$ over
a~word~$\word$ wrt.\@ $\theta$ be a~DAG (directed acyclic graph) $\dagw^\theta = (V,E)$ containing
vertices~$V$ and edges~$E$ such that
\begin{itemize}
  \setlength{\itemsep}{0mm}
  \item  $V \subseteq Q \times \omega$ such that $(q, i) \in V$ iff $q\in\tau_i$ where $\tau = \theta(\rho_\word^I)$,
  \item  $E \subseteq V \times V$ such that~$((q, i), (q',i')) \in E$ iff $i' = i+1$
    and $q' \in \trans(q, \wordof i)$.
\end{itemize}
We use $\dagw$ to denote $\dagw^\id$. We say that $\dagw^\theta$ is accepting if
there is a path in the graph encountering infinitely many times a vertex
corresponding to accepting state/transition. 

\begin{lemma}
  Let $\word$ be a word. Then, $\dagw^\pr$ is accepting iff $\dagw$ is accepting.
\end{lemma}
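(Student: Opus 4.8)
The plan is to establish the two implications separately, using Lemma~\ref{lem:pr-trace} as the key bridge between the DAG formulation and the trace formulation. First I would record the basic observation that for every $i$, the $i$-th level of $\dagw^\pr$ is $\tau_i$ where $\tau = \pr(\rho_\word^I)$, and that $\tau_i \subseteq (\rho_\word)_i$ (pruning only removes states), so $\dagw^\pr$ is a subgraph of $\dagw$ restricted level-wise. The easy direction is therefore $(\Rightarrow)$: any accepting path in $\dagw^\pr$ is in particular a path in $\dagw$ (since pruning never removes an edge between two surviving vertices, as edges are determined purely by $\trans$ and $\wordof i$), and it visits accepting transitions infinitely often in $\dagw$ as well. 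Actually one must be slightly careful: the levels of $\dagw^\pr$ are defined by $\tau = \pr(\rho_\word^I)$, i.e.\ $\tau_{i+1} = \pr(\trans(\tau_i,\wordof i))$, so a vertex $(q,i)$ in $\dagw^\pr$ need not be a vertex of $\dagw$ at the \emph{same} index in general — but it is, because $\tau_i \subseteq (\rho_\word)_i$ can be shown by induction: $\tau_0 = \pr(I) \subseteq I$, and if $\tau_i \subseteq (\rho_\word)_i$ then $\trans(\tau_i,\wordof i) \subseteq \trans((\rho_\word)_i,\wordof i) = (\rho_\word)_{i+1}$, hence $\tau_{i+1} = \pr(\trans(\tau_i,\wordof i)) \subseteq \trans(\tau_i,\wordof i) \subseteq (\rho_\word)_{i+1}$. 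So $\dagw^\pr$ is genuinely a subgraph of $\dagw$ and $(\Rightarrow)$ follows.

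For the harder direction $(\Leftarrow)$, suppose $\dagw$ is accepting, i.e.\ there is a path $\pi$ in $\dagw$ hitting accepting transitions infinitely often. This $\pi$ is a trace over $\word$ that is a run of $\aut$ from $I$, hence $\pi \in \Pi_{\rho_\word}$, and it is accepting. By Lemma~\ref{lem:pr-trace} (which gives $\Pi_{\rho_\word} \sim \Pi_{\pr(\rho_\word)}$), there is an accepting trace $\pi' \in \Pi_{\pr(\rho_\word)}$. The remaining point is to check that $\Pi_{\pr(\rho_\word)}$, as defined via the sequence $\pr(\rho_\word) = \pr((\rho_\word)_0)\,(\rho_\word)_1\,(\rho_\word)_2\cdots$ — note the definition of $\pr(\rho_\word^B)$ applies $\pr$ at \emph{every} level, not just the first — coincides (up to acc-equivalence, which is all we need) with the set of paths through the levels $\tau_i$ of $\dagw^\pr$. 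Here I would invoke the same inductive reasoning, or appeal directly to the already-established machinery: the levels $\tau_i$ and the levels of $\pr(\rho_\word)$ differ, but an accepting trace staying in $\pr(\rho_\word)$ can be transported into $\dagw^\pr$ by the eventual-simulation argument already used inside the proof of Lemma~\ref{lem:pr-trace}; alternatively, since $\tau_i \subseteq (\pr(\rho_\word))_i$ for all $i$ (another easy induction, because $\trans(\tau_i,\wordof i) \subseteq \trans((\rho_\word)_i,\wordof i)$ implies $\pr(\trans(\tau_i,\wordof i))$'s survivors sit below survivors of $\pr((\rho_\word)_{i+1})$ — needing the monotonicity-flavoured property of $\pr$), one reduces to the trace lemma applied to the sequence defining $\dagw^\pr$ itself.

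The main obstacle I anticipate is precisely this bookkeeping mismatch between the ``prune-once-at-the-start'' object $\dagw^\theta$ (whose levels $\tau_i$ come from $\tau = \pr(\rho_\word^I)$, where $\pr$ is re-applied at each step inside the definition of $\pr(\rho_\word^B)$) and the ``prune-every-level'' object $\Pi_{\pr(\rho_\word)}$ appearing in Lemma~\ref{lem:pr-trace}. In fact, reading the definition of $\theta(\rho_\word^B)$ carefully, $\pr(\rho_\word^I)$ \emph{does} re-apply $\pr$ at every level, so $\tau = \pr(\rho_\word^I)$ and the sequence $\pr(\rho_\word)$ in the lemma may not be literally the same sequence ($\pr(\rho_\word)$ only prunes the head), and I would need Lemma~\ref{lem:pr-trace} in the slightly stronger form obtained by iterating it level-by-level, or re-run its eventual-simulation proof directly on $\tau$. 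I expect this to be routine but fiddly; the substantive content is entirely in Lemma~\ref{lem:pr-trace}, and once its statement is correctly aligned with the DAG levels the present lemma is immediate. I would therefore phrase the proof as: (i) show $\tau_i \subseteq (\rho_\word)_i$ for all $i$, giving $\dagw^\pr \subseteq \dagw$ and hence $(\Rightarrow)$; (ii) for $(\Leftarrow)$, take an accepting path in $\dagw$, view it as an accepting trace in $\Pi_{\rho_\word}$, apply (an iterated form of) Lemma~\ref{lem:pr-trace} to extract an accepting trace that survives all prunings, i.e.\ a path in $\dagw^\pr$. \qed
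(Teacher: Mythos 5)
Your proposal is correct and takes essentially the paper's route: the paper's entire proof is that the statement follows directly from \cref{lem:pr-trace}, and the ``mismatch'' you flag as the main obstacle does not actually exist, because $\pr(\rho_\word)$ is by definition an instance of $\theta(\rho_\word^B)=\theta(B)B_2\dots$ with $B_{i+1}=\theta(\delta(B_i,\word_i))$, i.e.\ $\pr$ is re-applied at every level, so it coincides exactly with the level sequence $\tau$ of $\dagw^\pr$ and the trace sets $\Pi_{\pr(\rho_\word)}$ are precisely the paths of $\dagw^\pr$. Your additional bookkeeping ($\tau_i\subseteq(\rho_\word)_i$, hence $\dagw^\pr$ is a level-wise subgraph of $\dagw$) is the same containment the paper already uses inside the proof of \cref{lem:pr-trace}, so it is harmless but not needed as a separate step.
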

\begin{proof}
  Follows directly from Lemma~\ref{lem:pr-trace}.
  \qed
\end{proof}

All of the $\iwc$, $\dac$, $\idac$, and $\nacrnk$ complementation algorithms 
can be seen as
procedures taking a run DAG as an input and checking whether this graph is
accepting or not. Therefore, we can change the DAG in an ``arbitrary'' way, if we ensure 
that the modified DAG is accepting iff the original one is. Therefore, we 
can modify each of the algorithms to taking into account the pruned run DAG. This 
means that we can change each call of the functions $\delta(W,a)$ and $\acctrans(W,a)$ for 
$\pr(\delta(W,a))$ and $\pr(\acctrans(W,a))$ respectively. It is due to the fact, that 
each of the algorithms work on the level of run DAGs, which are created by these transition 
functions.


\vspace{-0.0mm}
\section{Rank-based Complementation of a NAC}\label{sec:nac-rank-based}
\vspace{-0.0mm}

In this section, we define the algorithm~$\nacrnk$ with the condition $\phi_\nacrnk$ 
specifying that a~partition block~$P$ is a~general 
nondeterministic accepting component.
Let~$P$ be a~partition of~$\aut$ such that $\aut_P \models \phi_\nacrnk$. 
Here, we consider an instantiation of the framework for
a~rank-based complementation procedure for $P$ (we use
a~modification of Schewe's optimal algorithms from~\cite{Schewe09}) and its
optimizations from~\cite{HavlenaL21,HavlenaLS22a}.
Let us start with some definitions.


For $i \in \omega$ we use $\evenceil{i}$ to denote the largest even
number smaller or equal to~$i$, e.g., $\evenceil{42} = \evenceil{43} = 42$.
Let $f \colon X \to Y$ be a function. We use $\dom(f)$ to denote the domain of $f$. 
For a function $g$, we use $f\vartriangleleft g$ to denote the function such that for each $x \in X$
if $x \in \dom(g)$ it returns $g(x)$, otherwise $f(x)$.
Let $\qextof P = P \cup \{\extension\}$ where $\extension$ is a~fresh
symbol (which is used to represent ranks of runs outside~$P$; we pronounce
$\extension$ as ``\emph{box}'') and $\reachof q$ denote the set of states reachable from~$q$ in~$\aut$.
Given a~set of states~$T \subseteq Q$, we define $\deltaext T P$ as follows:
\begin{itemize}
  \vspace{-3mm}
  \item  $\deltaextof T P \extension a = (\delta(T \setminus P, a) \cap P) \cup A$\quad where
          $A = \begin{cases}
            \{\extension\} & \text{if } \reachof{\delta(T, a)\setminus P} \cap P \neq \emptyset, \\
            \emptyset & \text{otherwise.}
          \end{cases}$
  \item  $\deltaextof T P q a = \delta(q,a) \cap P$\quad for $q \neq \extension$.
\end{itemize}
Intuitively, $\deltaext T P$ is used to take into account
runs outside of~$P$ (represented collectively by~$\extension$).
We also extend $\deltaext T P$ to sets of states as usual.

Now, we proceed to the definition of \emph{rankings} for the modified
rank-based procedure.
A~\emph{$\qextof P$-ranking} is a~(partial) function $f\colon
\qextof P \partialto \{0, \ldots,
2|\qextof P|\}$.
The \emph{rank} of~$f$ is the value $\rankof f = \max\{f(q) \mid q \in \qextof
P\}$.
For a~set $S \subseteq \qextof P$, a~ranking~$f$ is called $S$-tight if
\begin{inparaenum}[(i)]
	\item $r = \rankof f$ is an odd number,
	\item $f$ is onto $\{1, 3, \dots, r\}$, and
	\item $\dom(f) = S$.
\end{inparaenum}
Note that because in our definition, a~ranking is a~partial function, the
ranking's domain tells us which states are active; therefore, we do not need to
keep a~separate set for this (the set~$S$ used in~\cite{Schewe09}).
Furthermore, we say that a~$\qextof P$-ranking~$f$ is $\extension$-tight iff the following holds:
\begin{enumerate}[(i)]
	\item $f$ is $\dom(f)$-tight and
  \item if $\extension \in \dom(f)$ then $f(\extension) = \rankof f$, and
    $\rankof{f \setminus \{\extension \mapsto f(\extension)\}} < \rankof f$. 
%
\end{enumerate}
Intuitively, a $\extension$-tight ranking is tight over its domain and if the
domain contains~$\extension$, the rank of~$\extension$ is strictly larger than
the rank of any state from~$P$.

For a~pair of $\qextof P$-rankings~$f$ and~$g$, we define $f \transconsistof a
T g$ iff the following hold:
\begin{enumerate}[(i)]
  \item  $\dom(g) = \delta_P^T(\dom(f), a)$,
  \item  for each $q \in \dom(f)$ and $q'\in\deltaextof T P q a$ we have $g(q') \leq f(q)$, and
  \item  for each $q \in \dom(f) \setminus \{\extension\}$ and $q'\in
    \transacc(q,a) \cap P$  it holds that $g(q') \leq \evenceil{f(q)}$.
\end{enumerate}
We further define $f_1 \rankleq f_2$ iff $\rankof{f_1} = \rankof{f_2}$ and for
each $q\in\dom(f_1)$ we have $f_1(q) \leq f_2(q)$.
We also define
\begin{equation}
\maxrank_T(f,a) = \begin{cases}
  g_{\max} & \text{if } g_{\max} = \max_{\rankleq}\{ g\mid f \transconsistof a T
  g \} \text{ is tight} \\
  \bot & \text{otherwise}
\end{cases}
\end{equation}

\newcommand{\hide}[1]{}

\hide{
Note that $\maxrank_T(f,a)$ can be computed easily by the following simple
procedure:

\begin{algorithm}[H]
  \caption{$\maxrank_T(f,a)$}
  \lForEach{$q' \in\qextof P$}{
    $K(q') \gets \{ k \in \omega \mid f(q) = k, q' \in \deltaextof T P q a\}$
  }
  $A \gets \{q' \in Q_P \mid q' \in \transacc(\domof f, a)\}$
\tcp*{or $q' \in \acc$}
  $g \gets \{q' \mapsto k_{\mathit{min}} \mid  q' \in \domof K, k_{\mathit{min}} = \min(K(q'))\}$\;
  $g \gets \{q' \mapsto g(q') \mid q' \notin A\}
  \cup \{q' \mapsto \evenceil{g(q')} \mid q' \in A\}$\;
  \lIf{$\rankof{g} \neq \rankof{f}$}{\Return $\bot$}
  \lIf{$g$ is not $\qextof P$-tight}{\Return $\bot$}
  \Return $g$\;
\end{algorithm}

We can use other heuristics to get a better max. ranking \ol{}

Further, we can try to refine the computed ranking by using
a~so-called~\emph{tight rank upper bound} (TRUB) (cf.~\cite{HavlenaLS22a}),
which gives conservative bounds on the maximum needed ranks of individual states
in macrostates.

\begin{algorithm}[H]
  \KwData{a~TRUB $\rankrestr\colon 2^{\qextof P} \to (\qextof P \to \omega)$}
  \caption{$\maxrank_T(f,a)$}
  \lForEach{$q' \in\qextof P$}{
    $K(q') \gets \{ k \in \omega \mid f(q) = k, q' \in \deltaextof T P q a\}$
  }
  $A \gets \{q' \in Q_P \mid q' \in \transacc(\domof f, a)\}$
\tcp*{or $q' \in \acc$}
  $g \gets \{q' \mapsto k_{\mathit{min}} \mid  q' \in \domof K, k_{\mathit{min}} = \min(K(q'))\}$\;
  $g \gets \{q' \mapsto \min(g(q'), r) \mid r = \rankrestr(\domof g, q'))\}$\;
  $g \gets \{q' \mapsto g(q') \mid q' \notin A\}
  \cup \{q' \mapsto \evenceil{g(q')} \mid q' \in A\}$\;
  \lIf{$\rankof{g} \neq \rankof{f}$}{\Return $\bot$}
  \lIf{$g$ is not $\qextof P$-tight}{\Return $\bot$}
  \Return $g$\;
\end{algorithm}
}

Finally, we are ready to give an instantiation of the rank-based
complementation procedure for the decomposition-based construction.
We start with the definition of types:
\begin{enumerate}
  \item  $\tracktypeof{\nacrnk_P} = \tracktype^{\wait}_P \cup \tracktype^{\tight}_P$ where
    \begin{itemize}
      \item  $\tracktype^{\wait}_P = 2^{\qextof P}$.
        This part represents the \emph{waiting part} of the complemented BA.
        Note that we use $2^{\qextof P}$ instead of $2^{P}$; this is because
        we need to keep track whether there is some run that can reach~$P$
        (represented using~$\extension$).
      \item  $\tracktype^{\tight}_P = \{f \mid 
        f \text{ is a } \extension\text{-tight } \qextof P\text{-ranking}
        \}$.
    \end{itemize}

  \item  $\activetypeof{\nacrnk_P} = \activetype^{\wait}_P \cup \activetype^{\tight}_P$ where
    \begin{itemize}
      \item  $\activetype^{\wait}_P = 2^{\qextof P}$ and
      \item  $\activetype^{\tight}_P = \{(f, O, i) \mid 
        f \text{ is a } \extension\text{-tight } \qextof P\text{-ranking},
        O \subseteq \dom(f) \cap f^{-1}(i),$ \\
        \hspace*{31mm}$i \in \{0, 2, \ldots, \rankof f - 1\}\}$.
    \end{itemize}
\end{enumerate}

The instance $\nacrnk_P$ implements the MaxRank construction (the version from
the paper~\cite{HavlenaL21}) without $\eta_4$ ($\eta_4$ is responsible
for nondeterministically decreasing ranks) for $\GetSuccTOf{\nacrnk_P}$.
Then, for $\GetSuccAOf{\nacrnk_P}$ we use the standard MaxRank, where a~macrostate may have two
nondeterministic successors.
Formally, the instance $\nacrnk_P$ provides the following functions:
\begin{itemize}
  \item  $\GetInitOf{\nacrnk_P} = \big\{(I \cap P) \cup A \mid \text{if }
    \reachof{I}\cap P \neq \emptyset \text{ then } A = \{\extension\} \text{ else } A =
    \emptyset\big\}$.
  \item  $\GetSuccTOf{\nacrnk_P}$ is defined as follows based on the type of the partial macrostate:
    \begin{itemize}
      \item $\GetSuccTOf{\nacrnk_P}(T, S, a) = \big\{ (\deltaextof T P S a, \emptyset ) \big\}
        \qquad \text{ if } S \in \tracktype^{\wait}_P$ else
      \item $\GetSuccTOf{\nacrnk_P}(T, f, a) = \begin{cases}
          \{ (f', \emptyset) \} &\text{if } f' = \maxrank_T(f,a) \neq \bot,\\
          \emptyset & \text{otherwise}.
        \end{cases}$
    \end{itemize}

  \vspace{2mm}
  \item  $\LiftOf{\nacrnk_P}$ also takes into account the type of the partial macrostate:
    \begin{itemize}
      \item  $\LiftOf{\nacrnk_P}(S) = \big\{ S \big\}
             \cup \big\{ (g, g^{-1}(0), 0) \mid
        g \in \max_{\rankleq} \{ f \mid
        f \text{ is } \extension\text{-tight}$, \\[1mm]
        \hspace*{70mm}
        $\dom(f) =  S \} \big\}$
        \vspace{-2mm}

        \item $\LiftOf{\nacrnk_P}(f) = \{ (f, O', 0) \}$ where 
          \begin{itemize}
            \item $O' = f^{-1}(0)$.
            \item Note that it can happen that $\LiftOf{\nacrnk_P}(f) =
              \emptyset$, e.g., when \\$\GetSuccTOf{\nacrnk_P}(T, f, a) = \emptyset$.
          \end{itemize}
    \end{itemize}

  \vspace{2mm}
  \item  $\GetSuccAOf{\nacrnk_P}$ is defined as follows based on the type of the partial macrostate:
    \begin{itemize}
      \item  $\GetSuccAOf{\nacrnk_P}(T, S, a) = \begin{cases}
          \big\{ (U,  \{ \tcacc{3}{0} \} ) \big\} \text{ with } U = \deltaextof T P S a \in \tracktypeof{\nacrnk_P}  &
          \text{if } S \subseteq \{\extension\} \\
          \LiftOf{\nacrnk_P}(U) \in \activetypeof{\nacrnk_P}  & \text{otherwise}
      \end{cases}$
      \item  $\GetSuccAOf{\nacrnk_P}(T, (f,O,i), a)$: let us first define functions
        $\eta_3$ and $\eta_4$ as follows (for a~$\qextof P$-ranking~$g$
        and $i \in \omega$, we use $i\iincrof g = (i + 2) \mod (\rankof g + 1)$):
        \vspace{1mm}
        \begin{itemize}
          \item  if $\GetSuccTOf{\nacrnk_P}(T, f, a) = \{g\}$ then
            \begin{itemize}
              \item  $\eta_3 = \begin{cases}
                  \big\{ (g, \deltaextof T P O a \cap g^{-1}(i), i)\big\} &
                    \text{if } O \neq \emptyset\\
                  \big\{ (g, \deltaextof T P {\dom(f)} a \cap g^{-1}(i\iincrof g),
                    i\iincrof g)\big\} & \text{if } O = \emptyset
                \end{cases}$
              \item  $\eta_4 = \big\{ (g', M, i) \mid
                g' = g \vartriangleleft \{ q
                \mapsto g(q) - 1 \mid q \in M, i \neq 0
                \}\big\}$ where $M = \deltaextof T j O a \cap g^{-1}(i)$
            \end{itemize}
          \item  else  $\eta_3(T, (f, O, i), a) = \eta_4(T, (f, O, i), a) = \emptyset$.
        \end{itemize}
        Let $U = \{(f', O', i') \in \eta_3 \cup \eta_4 \mid O' = \emptyset \land i'
        = \rankof{f'} -1\}$.
        We then define
        $\GetSuccAOf{\nacrnk_P}(T, (f,O,i), a) = {}$
        $$\{((f', O', i'), \emptyset) \mid (f', O', i') \in \eta_3 \cup \eta_4 \setminus U\}
        \cup \{(f', \{\tcacc{3}{0}\}) \mid (f', O', i') \in U\}$$
        Note that in the previous, the first set is from $\activetypeof{\nacrnk_P}$ and
        the other is from~$\tracktypeof{\nacrnk_P}$.
    \end{itemize}
\end{itemize}

\noindent
The correctness is then summarized by the following lemmas.

\begin{lemma}\label{lem:rnk-corr}
  The partial algorithm $\algunion(\nacrnk)$ is correct.
\end{lemma}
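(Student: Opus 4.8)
The plan is to reduce correctness (in the sense of \cref{def:alg-correct}) to two ingredients: a~reformulation of non-membership in $\langof\aut$ via a~\emph{box-extended run DAG}, and the correctness of the MaxRank rank-based construction of~\cite{Schewe09,HavlenaL21}. Fix a~BA $\aut=(\states,\trans,\inits,\acc)$ with $\aut\modelssurrp\phi_{\nacrnk}$, so $P$~is a~NAC and~$\aut$ has no accepting transition outside~$P$; hence every accepting run of~$\aut$ eventually stays inside a~single SCC of~$P$ and visits~$\acc$ infinitely often there. For $\word\in\Sigma^\omega$, let $\dagwiof{\extension}$ be the DAG obtained by iterating $\deltaextof{T_i}{P}{\cdot}{\word_i}$ from the macrostate of $\GetInitOf{\nacrnk_P}$, where $T_i=\trans(\inits,\word_0\cdots\word_{i-1})$ is the full reachable set at level~$i$: its level-$i$ vertices are the tracked $P$-states together with $(\extension,i)$ whenever $\reachof{T_i\setminus P}\cap P\neq\emptyset$, there is an edge $(q,i)\to(q',i{+}1)$ iff $q'\in\deltaextof{T_i}{P}{q}{\word_i}$, and the edge is \emph{accepting} iff $q,q'\in P$ and $q\ltr{\word_i}q'\in\acc$.

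First I~would prove the equivalence: $\word\in\langof\aut$ iff $\dagwiof{\extension}$ has a~branch with infinitely many accepting edges. For ``$\Rightarrow$'', take an accepting run~$\rho$ of~$\aut$ on~$\word$ and let~$j$ be the last index with $\rho_j\notin P$ (or $j=-1$): then $T_j\setminus P$ reaches~$P$ along~$\rho$, so $(\extension,j)\in\dagwiof{\extension}$ and $\deltaextof{T_j}{P}{\extension}{\word_j}$ injects~$\rho_{j+1}$, after which the $P$-suffix of~$\rho$ is a~branch with infinitely many accepting edges. For ``$\Leftarrow$'', a~branch with infinitely many accepting edges visits~$P$ infinitely often and---since $\deltaextof{T}{P}{q}{a}=\trans(q,a)\cap P\subseteq P$ for $q\in P$---is eventually confined to~$P$, where its tail is a~genuine run of~$\aut$ inside~$P$ seeing~$\acc$ infinitely often; its finite prefix is realisable because every $P$-vertex of $\dagwiof{\extension}$ is $\aut$-reachable over the matching prefix (being in $\inits\cap P$, injected out of $T\setminus P$ by $\deltaextof{T}{P}{\extension}{\cdot}$, or a~$P$-successor of an earlier vertex). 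I~expect this step to be the main obstacle: the ``eventually confined to~$P$'' argument leans on both $\deltaextof{T}{P}{q}{a}\subseteq P$ and the partition-block structure forcing accepting runs to stabilise inside~$P$, and turning a~branch's $P$-suffix into a~bona fide run of~$\aut$ needs the reachability bookkeeping above.

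Given the equivalence, it remains to show $\langof{\modcompl(\algunion(\nacrnk_P),\aut)}=\{\word\mid\dagwiof{\extension}\text{ has no accepting branch}\}$. Here I~would check that $\algunion(\nacrnk_P)$ is, up to cosmetic differences (the separate passive/active copies of the waiting set, and the option of idling in a~cut-point-free ranking between consecutive cut-point rounds), precisely the MaxRank construction of~\cite{Schewe09,HavlenaL21}: $\GetSuccTOf{\nacrnk_P}$ on $\tracktype^{\wait}_P$ is the box-extended subset step; the $\LiftOf{\nacrnk_P}$ calls folded into $\algunion$'s passive successor are the construction's nondeterministic jump from the waiting part to a~$\rankleq$-maximal $\extension$-tight ranking (and, from a~ranking, re-installation of the cut-point~$O$ at index~$0$); $\GetSuccTOf{\nacrnk_P}$ on a~ranking is the $\eta_4$-free deterministic maximal-ranking step; and $\GetSuccAOf{\nacrnk_P}$ is the full MaxRank step with the $\eta_3/\eta_4$ cut-point machinery cycling the index through the even ranks and emitting~$\tcacc{3}{0}$ exactly when~$O$ empties at the top even rank (the case $S\subseteq\{\extension\}$ merely shortcuts the degenerate situation with no active $P$-run). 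Under this identification the reachable macroruns of $\modcompl(\algunion(\nacrnk_P),\aut)$ correspond one-to-one with the runs of that construction on the DAGs~$\dagwiof{\extension}$, so the language equality follows from its correctness. A~secondary obstacle is to phrase the reference construction abstractly enough---as a~DAG-complementation scheme whose deterministically evolving component $H=\trans(\inits,\cdot)$ merely parameterises the box-successor rule---so that~\cite{Schewe09,HavlenaL21} can be cited rather than re-proved on the box automaton.

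Finally I~would verify the one genuinely new bookkeeping point: the $\extension$-tightness requirement $f(\extension)=\rankof f$ together with $\rankof{f\setminus\{\extension\mapsto f(\extension)\}}<\rankof f$ is exactly what stops the box---which over-approximates a~whole set of outside runs that may at any moment spawn fresh $P$-runs---from corrupting the ranking invariant. The relevant observation is that $\deltaextof{T}{P}{\extension}{a}$ produces only states of rank $\le f(\extension)=\rankof f$, consistent with giving each freshly injected $P$-run the largest odd value $<\rankof f$---which is exactly how the standard branch-DAG ranking treats late-arriving runs---so the pruning-stage argument for completeness and the ``eventually trapped in odd ranks'' argument for soundness carry over verbatim, closing the proof.\qed
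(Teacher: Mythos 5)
Your plan is sound in outline and, like the paper's own (sketch) proof, it ultimately discharges everything onto the correctness of the MaxRank construction of~\cite{Schewe09,HavlenaL21}; but the route through which you invoke it is genuinely different. The paper does not introduce a box-extended run DAG at all: it takes a macrorun $\rho=(H_1,M_1)(H_2,M_2)\dots$ of $\modcompl(\algunion(\nacrnk_P),\but)$ and translates it directly into a run of $\algdeelev(\algmaxrank(\but))$ on the \emph{original} automaton $\but$, by concretizing the box, i.e., replacing $\extension$ in each macrostate by $H_i\setminus\dom(f_i)$ (the reachable states outside the ranking's domain), with $\algdeelev$ (subset construction on the non-accepting initial part) playing the role of your waiting phase; acceptance is then preserved in both directions and correctness of the known construction on $\but$ is inherited wholesale. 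You instead characterize $\Sigma^\omega\setminus\langof\but$ via accepting branches of a box-extended DAG and then argue that $\algunion(\nacrnk_P)$ \emph{is} MaxRank run on that DAG. What your route buys is a cleaner separation between the language characterization and the ranking machinery; what the paper's route buys is that the delicate point you flag at the end---that the $\extension$-tightness discipline (box carries the maximal odd rank, freshly injected $P$-runs get ranks below it) really does make the soundness and, especially, the \emph{completeness} (tight-ranking existence) arguments of~\cite{HavlenaL21} go through on a DAG with a rank-spawning box---never has to be argued abstractly, because after concretization one is literally looking at a run of $\algdeelev(\algmaxrank(\but))$ on $\but$ itself. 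So be aware that your closing claim that the standard arguments ``carry over verbatim'' is precisely the step the paper's concretization trick is designed to avoid; either adopt that translation ($\extension\mapsto H_i\setminus\dom(f_i)$ plus the deelevated initial part) or spell out the tight-ranking construction on the box-extended DAG rather than citing it, since the published statements of~\cite{Schewe09,HavlenaL21} are about run DAGs of actual automata, not DAGs with an over-approximating box vertex. With that caveat addressed, your argument reaches the same conclusion at essentially the same level of rigor as the paper's sketch.
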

\begin{proof} \emph{(Sketch)}
  In this proof, we use $\algmaxrank$ to denote the MaxRank algorithm 
  from~\cite{HavlenaL21} and $\algdeelev$ to denote 
  the procedure performing subset construction of the initial part of the 
  automaton with no accepting transitions.
  Consider a BA $\but$ such that $\but\modelssurrp \varphi_{\nacrnk}$. Moreover, we assume 
  that $\but$ has not redundant states. Consider a run $\rho = (H_1, M_1)(H_2, M_2)\dots$
  over a word $\word$ in $\modcompl(\algunion(\nacrnk_P), \but)$. We can construct a run $\rho'$ 
  over $\word$ in $\algdeelev(\algmaxrank(\but))$ such that $\rho'_i$ is obtained from $\rho_i$ 
  by replacing $\extension$ by $H_i \setminus \dom(f_i)$ where $f_i$ is the ranking 
  function of the macrostate $M_i$. It can be quite easily shown that $\rho$ is 
  accepting iff $\rho'$ is accepting. 
  \qed
\end{proof}

\begin{lemma}
  The partial round-robin algorithm $\nacrnk$ is correct.
\end{lemma}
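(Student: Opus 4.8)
The plan is to discharge the two conditions in the definition of a \emph{correct} partial round-robin algorithm. Fix a BA $\aut$ with $\aut \modelssurrp \phi_\nacrnk$; we must show (a) $\langof{\modcompl(\algunion(\nacrnk_P),\aut)} = \Sigma^\omega\setminus\langof{\aut}$ and (b) $\nacrnk_P$ is \emph{consistent}, i.e.\ conditions (C1)--(C3) hold in $\modcompl(\algunion(\nacrnk_P),\aut)$. Part (a) is exactly \cref{lem:rnk-corr}, so nothing new is required there (modulo the standard w.l.o.g.\ assumption, used in its proof, that $\aut$ has no redundant states --- discharged by language- and SCC-preserving pre-pruning). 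All the work is (b), and the structural fact driving it is that $\nacrnk_P$ uses a single colour $\tcacc{3}{0}$ with $\GetAccOf{\nacrnk_P} = \Infof{\tcacc{3}{0}}$, and a transition of $\modcompl(\algunion(\nacrnk_P),\aut)$ carries $\tcacc{3}{0}$ \emph{iff} it goes from $\activetypeof{\nacrnk_P}$ to $\tracktypeof{\nacrnk_P}$: inspecting $\GetSuccAOf{\nacrnk_P}$, the only active-to-passive steps are from an active waiting macrostate $S\subseteq\{\extension\}$ to the passive waiting $\deltaextof T P S a$, and from an active tight $(f,O,i)$ to a passive tight ranking $f'$ along a successor with $O'=\emptyset$ and $i'=\rankof{f'}-1$; both emit $\tcacc{3}{0}$, while $\GetSuccTOf{\nacrnk_P}$, $\LiftOf{\nacrnk_P}$, and every other $\GetSuccAOf{\nacrnk_P}$ successor carry no accepting colour and either keep the phase or move from passive to active.

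Given this, (C1) and (C3) are immediate: a run is accepting iff it emits $\tcacc{3}{0}$ infinitely often iff it performs an active-to-passive step infinitely often. For (C1) take any accepting run of a word in the language; for (C3) the hypothesis is exactly ``active-to-passive step infinitely often''.

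The core is (C2). First observe that every passive macrostate has, among its $\algunion$-successors, a passive one and an active one: from a passive waiting $W$ one can step to the passive waiting $W'=\deltaextof T P W a$, to the active waiting $W'$, or to an active tight lift $(g,g^{-1}(0),0)$ with $g$ a $\rankleq$-maximal $\extension$-tight ranking of domain $W'$; from a passive tight $f$ one can step to the passive tight $f'=\maxrank_T(f,a)$ or to the active tight $(f',(f')^{-1}(0),0)$. So the branching (C2) needs is present; the content is that both branches extend to \emph{accepting} runs, which I would derive from two postponement facts, both standard in the theory of rank-based complementation~\cite{KupfermanV01,Schewe09,HavlenaL21} and already implicit in \cref{lem:rnk-corr}: \textbf{(P1)} the waiting (subset-construction) phase may be prolonged and the commitment to a tight ranking delayed by any finite number of steps without destroying goodness --- for a rejected word one can pick a good $\rankleq$-maximal $\extension$-tight ranking whose tight phase starts at the later time; \textbf{(P2)} once a run is in the tight phase, the passive tight part tracks the maximal ranking $\maxrank_T$ deterministically while the active tight part continues this tracking but additionally runs Schewe's breakpoint $O$ and the round-robin counter $i$ over the even ranks (and optionally decreases ranks via $\eta_4$), and --- by \cref{lem:rnk-corr} --- for a rejected word a good active run (one with $O$ emptying infinitely often) can be obtained by starting the breakpoint layer and resolving the $\eta_4$ choices from any finite step onward. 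Granting (P1)--(P2), for an accepting $\rho$ with $\rho_{i-1}\in\tracktypeof{\nacrnk_P}$ (or $i=1$) build $\rho'$ by taking the passive successor at step $i$ and continuing with a run that commits / is promoted later, and $\rho''$ by taking the active successor at step $i$ and committing / being promoted at step $i$; by (P1)--(P2) both are accepting, they agree with $\rho$ on $\rho_{1:i-1}$, and $\rho'_i\in\tracktypeof{\nacrnk_P}$, $\rho''_i\in\activetypeof{\nacrnk_P}$.

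The main obstacle is making (P2) precise: that the exact step at which the active breakpoint layer $(f,O,i)$ is started, and the exact timing of the $\eta_4$ rank decreases, do not affect infinitely-often emptiness of $O$ for a rejected word. I plan to handle this exactly as in the proof of \cref{lem:rnk-corr}: a run of $\modcompl(\algunion(\nacrnk_P),\aut)$ corresponds to a run of $\algdeelev(\algmaxrank(\aut))$ by replacing $\extension$ in each reached macrostate by the off-$P$ part $H\setminus\domof{f}$, and for that construction it is classical that the breakpoint layer reports acceptance iff a good ranking exists --- a property invariant under shifting the start of the layer (and the $\eta_4$ choices) by a finite prefix, since the ranks witnessing goodness stabilise after finitely many steps and, from any point after that, the round-robin sweep over the finitely many even ranks empties $O$ infinitely often. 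With (a) and (b) in hand, $\nacrnk$ is correct; \cref{thm:rr-correctness} then lifts this to correctness of the full construction $\modcomplrr$ on any partitioning handled by correct partial round-robin algorithms such as $\iwcrr$, $\dacrr$, and $\nacrnk$.
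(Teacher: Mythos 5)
Your proposal is correct and follows essentially the same route as the paper's (much terser) proof sketch: language equality is delegated to \cref{lem:rnk-corr}, (C1) and (C3) come from the observation that the colour \tcacc{3}{0} is emitted exactly when the $O$-set is flushed, i.e.\ on active-to-passive switches, and (C2) comes from the postponement argument that the moment of re-sampling $O$ (and of entering the tight/active phase) can be delayed by any finite number of steps without affecting acceptance. Your write-up is in fact more explicit than the paper's sketch in isolating the colour-emission/phase-switch correspondence and the postponement facts (P1)--(P2), which the paper merely asserts.
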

\begin{proof} \emph{(Sketch)}
  We start with the condition (C1). For an arbitrary word \\$\word\in\langof{\modcompl(\algunion(\nacrnk_P), \but)}$ 
  we can construct an accepting run $\rho$ such that after we flush the $O$-set 
  ($O=\emptyset$), we can switch for a single step 
  to the passive state and then back to the active in the following step. Since we need to empty the $O$-set 
  infinitely often, it does not matter when we make a new sample (we must just ensure that we do a 
  finite number of steps in the passive phase)---which also fulfills the condition (C2). 
  The condition (C3) follows from the fact that the switch from the active to passive phase is 
  done only if the $O$-set becames empty (hence infinitely many switches mean that the run is accepting).
  The rest of the correctness follows from \cref{lem:rnk-corr}. \qed
\end{proof}

\section{Additional Examples}
\label{sec:examples}

\newcommand{
\begin{figure}[t]
  \begin{subfigure}[b]{0.48\textwidth}
    \centering
    \scalebox{0.7}{
\begin{tikzpicture}[automaton]
    \tikzstyle{every state}=[inner sep=3pt,minimum size=5pt]
    \tikzstyle{empty}=[]
    \tikzstyle{initstate}=[fill=yellow!30]
    \tikzstyle{uberstate}=[
      rounded corners,draw,anchor=base,
      rectangle split,rectangle split horizontal,rectangle split parts=2,
      rectangle split part align=base,
      rectangle split part fill={black!20, blue!30, green!30}]
    \newcommand{\ustate}[4]{$#1$\nodepart{two}$#2, #3, #4$}
  
    \node[uberstate,initial] (p) {\ustate p \emptyset \emptyset \emptyset};
    \node[uberstate,below of=p,xshift=0mm] (pq) {\ustate{p+q} q \emptyset q};
    \node[uberstate,below of=pq,xshift=0mm] (pqsafe) {\ustate{p+q} \emptyset q \emptyset};
  
    \node[uberstate,right of=p,xshift=22mm] (pqr) {\ustate{p+q+r} q \emptyset q};
    \node[uberstate,below of=pqr,xshift=0mm] (pqrs) {\ustate{p+q+r+s} q \emptyset q };
    \node[uberstate,below of=pqrs,xshift=0mm] (pqrssafe) {\ustate{p+q+r+s} \emptyset q \emptyset};
  
    \path[->]
      (p) edge pic[pos=0.5] {acc=0} pic[auto] {l=$b$} (pq)
      (pq) edge pic[pos=0.5] {acc=0} pic[auto] {l=$b$} (pqsafe)
      (pq.182) edge[out=210,in=150,loop,distance=9mm] pic[auto] {l=$b$} (pq.178)
      (pqsafe.182) edge[out=210,in=150,loop,distance=9mm] pic[pos=0.5] {acc=0} pic[auto] {l=$b$} (pqsafe.178)
      (p) edge pic[pos=0.5] {acc=0} pic[auto] {l=$a$} (pqr)
      (pqr) edge[bend left] pic[auto] {l=$a$} (pqrs)
      (pqr) edge[bend right] node[left] {$b$} (pqrs)
      (pqrs.12) edge[out=120,in=60,loop,distance=6mm] node[auto] {$a,b$} (pqrs.11)
      (pq) edge pic[auto] {l=$a$} (pqrs)
      (pqrssafe.182) edge[out=210,in=150,loop,distance=9mm] pic {acc=0} pic[auto] {l=$b$} (pqrssafe.178)
      (pqrs) edge pic {acc=0} pic[left] {l=$b$} (pqrssafe)
      (pqr.355) edge[bend left=40] pic {acc=0} pic[auto] {l=$b$} (pqrssafe.2)
      ;
  \end{tikzpicture}

  }
    \caption{$\aut_1 = \modcompl(\dac_{P_0}, \autex)$}
  \end{subfigure}
  \begin{subfigure}[b]{0.48\textwidth}
    \centering
    \scalebox{0.7}{
\begin{tikzpicture}[automaton]
    \tikzstyle{every state}=[inner sep=3pt,minimum size=5pt]
    \tikzstyle{empty}=[]
    \tikzstyle{initstate}=[fill=yellow!30]
    \tikzstyle{uberstate}=[
      rounded corners,draw,anchor=base,
      rectangle split,rectangle split horizontal,rectangle split parts=2,
      rectangle split part align=base,
      rectangle split part fill={black!20, green!30}]
    \newcommand{\ustate}[3]{$#1$\nodepart{two}$#2, #3$}
  
  
    \node[uberstate,initial] (p) {\ustate p \emptyset \emptyset};
    \node[uberstate,below of=p,xshift=0mm] (pq) {\ustate{p+q} \emptyset \emptyset};
  
    \node[uberstate,right of=p,xshift=22mm] (pqr) {\ustate{p+q+r}  r r};
    \node[uberstate,below of=pqr,xshift=0mm] (pqrs) {\ustate{p+q+r+s}  {r+s} {r+s}};
    \node[uberstate,below of=pqrs,xshift=0mm] (pqrsbreak) {\ustate{p+q+r+s} {r+s} {r}};
  
    \path[->]
      (p) edge pic[pos=0.5] {acc=1} pic[auto] {l=$b$} (pq)
      (pq.182) edge[out=210,in=150,loop,distance=9mm] pic {acc=1} pic[auto] {l=$b$} (pq.178)
      (p) edge pic[pos=0.5] {acc=1} pic[auto] {l=$a$} (pqr)
      (pqr) edge[bend left] pic {acc=1} pic[auto] {l=$a$} (pqrs)
      (pqr) edge[bend right] node[left] {$b$} (pqrs)
      (pqrs.12) edge[out=120,in=60,loop,distance=6mm] node[auto] {$b$} (pqrs.11)
      (pq) edge pic {acc=1} pic[auto] {l=$a$} (pqrs)
      (pqrs) edge[bend left] node[auto] {$a$} (pqrsbreak)
      (pqrsbreak) edge[bend left] node[auto] {$b$} (pqrs)
      (pqrsbreak.160) edge[bend left] pic {acc=1} pic[auto] {l=$a$} (pqrs.200)
      ;
  \end{tikzpicture}

  }
    \caption{$\aut_2 = \modcompl(\iwc_{P_1}, \autex)$}
  \end{subfigure}

  \vspace{5mm}
  \begin{subfigure}[b]{0.25\textwidth}
    \centering
    \scalebox{0.8}{\begin{tikzpicture}[automaton]
    \tikzstyle{every state}=[inner sep=3pt,minimum size=5pt]
    \tikzstyle{empty}=[]
    \tikzstyle{initstate}=[fill=yellow!30]

    \node[state, initstate, initial] (p) {$p$};
    \node[state, right of=p] (q) {$q$};
  
    \path[->]
      (p) edge pic {acc=0} pic[auto] {l=$b$} (q)
      (p) edge[loop above] node {$a,b$} (p)
      (q) edge[loop above] pic {acc=0} pic[auto] {l=$b$} (q)
      ;
  \end{tikzpicture}}
    \caption{$\reduce(\aut_1)$}
  \end{subfigure}
  \begin{subfigure}[b]{0.25\textwidth}
    \centering
    \scalebox{0.8}{\begin{tikzpicture}[automaton]
    \tikzstyle{every state}=[inner sep=3pt,minimum size=5pt]
    \tikzstyle{empty}=[]
    \tikzstyle{initstate}=[fill=yellow!30]

    \node[state, initstate, initial] (p) {$0$};
    \node[state, right of=p] (q) {$2$};
    \node[state, below of=p] (r) {$1$};
    \node[state, right of=r] (s) {$3$};
  
    \path[->]
      (p) edge node[above] {$a$} (q)
      (p) edge node[left] {$b$} (r)
      (r) edge[loop below] pic {acc=1} pic[auto] {l=$b$} (r)
      (r) edge node[above] {$a$} (s)
      (q) edge[bend left] pic {acc=1} pic[left] {l=$a$} (s)
      (q) edge[bend right] node[left] {$b$} (s)
      (s) edge[loop below] node {$b$} (s)
      (s) edge[out=10,in=-10] node[right] {$a$} (q)
      ;
    %
  \end{tikzpicture}}
    \caption{$\reduce(\aut_2)$}
  \end{subfigure}
  \begin{subfigure}[b]{0.45\textwidth}
    \centering
    \scalebox{0.8}{\begin{tikzpicture}[automaton]
    \tikzstyle{every state}=[inner sep=3pt,minimum size=5pt,rectangle, rounded corners=2mm]
    \tikzstyle{empty}=[]
    \tikzstyle{initstate}=[fill=yellow!30]

    \node[state, initstate, initial] (p) {$(p,0)$};
    \node[state, right of=p] (q) {$(p,2)$};
    \node[state, below of=p] (r) {$(p,1)$};
    \node[state, right of=r] (s) {$(p,3)$};
    \node[state, above of=p] (t) {$(q,1)$};
    \node[state, above of=q] (u) {$(q,3)$};
  
    \path[->]
      (p) edge node[above] {$a$} (q)
      (p) edge node[left] {$b$} (r)
      (r) edge[loop below] pic {acc=1} pic[auto] {l=$b$} (r)
      (r) edge node[above] {$a$} (s)
      (q) edge[bend left] pic {acc=1} pic[left] {l=$a$} (s)
      (q) edge[bend right] node[left] {$b$} (s)
      (s) edge[loop below] node {$b$} (s)
      (s) edge[out=10,in=-10] node[left] {$a$} (q)
      (p) edge pic {acc=0} pic[left] {l=$b$} (t)
      (q) edge pic {acc=0} pic[left] {l=$b$} (u)
      (r) edge[out=140,in=-140] pic {acc=0} pic[left] {l=$b$} (t)
      (s) edge[out=10,in=-10] pic {acc=0} pic[left] {l=$b$} (u)
      (u) edge[loop above] pic {acc=0} pic[auto] {l=$b$} (u)
      (t) edge[loop above,min distance=10mm,in=60,out=120] pic[pos=0.3] {acc=0} pic[pos=0.7] {acc=1} pic[auto] {l=$b$} (t)
      ;
  \end{tikzpicture}}
    \caption{$\postponedcompl(\dac_{P_0}, \iwc_{P_1},\autex)$}
  \end{subfigure}

  \caption{Example of the postponed construction applied on $\autex$ with the 
  result's accepting condition $\acccond\colon \Infof{\protect\tacc{0}} \land \Infof{\protect\tacc{1}}$.
  }
  \label{fig:postponed-example}
\end{figure}
}[0]{
\begin{figure}[t]
  \begin{subfigure}[b]{0.48\textwidth}
    \centering
    \scalebox{0.7}{
\begin{tikzpicture}[automaton]
    \tikzstyle{every state}=[inner sep=3pt,minimum size=5pt]
    \tikzstyle{empty}=[]
    \tikzstyle{initstate}=[fill=yellow!30]
    \tikzstyle{uberstate}=[
      rounded corners,draw,anchor=base,
      rectangle split,rectangle split horizontal,rectangle split parts=2,
      rectangle split part align=base,
      rectangle split part fill={black!20, blue!30, green!30}]
    \newcommand{\ustate}[4]{$#1$\nodepart{two}$#2, #3, #4$}
  
    \node[uberstate,initial] (p) {\ustate p \emptyset \emptyset \emptyset};
    \node[uberstate,below of=p,xshift=0mm] (pq) {\ustate{p+q} q \emptyset q};
    \node[uberstate,below of=pq,xshift=0mm] (pqsafe) {\ustate{p+q} \emptyset q \emptyset};
  
    \node[uberstate,right of=p,xshift=22mm] (pqr) {\ustate{p+q+r} q \emptyset q};
    \node[uberstate,below of=pqr,xshift=0mm] (pqrs) {\ustate{p+q+r+s} q \emptyset q };
    \node[uberstate,below of=pqrs,xshift=0mm] (pqrssafe) {\ustate{p+q+r+s} \emptyset q \emptyset};
  
    \path[->]
      (p) edge pic[pos=0.5] {acc=0} pic[auto] {l=$b$} (pq)
      (pq) edge pic[pos=0.5] {acc=0} pic[auto] {l=$b$} (pqsafe)
      (pq.182) edge[out=210,in=150,loop,distance=9mm] pic[auto] {l=$b$} (pq.178)
      (pqsafe.182) edge[out=210,in=150,loop,distance=9mm] pic[pos=0.5] {acc=0} pic[auto] {l=$b$} (pqsafe.178)
      (p) edge pic[pos=0.5] {acc=0} pic[auto] {l=$a$} (pqr)
      (pqr) edge[bend left] pic[auto] {l=$a$} (pqrs)
      (pqr) edge[bend right] node[left] {$b$} (pqrs)
      (pqrs.12) edge[out=120,in=60,loop,distance=6mm] node[auto] {$a,b$} (pqrs.11)
      (pq) edge pic[auto] {l=$a$} (pqrs)
      (pqrssafe.182) edge[out=210,in=150,loop,distance=9mm] pic {acc=0} pic[auto] {l=$b$} (pqrssafe.178)
      (pqrs) edge pic {acc=0} pic[left] {l=$b$} (pqrssafe)
      (pqr.355) edge[bend left=40] pic {acc=0} pic[auto] {l=$b$} (pqrssafe.2)
      ;
  \end{tikzpicture}

  }
    \caption{$\aut_1 = \modcompl(\dac_{P_0}, \autex)$}
  \end{subfigure}
  \begin{subfigure}[b]{0.48\textwidth}
    \centering
    \scalebox{0.7}{
\begin{tikzpicture}[automaton]
    \tikzstyle{every state}=[inner sep=3pt,minimum size=5pt]
    \tikzstyle{empty}=[]
    \tikzstyle{initstate}=[fill=yellow!30]
    \tikzstyle{uberstate}=[
      rounded corners,draw,anchor=base,
      rectangle split,rectangle split horizontal,rectangle split parts=2,
      rectangle split part align=base,
      rectangle split part fill={black!20, green!30}]
    \newcommand{\ustate}[3]{$#1$\nodepart{two}$#2, #3$}
  
  
    \node[uberstate,initial] (p) {\ustate p \emptyset \emptyset};
    \node[uberstate,below of=p,xshift=0mm] (pq) {\ustate{p+q} \emptyset \emptyset};
  
    \node[uberstate,right of=p,xshift=22mm] (pqr) {\ustate{p+q+r}  r r};
    \node[uberstate,below of=pqr,xshift=0mm] (pqrs) {\ustate{p+q+r+s}  {r+s} {r+s}};
    \node[uberstate,below of=pqrs,xshift=0mm] (pqrsbreak) {\ustate{p+q+r+s} {r+s} {r}};
  
    \path[->]
      (p) edge pic[pos=0.5] {acc=1} pic[auto] {l=$b$} (pq)
      (pq.182) edge[out=210,in=150,loop,distance=9mm] pic {acc=1} pic[auto] {l=$b$} (pq.178)
      (p) edge pic[pos=0.5] {acc=1} pic[auto] {l=$a$} (pqr)
      (pqr) edge[bend left] pic {acc=1} pic[auto] {l=$a$} (pqrs)
      (pqr) edge[bend right] node[left] {$b$} (pqrs)
      (pqrs.12) edge[out=120,in=60,loop,distance=6mm] node[auto] {$b$} (pqrs.11)
      (pq) edge pic {acc=1} pic[auto] {l=$a$} (pqrs)
      (pqrs) edge[bend left] node[auto] {$a$} (pqrsbreak)
      (pqrsbreak) edge[bend left] node[auto] {$b$} (pqrs)
      (pqrsbreak.160) edge[bend left] pic {acc=1} pic[auto] {l=$a$} (pqrs.200)
      ;
  \end{tikzpicture}

  }
    \caption{$\aut_2 = \modcompl(\iwc_{P_1}, \autex)$}
  \end{subfigure}

  \vspace{5mm}
  \begin{subfigure}[b]{0.25\textwidth}
    \centering
    \scalebox{0.8}{\begin{tikzpicture}[automaton]
    \tikzstyle{every state}=[inner sep=3pt,minimum size=5pt]
    \tikzstyle{empty}=[]
    \tikzstyle{initstate}=[fill=yellow!30]

    \node[state, initstate, initial] (p) {$p$};
    \node[state, right of=p] (q) {$q$};
  
    \path[->]
      (p) edge pic {acc=0} pic[auto] {l=$b$} (q)
      (p) edge[loop above] node {$a,b$} (p)
      (q) edge[loop above] pic {acc=0} pic[auto] {l=$b$} (q)
      ;
  \end{tikzpicture}}
    \caption{$\reduce(\aut_1)$}
  \end{subfigure}
  \begin{subfigure}[b]{0.25\textwidth}
    \centering
    \scalebox{0.8}{\begin{tikzpicture}[automaton]
    \tikzstyle{every state}=[inner sep=3pt,minimum size=5pt]
    \tikzstyle{empty}=[]
    \tikzstyle{initstate}=[fill=yellow!30]

    \node[state, initstate, initial] (p) {$0$};
    \node[state, right of=p] (q) {$2$};
    \node[state, below of=p] (r) {$1$};
    \node[state, right of=r] (s) {$3$};
  
    \path[->]
      (p) edge node[above] {$a$} (q)
      (p) edge node[left] {$b$} (r)
      (r) edge[loop below] pic {acc=1} pic[auto] {l=$b$} (r)
      (r) edge node[above] {$a$} (s)
      (q) edge[bend left] pic {acc=1} pic[left] {l=$a$} (s)
      (q) edge[bend right] node[left] {$b$} (s)
      (s) edge[loop below] node {$b$} (s)
      (s) edge[out=10,in=-10] node[right] {$a$} (q)
      ;
    %
  \end{tikzpicture}}
    \caption{$\reduce(\aut_2)$}
  \end{subfigure}
  \begin{subfigure}[b]{0.45\textwidth}
    \centering
    \scalebox{0.8}{\begin{tikzpicture}[automaton]
    \tikzstyle{every state}=[inner sep=3pt,minimum size=5pt,rectangle, rounded corners=2mm]
    \tikzstyle{empty}=[]
    \tikzstyle{initstate}=[fill=yellow!30]

    \node[state, initstate, initial] (p) {$(p,0)$};
    \node[state, right of=p] (q) {$(p,2)$};
    \node[state, below of=p] (r) {$(p,1)$};
    \node[state, right of=r] (s) {$(p,3)$};
    \node[state, above of=p] (t) {$(q,1)$};
    \node[state, above of=q] (u) {$(q,3)$};
  
    \path[->]
      (p) edge node[above] {$a$} (q)
      (p) edge node[left] {$b$} (r)
      (r) edge[loop below] pic {acc=1} pic[auto] {l=$b$} (r)
      (r) edge node[above] {$a$} (s)
      (q) edge[bend left] pic {acc=1} pic[left] {l=$a$} (s)
      (q) edge[bend right] node[left] {$b$} (s)
      (s) edge[loop below] node {$b$} (s)
      (s) edge[out=10,in=-10] node[left] {$a$} (q)
      (p) edge pic {acc=0} pic[left] {l=$b$} (t)
      (q) edge pic {acc=0} pic[left] {l=$b$} (u)
      (r) edge[out=140,in=-140] pic {acc=0} pic[left] {l=$b$} (t)
      (s) edge[out=10,in=-10] pic {acc=0} pic[left] {l=$b$} (u)
      (u) edge[loop above] pic {acc=0} pic[auto] {l=$b$} (u)
      (t) edge[loop above,min distance=10mm,in=60,out=120] pic[pos=0.3] {acc=0} pic[pos=0.7] {acc=1} pic[auto] {l=$b$} (t)
      ;
  \end{tikzpicture}}
    \caption{$\postponedcompl(\dac_{P_0}, \iwc_{P_1},\autex)$}
  \end{subfigure}

  \caption{Example of the postponed construction applied on $\autex$ with the 
  result's accepting condition $\acccond\colon \Infof{\protect\tacc{0}} \land \Infof{\protect\tacc{1}}$.
  }
  \label{fig:postponed-example}
\end{figure}
}

\begin{figure}[t]
  \begin{subfigure}[b]{0.48\textwidth}
    \centering
    \scalebox{0.7}{
\begin{tikzpicture}[automaton]
    \tikzstyle{every state}=[inner sep=3pt,minimum size=5pt]
    \tikzstyle{empty}=[]
    \tikzstyle{initstate}=[fill=yellow!30]
    \tikzstyle{uberstate}=[
      rounded corners,draw,anchor=base,
      rectangle split,rectangle split horizontal,rectangle split parts=2,
      rectangle split part align=base,
      rectangle split part fill={black!20, blue!30, green!30}]
    \newcommand{\ustate}[4]{$#1$\nodepart{two}$#2, #3, #4$}
  
    \node[uberstate,initial] (p) {\ustate p \emptyset \emptyset \emptyset};
    \node[uberstate,below of=p,xshift=0mm] (pq) {\ustate{p+q} q \emptyset q};
    \node[uberstate,below of=pq,xshift=0mm] (pqsafe) {\ustate{p+q} \emptyset q \emptyset};
  
    \node[uberstate,right of=p,xshift=22mm] (pqr) {\ustate{p+q+r} q \emptyset q};
    \node[uberstate,below of=pqr,xshift=0mm] (pqrs) {\ustate{p+q+r+s} q \emptyset q };
    \node[uberstate,below of=pqrs,xshift=0mm] (pqrssafe) {\ustate{p+q+r+s} \emptyset q \emptyset};
  
    \path[->]
      (p) edge pic[pos=0.5] {acc=0} pic[auto] {l=$b$} (pq)
      (pq) edge pic[pos=0.5] {acc=0} pic[auto] {l=$b$} (pqsafe)
      (pq.182) edge[out=210,in=150,loop,distance=9mm] pic[auto] {l=$b$} (pq.178)
      (pqsafe.182) edge[out=210,in=150,loop,distance=9mm] pic[pos=0.5] {acc=0} pic[auto] {l=$b$} (pqsafe.178)
      (p) edge pic[pos=0.5] {acc=0} pic[auto] {l=$a$} (pqr)
      (pqr) edge[bend left] pic[auto] {l=$a$} (pqrs)
      (pqr) edge[bend right] node[left] {$b$} (pqrs)
      (pqrs.12) edge[out=120,in=60,loop,distance=6mm] node[auto] {$a,b$} (pqrs.11)
      (pq) edge pic[auto] {l=$a$} (pqrs)
      (pqrssafe.182) edge[out=210,in=150,loop,distance=9mm] pic {acc=0} pic[auto] {l=$b$} (pqrssafe.178)
      (pqrs) edge pic {acc=0} pic[left] {l=$b$} (pqrssafe)
      (pqr.355) edge[bend left=40] pic {acc=0} pic[auto] {l=$b$} (pqrssafe.2)
      ;
  \end{tikzpicture}

  }
    \caption{$\aut_1 = \modcompl(\dac_{P_0}, \autex)$}
  \end{subfigure}
  \begin{subfigure}[b]{0.48\textwidth}
    \centering
    \scalebox{0.7}{
\begin{tikzpicture}[automaton]
    \tikzstyle{every state}=[inner sep=3pt,minimum size=5pt]
    \tikzstyle{empty}=[]
    \tikzstyle{initstate}=[fill=yellow!30]
    \tikzstyle{uberstate}=[
      rounded corners,draw,anchor=base,
      rectangle split,rectangle split horizontal,rectangle split parts=2,
      rectangle split part align=base,
      rectangle split part fill={black!20, green!30}]
    \newcommand{\ustate}[3]{$#1$\nodepart{two}$#2, #3$}
  
  
    \node[uberstate,initial] (p) {\ustate p \emptyset \emptyset};
    \node[uberstate,below of=p,xshift=0mm] (pq) {\ustate{p+q} \emptyset \emptyset};
  
    \node[uberstate,right of=p,xshift=22mm] (pqr) {\ustate{p+q+r}  r r};
    \node[uberstate,below of=pqr,xshift=0mm] (pqrs) {\ustate{p+q+r+s}  {r+s} {r+s}};
    \node[uberstate,below of=pqrs,xshift=0mm] (pqrsbreak) {\ustate{p+q+r+s} {r+s} {r}};
  
    \path[->]
      (p) edge pic[pos=0.5] {acc=1} pic[auto] {l=$b$} (pq)
      (pq.182) edge[out=210,in=150,loop,distance=9mm] pic {acc=1} pic[auto] {l=$b$} (pq.178)
      (p) edge pic[pos=0.5] {acc=1} pic[auto] {l=$a$} (pqr)
      (pqr) edge[bend left] pic {acc=1} pic[auto] {l=$a$} (pqrs)
      (pqr) edge[bend right] node[left] {$b$} (pqrs)
      (pqrs.12) edge[out=120,in=60,loop,distance=6mm] node[auto] {$b$} (pqrs.11)
      (pq) edge pic {acc=1} pic[auto] {l=$a$} (pqrs)
      (pqrs) edge[bend left] node[auto] {$a$} (pqrsbreak)
      (pqrsbreak) edge[bend left] node[auto] {$b$} (pqrs)
      (pqrsbreak.160) edge[bend left] pic {acc=1} pic[auto] {l=$a$} (pqrs.200)
      ;
  \end{tikzpicture}

  }
    \caption{$\aut_2 = \modcompl(\iwc_{P_1}, \autex)$}
  \end{subfigure}

  \vspace{5mm}
  \begin{subfigure}[b]{0.25\textwidth}
    \centering
    \scalebox{0.8}{\begin{tikzpicture}[automaton]
    \tikzstyle{every state}=[inner sep=3pt,minimum size=5pt]
    \tikzstyle{empty}=[]
    \tikzstyle{initstate}=[fill=yellow!30]

    \node[state, initstate, initial] (p) {$p$};
    \node[state, right of=p] (q) {$q$};
  
    \path[->]
      (p) edge pic {acc=0} pic[auto] {l=$b$} (q)
      (p) edge[loop above] node {$a,b$} (p)
      (q) edge[loop above] pic {acc=0} pic[auto] {l=$b$} (q)
      ;
  \end{tikzpicture}}
    \caption{$\reduce(\aut_1)$}
  \end{subfigure}
  \begin{subfigure}[b]{0.25\textwidth}
    \centering
    \scalebox{0.8}{\begin{tikzpicture}[automaton]
    \tikzstyle{every state}=[inner sep=3pt,minimum size=5pt]
    \tikzstyle{empty}=[]
    \tikzstyle{initstate}=[fill=yellow!30]

    \node[state, initstate, initial] (p) {$0$};
    \node[state, right of=p] (q) {$2$};
    \node[state, below of=p] (r) {$1$};
    \node[state, right of=r] (s) {$3$};
  
    \path[->]
      (p) edge node[above] {$a$} (q)
      (p) edge node[left] {$b$} (r)
      (r) edge[loop below] pic {acc=1} pic[auto] {l=$b$} (r)
      (r) edge node[above] {$a$} (s)
      (q) edge[bend left] pic {acc=1} pic[left] {l=$a$} (s)
      (q) edge[bend right] node[left] {$b$} (s)
      (s) edge[loop below] node {$b$} (s)
      (s) edge[out=10,in=-10] node[right] {$a$} (q)
      ;
    %
  \end{tikzpicture}}
    \caption{$\reduce(\aut_2)$}
  \end{subfigure}
  \begin{subfigure}[b]{0.45\textwidth}
    \centering
    \scalebox{0.8}{\begin{tikzpicture}[automaton]
    \tikzstyle{every state}=[inner sep=3pt,minimum size=5pt,rectangle, rounded corners=2mm]
    \tikzstyle{empty}=[]
    \tikzstyle{initstate}=[fill=yellow!30]

    \node[state, initstate, initial] (p) {$(p,0)$};
    \node[state, right of=p] (q) {$(p,2)$};
    \node[state, below of=p] (r) {$(p,1)$};
    \node[state, right of=r] (s) {$(p,3)$};
    \node[state, above of=p] (t) {$(q,1)$};
    \node[state, above of=q] (u) {$(q,3)$};
  
    \path[->]
      (p) edge node[above] {$a$} (q)
      (p) edge node[left] {$b$} (r)
      (r) edge[loop below] pic {acc=1} pic[auto] {l=$b$} (r)
      (r) edge node[above] {$a$} (s)
      (q) edge[bend left] pic {acc=1} pic[left] {l=$a$} (s)
      (q) edge[bend right] node[left] {$b$} (s)
      (s) edge[loop below] node {$b$} (s)
      (s) edge[out=10,in=-10] node[left] {$a$} (q)
      (p) edge pic {acc=0} pic[left] {l=$b$} (t)
      (q) edge pic {acc=0} pic[left] {l=$b$} (u)
      (r) edge[out=140,in=-140] pic {acc=0} pic[left] {l=$b$} (t)
      (s) edge[out=10,in=-10] pic {acc=0} pic[left] {l=$b$} (u)
      (u) edge[loop above] pic {acc=0} pic[auto] {l=$b$} (u)
      (t) edge[loop above,min distance=10mm,in=60,out=120] pic[pos=0.3] {acc=0} pic[pos=0.7] {acc=1} pic[auto] {l=$b$} (t)
      ;
  \end{tikzpicture}}
    \caption{$\postponedcompl(\dac_{P_0}, \iwc_{P_1},\autex)$}
  \end{subfigure}

  \caption{Example of the postponed construction applied on $\autex$ with the 
  result's accepting condition $\acccond\colon \Infof{\protect\tacc{0}} \land \Infof{\protect\tacc{1}}$.
  }
  \label{fig:postponed-example}
\end{figure}

\begin{figure}[t]
  \centering
  \scalebox{0.8}{
\begin{tikzpicture}[automaton]
    \tikzstyle{every state}=[inner sep=3pt,minimum size=5pt]
    \tikzstyle{empty}=[]
    \tikzstyle{initstate}=[fill=yellow!30]
    \tikzstyle{uberstate}=[
      rounded corners,draw,anchor=base,
      rectangle split,rectangle split horizontal,rectangle split parts=4,
      rectangle split part align=base,
      rectangle split part fill={black!20, blue!30, green!30, orange!30}]
    \newcommand{\ustate}[4]{$#1$\nodepart{two}$#2$\nodepart{three}$#3$\nodepart{four} $#4$}
  
  
    \node[uberstate,initial] (p) {\ustate p {\emptyset,\emptyset,\emptyset} \emptyset 1};
    \node[uberstate,below of=p,xshift=0mm] (pq) {\ustate{p+q} {q,\emptyset,q} \emptyset 1};
    \node[uberstate,below of=pq,xshift=0mm] (pqsafedac) {\ustate{p+q} {\emptyset,q} {\emptyset,\emptyset} 2};
    \node[uberstate,below of=pqsafedac,xshift=0mm] (pqsafeiwa) {\ustate{p+q} {\emptyset,q,\emptyset} {\emptyset} 1};
  
    \node[uberstate,right of=p,xshift=35mm] (pqr) {\ustate{p+q+r} {q,\emptyset,q} r 1};
    \node[uberstate,below of=pqr,xshift=0mm] (pqrs) {\ustate{p+q+r+s} {q,\emptyset,q} {r+s} 1};
    \node[uberstate,below of=pqrs,xshift=0mm] (pqrssafe) {\ustate{p+q+r+s} {\emptyset,q} {r+s, r+s} 2};
  
    \path[->]
      (p) edge pic[pos=0.5] {acc=0} pic[auto] {l=$b$} (pq)
      (pq) edge pic[pos=0.5] {acc=0} pic[auto] {l=$b$} (pqsafedac)
      (pqsafedac) edge[bend left] pic[pos=0.5] {acc=1} pic[auto] {l=$b$} (pqsafeiwa)
      (pqsafeiwa) edge[bend left] pic[pos=0.5] {acc=0} pic[auto] {l=$b$} (pqsafedac)
      (pq.182) edge[out=210,in=150,loop,distance=9mm] pic[auto] {l=$b$} (pq.178)
      (p) edge pic[pos=0.5] {acc=0} pic[auto] {l=$a$} (pqr)
      (pqr) edge[bend left] pic[auto] {l=$a$} (pqrs)
      (pqr) edge[bend right] node[left] {$b$} (pqrs)
      (pqrs.12) edge[out=120,in=60,loop,distance=6mm] node[auto] {$a,b$} (pqrs.11)
      (pq) edge pic[auto] {l=$a$} (pqrs)
      (pqrssafe.182) edge[out=210,in=150,loop,distance=9mm] node[auto] {$b$} (pqrssafe.178)
      (pqrs) edge pic {acc=0} pic[left] {l=$b$} (pqrssafe)
      (pqr.355) edge[bend left=40] pic {acc=0} pic[auto] {l=$b$} (pqrssafe.2)
      ;
  \end{tikzpicture}

  
  }
  \caption{The outcome of $\modcomplrr(\dacrr_{P_0}, \iwcrr_{P_1},\autex)$ with $\acccond\colon \Infof{\protect\tacc{0}} \land \Infof{\protect\tacc{1}}$ applied on the BA from \cref{fig:example}. }
  \label{fig:round-robin-ex}
\end{figure}

In this section, we provide additional examples to the optimization. The 
example of the postponed construction depicting also intermediate steps of the construction is shown in \cref{fig:postponed-example}.
The example of the round-robin algorithm is shown in \cref{fig:round-robin-ex}.

The example of the complementation of initial deterministic partition block is shown in \cref{fig:example-aut-idec-result}:

\begin{figure}[t]
  \centering
    \begin{tikzpicture}
      \node (bex) {\begin{tikzpicture}[automaton]
    \tikzstyle{every state}=[inner sep=3pt,minimum size=5pt]
    \tikzstyle{empty}=[]
    \tikzstyle{initstate}=[fill=yellow!30]
  
    \path[use as bounding box] (-1.5,-1.125) rectangle (2.75,1.5);

    \node[state,initial,initstate] (p) at (0,0) {$p$};
    \node[state,right of=p,xshift=0mm] (q) {$q$};
    
    \node[font=\large] at (-1.25,0) {$\butex$};

    \draw[dotted] ($(p.north west) + (-0.15,0.9)$) rectangle ($(p.south east) + (0.15,-0.9)$);
    \node[anchor=south] at ($(p.north) + (0,0.75)$) {$P_{0}$};
    \draw[dotted] ($(q.north west) + (-0.15,0.9)$) rectangle ($(q.south east) + (0.15,-0.15)$);
    \node[anchor=south] at ($(q.north) + (0,0.75)$) {$P_{1}$};

  
    \path[->]
      (p) edge[loop above] node[auto] {$a$} node[anchor=center] {$\bullet$} (p)
      (p) edge[loop below] node[auto] {$b$} (p)
      (p) edge node[above] {$a$} (q)
      (q) edge[loop above] node[auto] {$b$} node[anchor=center] {$\bullet$} (q)
      ;
  \end{tikzpicture}};
      \node[anchor=west] (comp) at (bex.east) {\begin{tikzpicture}[automaton]
    \tikzstyle{every state}=[inner sep=3pt,minimum size=5pt]
    \tikzstyle{empty}=[]
    \tikzstyle{initstate}=[fill=yellow!30]
    \tikzstyle{uberstate}=[
      rounded corners,draw,anchor=base,
      rectangle split,rectangle split horizontal,rectangle split parts=3,
      rectangle split part align=base,
      rectangle split part fill={black!20, blue!30, green!30}]
    \newcommand{\ustate}[4]{$#1$\nodepart{two}$#2$\nodepart{three}$#3, #4$}
  
    \path[use as bounding box] (-1.5,-1.125) rectangle (4.75,1.5);
  
    \node[uberstate,initial] (p) at (0,0) {\ustate p p \emptyset \emptyset};
    \node[uberstate,right of=p,xshift=15mm] (pq) {\ustate{p+q} p q q};

    \path[->]
      (p) edge pic[pos=0.3] {cacc={4}{0}} pic[pos=0.6] {acc=1} pic[auto] {l=$a$} (pq)
      (p) edge[loop above,distance=9mm] pic {acc=1} pic[auto] {l=$b$} (p)
      (pq) edge[loop above,distance=9mm] node {$b$} (pq)
      (pq) edge[in=-115,out=-65,loop,distance=9mm] pic[pos=0.25] {acc=1} pic[pos=0.75] {cacc={4}{0}} pic[auto] {l=$a$} (pq)
      ;
  \end{tikzpicture}};
    \end{tikzpicture}
  \caption{%
    Left: $\butex$.
    \mbox{Right: $\modcompl(\idac_{P_{0}}, \iwc_{P_{1}},\butex)$ with $\acccond\colon \Finof {\protect\tcacc{4}{0}} \wedge \Infof{\protect\tacc{1}}$.}
    }
  \label{fig:example-aut-idec-result}  
\end{figure}

\section{Missing Proofs from the Main Text}

\subsection{Proofs of \cref{sec:modular}}

\thmSynchrCorr*

\begin{proof}
  Let $\aut = (\states, \trans, \inits, \acc)$ be a BA. 
  Moreover, since $P_1, \ldots, P_n$  is partitioning of $\aut$, 
  we have that 
  \begin{equation}\label{eq:union-all}
    \bigcup_{i=1}^n \langof{\aut_{P_i}} = \langof{\aut}
  \end{equation}
  In the first part of the proof, we prove the following claim
  
  \begin{claim}\label{claim-intersection}
  \begin{equation}\label{eq:intersection}
    \bigcap_{i=1}^n\langof{\modcompl(\alg^i_{P_i}, \aut_{P_i})} = 
    \langof{\modcompl(\alg^1_{P_1}, \dots, \alg^n_{P_n},\aut)}
  \end{equation}
  \end{claim}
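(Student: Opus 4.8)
The plan is to reduce correctness of the synchronous product to the correctness of the individual partial algorithms~$\alg^i$, via two elementary facts: that the partitioning $P_1,\dots,P_n$ captures all accepting behaviour of~$\aut$, and that $\modcompl(\alg^1_{P_1},\dots,\alg^n_{P_n},\aut)$ is, up to the shared subset-tracking component~$H$, the product of the $\modcompl(\alg^i_{P_i},\aut_{P_i})$'s. First I would establish \eqref{eq:union-all}: the inclusion $\langof{\aut_{P_i}}\subseteq\langof{\aut}$ is immediate since $\aut_{P_i}$ only drops some accepting transitions, and conversely an accepting run of~$\aut$ eventually stays in a single SCC whose infinitely-often-used accepting transitions make it accepting, hence it lies in some~$P_i$ and the run is already accepting in $\aut_{P_i}$. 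I would also discharge the hypotheses needed to apply \cref{def:alg-correct} to each~$i$: since the SCC decomposition ignores the colouring, the SCCs of $\aut_{P_i}$ are exactly those of~$\aut$ and an SCC of $\aut_{P_i}$ is accepting exactly when it lies in~$P_i$, so $P_i$ is a partition block of $\aut_{P_i}$, the automaton has no accepting transition outside~$P_i$ by construction, and $\aut_{P_i}\models\phi_{\alg^i}$ by assumption; hence $\aut_{P_i}\modelssurrp\phi_{\alg^i}$, and correctness of~$\alg^i$ yields $\langof{\modcompl(\alg^i_{P_i},\aut_{P_i})}=\Sigma^\omega\setminus\langof{\aut_{P_i}}$.

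The core of the argument is \eqref{eq:intersection}. For a fixed word~$\word$ the $2^{\states}$-component of every automaton involved is forced ($H_0=\inits$, $H_{j+1}=\trans(H_j,\wordof j)$), so a macrorun of the product over~$\word$ is exactly an $n$-tuple $(\rho^1,\dots,\rho^n)$ of macroruns over~$\word$, one in each $\modcompl(\alg^i_{P_i},\aut_{P_i})$, all sharing this $H$-sequence: indeed $\trans^{\cut}$ is built componentwise from the $\GetSuccOf{\alg^i_{P_i}}$'s with no coupling beyond~$H$, and conversely any such tuple assembles into a legal product macrorun. The colouring is additive: a product edge carries $\bigcup_i\renumof{\alpha_i}{i}$, where $\alpha_i$ is the $i$-th component's colour set on that step, and since $\renum(\cdot,i)$ has pairwise disjoint ranges, the colour $\renumof{c}{i}$ occurs infinitely often along the product run iff $c$ occurs infinitely often along~$\rho^i$. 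As $\renumof{\GetAccOf{\alg^i_{P_i}}}{i}$ is $\GetAccOf{\alg^i_{P_i}}$ with colours relabelled by $\renum(\cdot,i)$, and $\acccond^{\cut}=\bigwedge_i\renumof{\GetAccOf{\alg^i_{P_i}}}{i}$, the product run is accepting iff every~$\rho^i$ is; since the $\rho^i$ share only~$H$ and may otherwise be chosen independently, $\word$ is accepted by the product iff for every~$i$ some macrorun over~$\word$ in $\modcompl(\alg^i_{P_i},\aut_{P_i})$ is accepting, i.e.\ iff $\word\in\bigcap_i\langof{\modcompl(\alg^i_{P_i},\aut_{P_i})}$, which is \eqref{eq:intersection}.

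Chaining: $\langof{\modcompl(\alg^1_{P_1},\dots,\alg^n_{P_n},\aut)}=\bigcap_i\langof{\modcompl(\alg^i_{P_i},\aut_{P_i})}=\bigcap_i(\Sigma^\omega\setminus\langof{\aut_{P_i}})=\Sigma^\omega\setminus\bigcup_i\langof{\aut_{P_i}}=\Sigma^\omega\setminus\langof{\aut}$, by \eqref{eq:intersection}, correctness of the~$\alg^i$, De Morgan, and \eqref{eq:union-all}. I expect the main obstacle to be the bookkeeping behind \eqref{eq:intersection}: arguing cleanly that product macroruns over a word split into independent tuples of component macroruns, and that the disjoint renumbering $\renum$ makes the infinitely-often colour sets---and hence the conjunctive Emerson--Lei condition---decompose componentwise; the side conditions for \cref{def:alg-correct} are routine but should not be skipped. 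If $\aut$ is not assumed complete, one additionally has to carry the accepting sink for $H=\emptyset$ through both sides of \eqref{eq:intersection}, which is straightforward.
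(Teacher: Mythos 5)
Your argument for \eqref{eq:intersection} is essentially the paper's own proof: both directions rest on the observation that the $H$-component is deterministically forced by $\trans$ (hence identical in the product and in each $\modcompl(\alg^i_{P_i},\aut_{P_i})$), so product macroruns decompose into, and can be assembled from, independent tuples of component macroruns, and the conjunctive acceptance condition with disjointly renumbered colours decomposes componentwise. Your treatment is correct (and slightly more explicit than the paper about the role of $\renum$), so nothing further is needed.
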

  \begin{claimproof}
  $(\subseteq)$ Consider a word $\alpha \in \bigcap_{i=1}^n\langof{\modcompl(\alg^i_{P_i}, \aut_{P_i})}$. 
  Then, there are accepting runs $\rho_i$ of the form $\rho_i = (H_i^1,M_i^1)(H_i^2,M_i^2)\dots$ such that 
  $H_i^{\ell+1} = \delta(H_i^\ell, \alpha_\ell)$ and $M_i^{\ell+1} \in \GetSuccOf{\alg^i_{P_i}}(H_i^\ell, M_i^\ell, \alpha_\ell)$
  for each $1\leq i \leq n$. From the definition of these runs we have that $H_i^\ell = H_j^\ell$ for 
  each $1\leq i,j \leq n$. Therefore, there is also a run 
  $\varrho = (H_1^1,M_1^1,M_2^1,\dots,M_n^1)(H_1^2,M_1^2,M_2^2,\dots,M_n^2)\dots$ over $\alpha$ in 
  the automaton $\modcompl(\alg^1_{P_1}, \dots, \alg^n_{P_n},\aut)$.
  Since $\rho_i \models \GetAccOf{\alg^i_{P_i}}$ for each $i$, we also have 
  $\varrho \models \bigwedge_{i=1}^n\GetAccOf{\alg^i_{P_i}}$ implying that $\varrho$ is 
  accepting in $\modcompl(\alg^1_{P_1}, \dots, \alg^n_{P_n},\aut)$.

  $(\supseteq)$  
  Consider a word $\alpha \in\langof{\modcompl(\alg^1_{P_1}, \dots, \alg^n_{P_n},\aut)}$. Then, there 
  is an accepting run $\varrho = (H^1,M_1^1,M_2^1,\dots,M_n^1)(H^2,M_1^2,M_2^2,\dots,M_n^2)\dots$ over $\alpha$.
  From the definition of $\modcompl$, we have that there are runs $\rho_i = (H^1,M_i^1)(H^2,M_i^2)\dots$
  on $\alpha$ in $\modcompl(\alg^i_{P_i}, \aut_{P_i})$ for each $1\leq i \leq n$.
  Since, $\varrho \models \bigwedge_{i=1}^n\GetAccOf{\alg^i_{P_i}}$, we have that 
  each $\rho_i$ is accepting as well. 
  \end{claimproof}
  
  Then we proceed as follows. From \cref{def:alg-correct} we get 
  $$
    \langof{\modcompl(\alg^i_{P_i}, \aut_{P_i})} = \Sigma^\omega\setminus\langof{\aut_{P_i}}
  $$
  and hence using \eqref{eq:union-all}
  $$
  \bigcap_{i=1}^n\langof{\modcompl(\alg^i_{P_i}, \aut_{P_i})} = \Sigma^\omega\setminus \left( \bigcup_{i=1}^n \langof{\aut_{P_i}} \right) = \Sigma^\omega\setminus \langof{\aut},
  $$
  which, together with \eqref{eq:intersection}, concludes the proof.
  \qed
\end{proof}

\subsection{Proofs of \cref{sec:modular-elevator}}

\lemCorrIWCS*

\begin{proof}
  Let $w$ be an $\omega$-word.
  Our goal is to prove that $w$ is not accepted within $P$ if and only if $\tcacc{1}{0}$ occurs infinitely often.

  Assume that there exists a sequence $\hat{\rho} = (C_0, B_0) \cdots (C_i, B_i) \cdots$ of macrostates over $w$ that emits infinitely often color $\tcacc{1}{0}$.
  Our goal is to prove that $w$ is not accepted within $P$.
  Note that $\aut$ is complete, so each run $\rho$ of $\aut$ over $w$ is an infinite run.
    Since all SCCs are accepting and inherently weak in $P$, we only need to prove that every run $\rho$ entering $P$ will eventually exit $P$.
  First, we let $\rho$ enter $P$ at some point, say $k > 0$.
  That is, we have $\rho_k \in C_k$.
  Since $\hat{\rho}$ emits infinitely often the color $\tcacc{1}{0}$, there must be an integer $\ell \geq k$ such that $\trans(B_{\ell}, w_{\ell}) \cap C'_{\ell+1} = \emptyset$.
  It follows that $\rho'_{\ell + 1} \in B_{\ell+1} = C'_{\ell+1}$ for all runs branching from $\rho$ with the $(\ell+1)$-th state being $\rho'_{\ell + 1}$.
  So all runs branching from $\rho$ will be present in the $B_{\ell + 1}$-set.
  Again, by assumption, there must be an integer $\ell' \geq \ell$ such that $\trans(B_{\ell'}, w_{\ell'}) \cap C'_{\ell'+1} = \emptyset$.
  It follows that all runs branching from $\rho$ must have left the $B_{j}$-set for all $j > \ell'$.
  Since $\tcacc{1}{0}$ occurs infinitely often, i.e., there are infinitely many empty $B$-sets along $\hat{\rho}$, all runs entering $P$ must eventually exit $P$.
  Thus, $w$ is not accepted within the partition block $P$.
  
  Now we assume that $w$ is not accepted within $P$ and show that $\tcacc{1}{0}$ occurs infinitely often.
  We prove it by contradiction.
  Suppose that $\tcacc{1}{0}$ occurs only for a finite number of times along the sequence $\hat{\rho} = (C_0, B_0) \cdots (C_i, B_i) \cdots$ of macrostates over $w$.
  Then there exists an integer $k>0$ such that $B_{j} \neq \emptyset$ for all $j \geq k$.
  It follows that $B_{j+1} = \trans(B_{j}, w_{j}) \cap C'_{j+1} = \trans(B_{j}, w_{j}) \cap \trans(H_{j}, w_{j}) \cap P $ for all $j \geq k$, i.e., $B_{j+1} \subseteq \trans(B_{j}, w_{j}) \subseteq P$ for all $j \geq k$.
  By K\"onig's lemma, there must be an infinite run $\rho$ within $B \subseteq P$.
  Since all SCCs in $P$ are accepting and inherently weak, we know that $\rho$ must be accepting, which contradicts the assumption that $w$ is not accepted within $P$.
  Thus, we have proved that $\tcacc{1}{0}$ must occur infinitely often.
  
  Therefore, we have proved that $w$ is not accepted within $P$ if and only if $\tcacc{1}{0}$ occurs infinitely often.
  \qed
\end{proof}


\lemCorrDAC*

\begin{proof}
  Let $w$ be an $\omega$-word.
  Our goal is to prove that $w$ is not accepted within $P$ if and only if there exists an infinite sequence of macrostates $\hat{\rho} = (C_0, S_0, B_)) \cdots$ that emits color $\tcacc{0}{0}$ infinitely often.

First, assume that there exists an infinite sequence of macrostates $\hat{\rho} = (C_0, S_0, B_0) \cdots $ over $w$ that emits infinitely often the color $\tcacc{0}{0}$. 
We then need to prove that a run $\rho$ of $\aut$ over $w$ that enters $P$ will either leave at some time or not be accepting. 
For simplicity, we let $h > 0$ be an integer such that $\rho_{j}$ belongs to the same SCC in $P$ for all $j \geq h$. 
When a state $s$ on a run transitions from an SCC to a state $t$ in another SCC, we say this run dies out and there is a new run entering $P$ from state $t$.
Assume that the run $\rho$ is present in $P$ at time $k \geq h$.
Once $\rho$ is in $P$, we know that $\rho$ is deterministic, i.e., no branching runs will be derived from $\rho$.
Therefore, we only need to focus on the deterministic run.
Let $\rho_k \in C_k$ (the cases when $\rho_k \in B_{k}$ and $\rho_k \in S_{k}$ are easier and will be discussed later).
There must be an integer $\ell \geq k$ such that $B^{\star}_{\ell+1} = \transscc(B_{\ell}, w_{\ell}) = \emptyset$ since $\tcacc{0}{0}$ occurs infinitely often.
Thus $(C_{\ell+1}, S_{\ell+1}, B_{\ell+1})$ has two possibilities:
(1) $(C'_{\ell+1}, S'_{\ell+1}, B'_{\ell+1} = C'_{\ell+1})$, i.e., $\rho$ is moved to to $B$-set and (2) $(C''_{\ell+1}, S''_{\ell+1}, B''_{\ell+1} = C''_{\ell+1})$, i.e., $\rho$ is moved to the $S$-set.
Since $\rho_{\ell+1} \in C'_{\ell+1}$, we either have $\rho_{\ell+1} \in B_{\ell+1}$ or $\rho_{\ell+1} \in S_{\ell+1}$.
Assume that $\rho_{\ell+1} \in S_{\ell+1}$.
Since $\hat{\rho}$ emits infinitely often $\tcacc{0}{0}$, $\hat{\rho}$ must be infinite.
That is, $\transacc(B_{j}, w_{j}) \cap P = \emptyset$ for all $j \geq \ell +1$.
Since $\rho_{j} \in  \transscc(B_{j}, w_{j}) \cap P$ for all $j \geq \ell + 1$, by definition, we have $\transacc(B_{j}, w_{j}) = \emptyset$ for all $j \geq \ell + 1$. 
That is, $\rho$ must not visit accepting transitions any more after $j \geq \ell+1$;
otherwise $\hat{\rho}$ will be finite if there is some $j$ such that $\transacc(B_{j}, w_{j}) \neq \emptyset$.
If $\rho_{\ell+1} \in B_{\ell+1}$, we know that there must exist an integer $\ell' \geq \ell$ such that $\transscc(B_{\ell'}, w_{\ell'}) = \emptyset$.
That is, we have either $\rho_{\ell' + 1} \notin P$ or $\rho_{\ell' + 1} \in P$ but $\rho_{\ell' + 1}$ and $\rho_{\ell' }$ are not in the same SCC.
In the latter case, we treat $\rho$ as died out and there will be a new run in $C$.
Since $\rho$ will eventually stay in an SCC forever, it is easy to see that $\rho$ will either be in $B$ or leave $P$.
Therefore we have proved that all runs of $\aut$ over $w$ that enter $P$ will either leave at some time or not be accepting.

Now, assume that $w$ is not accepted within $P$.
Our goal is to prove that there exists an infinite sequence of macrostates $\hat{\rho} = (C_0, S_0, B_0) \cdots $ over $w$ that emits infinitely often the color $\tcacc{0}{0}$.
Since every run $\rho$ of $\aut$ over $w$ will either leave $P$ or become safe, we can construct such an infinite sequence $\hat{\rho}$.
First, we need $\hat{\rho}$ to be infinite and we only need to be careful about the condition $\transacc(S, a) \neq \emptyset$.
All runs in $S$ can be seen as coming from $B$ (including $(C'', S'', C'')$ as it still needs to first compute $B'$).
We only need to resolve the nondeterministic choices when constructing $\hat{\rho}$.
If $B$ is empty all the time, we are done.
Otherwise let $k$ be the smallest integer when $B \neq \emptyset$.
That is, the current macrostate is $(C_k, S_k, B_k)$.
Since all runs in $B$ are deterministic, we can do standard construction (by following the successor $(C', S', B')$) until either we reach a point where all runs in the $B$-set die out or become safe.
If $B$ becomes empty, we still follow the successor $(C' ,S', B')$ and the construction will emit $\tcacc{0}{0}$.
It can happen that all runs in $B$ become safe since the number of runs in $B$ is finite and they will be safe eventually by assumption.
In such a case, it is easy to see that $\transacc(B, a) \cap \transscc(B, a) = \emptyset$.
Then we follow the successor $(C'', S'', C'')$ this time and emit $\tcacc{0}{0}$.
Since all the runs we move to $S$ are safe, so $\transacc(S, a) \neq \emptyset$ will not be satisfied in future.
In this way, we obtain a macrostate $(C_{k+\ell}, S_{k+\ell}, B_{k+\ell})$ for some $\ell \geq 1$.
We can repeat the above procedure and construct an infinite sequence of macrostates $\hat{\rho}$ over $w$ that emits infinitely often the color $\tcacc{0}{0}$.
\qed
\end{proof}


\lemCorrIDAC*

\begin{proof}
  Let $w$ be an $\omega$-word.
  We need to prove that $w$ is not accepted in $P$ if and only if we receive only finitely many times the color $\tcacc{4}{0}$.
  
  First, we prove that direction from right to left by contraposition.
  By assumption, we have finitely many occurrences of the color $\tcacc{4}{0}$ along the word $w$.
  Suppose that $w$ is accepted in $P$.
  There must exist an accepting run $\rho$ that eventually stays in $P$.
  It is easy to see that $\rho$ is accepted by the reduced deterministic BA $\aut_P$.
  Let $k \geq 0$ be the smallest integer such that $\rho_k \in P$.
  Therefore, we have $\GetSuccOf{\idac_P}(H_j, \rho_j, w_j) = \{(\rho_{j+1}, \alpha_{j+1})\}$ for all $j \geq k$.
  Since $\rho$ will visit infinitely many accepting transitions, we will also see infinitely often the color $\tcacc{4}{0}$.
  This leads to a contradiction to our assumption.
  Thus, $w$ cannot be accepted in $P$.
  
  Second, we prove the other direction also by contraposition.
  By assumption, $w$ is not accepted in $P$.
  Assume that we see infinitely many $\tcacc{4}{0}$ and the sequence of macrostates over $w$ is $\hat{\rho}$.
  Then there must be infinitely many integers $k > 0$ such that $\hat{\rho}_{k}, \hat{\rho}_{k+1} \in P$ and $\hat{\rho}_{k}  \ltr {w_k} \hat{\rho}_{k+1} \in \acc$.
  If $\hat{\rho}_{j} \in P$ for all $j \geq k$, we must have an accepting run in $P$, which contradicts the assumption that $w$ is not accepted in $P$.
  So there must be some integer $\ell > k$ such that $\trans(H_{\ell}, w_{\ell}) \cap P = \emptyset$.
  This indicates that every run starting from $\hat{\rho}_{k}$ is finite.
  Since $P$ is deterministic, it follows that every run over $w$ that enters $P$ is finite, therefore $w$ is not accepted in $P$. Contradiction.
  Thus, we have proved that if $w$ is not accepted in $P$, we only can see finitely many times the color $\tcacc{4}{0}$.
  \qed
  \end{proof}


\thmElevatorBound*
\begin{proof}
  Assume that $Q_D$ is the union of all SCCs of $\aut$ satisfying $\phi_\dac$, $Q_W$ is the union of all SCCs satisfying $\phi_\iwc$ and $Q_N$ is the union of all nonaccepting SCCs;
  moreover 
  $Q_D \cap Q_W = \emptyset, Q_N \cap Q_W = \emptyset$ and $Q_N \cap Q_D = \emptyset$. Since $\aut$ is elevator, $Q_D \cup Q_W \cup Q_N$ is the set of all states in $\aut$ and $Q_D \cup Q_W$ is the union of all partition blocks of $\aut$.
  From \cref{thm:correctness}, \cref{lem:correctness-iwcs}, and \cref{lem:dac-correctness}
  we have that $\langof{\modcompl(\dac_{Q_D}, \iwc_{Q_W},\aut)} = \Sigma^\omega\setminus\langof{\aut}$.
  We now compute the number of states of $\modcompl(\dac_{Q_D}, \iwc_{Q_W},\aut)$.
  For a state $q \in Q_D$ there are 4 possibilities of distributing $q$ 
  within $\onetypeof{\dac_{Q_D}}$: 
  \begin{inparaenum}[(i)]
    \item $q\notin C \cup S$,
    \item $q \in C$,
    \item $q\in C \cap B$,
    \item $q\in S$.
  \end{inparaenum}
  For a state $q\in Q_W$ there are 3 possibilities of distributing $q$
  within $\onetypeof{\iwc_{Q_W}}$:
  \begin{inparaenum}[(i)]
    \item $q\notin C$,
    \item $q\in C$,
    \item $q \in C \cap B$.
  \end{inparaenum} 
  Lastly, for a state $q\in Q_N$ there are 2 possibilities of distributing $q$ within the reachable states $H$: $q \in H$ or $q\notin H$. 
  Therefore, the number of macrostates is given as $4^{|Q_D|}\cdot 3^{|Q_W|} \cdot 2^{|Q_N|}\in \bigO(4^n)$.
\qed
\end{proof}

\subsection{Proofs of \cref{sec:optimizations}}

\thmPostp*

\begin{proof}
  From \cref{claim-intersection} and \cref{thm:correctness} we have that 
  $$
    \bigcap_{i=1}^n \langof{\modcompl(\alg^i_{P_i}, \aut_{P_i})} = \Sigma^\omega\setminus\langof{\aut}.
  $$
  Since reduction $\reduce$ preserves the language, we have $\langof{\modcompl(\alg^i_{P_i}, \aut_{P_i})} = 
  \lang\left(\reduce\left(\modcompl(\alg^i_{P_i}, \aut_{P_i})\right)\right)$ for each $i$, which concludes the proof.
  \qed
\end{proof}

\section{Additional Plots from the Experiments}
\label{app:experimentsAdditionalPlots}

In this section we present more plots about the outcomes of the experiments.

\begin{figure}[t]
  \resizebox{\linewidth}{!}{
    \includegraphics{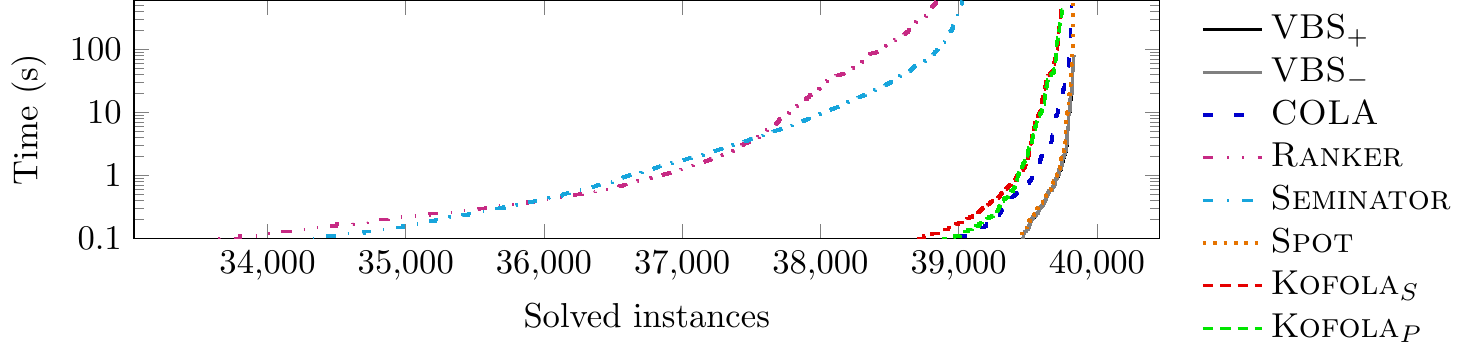}
  }
  \caption{Cactus plot showing the number of instances solved by each tool within the time on the y axis.}
  \label{fig:experimentsCactusPlottime}
\end{figure}
\begin{figure}[t]
  \resizebox{\linewidth}{!}{
    \includegraphics{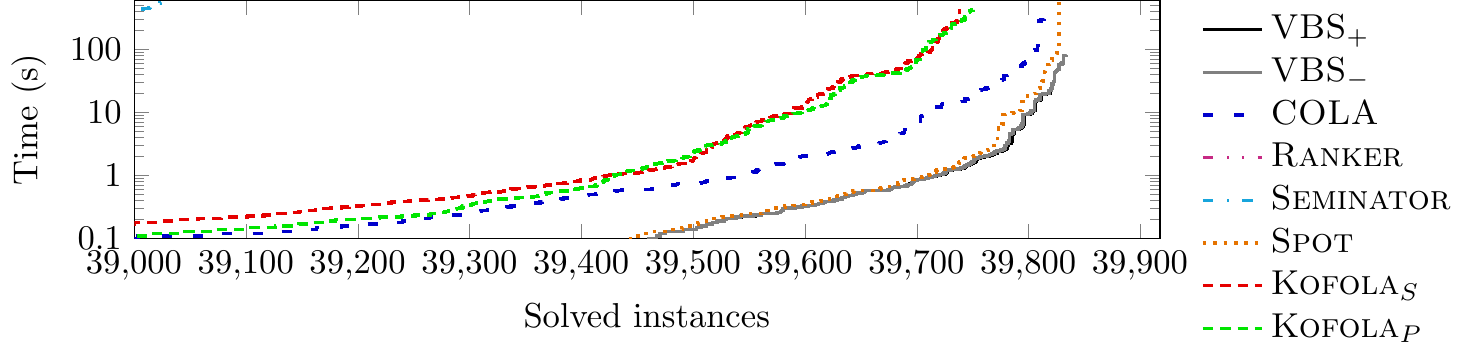}
  }
  \caption{Particular of the cactus plot in \cref{fig:experimentsCactusPlottime}.}
  \label{fig:experimentsCactusPlottimeZoom}
\end{figure}
In \cref{fig:experimentsCactusPlottime} we provide a cactus plot presenting for each tool, including the virtual best solvers, the number of benchmarks (on the x axis) successfully complemented within the time given on the y axis; 
the more the plot is near the right border, the better the tool behaves.
\cref{fig:experimentsCactusPlottimeZoom} provides a clearer view of the part of the plot in \cref{fig:experimentsCactusPlottime} above 39,000 states.
As we can see from the plots, \spot is the clear winner when considering the time needed to complement the input TBA, since its plot is almost superimposed to the one of both \vbs; 
this confirms the high quality and maturity of \spot and the several techniques it implements to manage at the best \buchi automata operations.
\kofolaidsc is slightly better than \kofolaids and very close to \cola on the automata requiring short time to be complemented;
then both versions of \kofola behave similarly with \cola being a bit faster in producing larger automata.

\end{document}